\spnewtheorem{defi}{Definition}{\bfseries}{\rmfamily}
\spnewtheorem{redrule}{Reduction Rule}{\bfseries}{\rmfamily}
\spnewtheorem{obs}{Observation}{\itshape}{\rmfamily}
\spnewtheorem{cl}{Claim}{\bfseries}{\rmfamily}
\spnewtheorem{coro}{Corollary}{\bfseries}{\rmfamily}
\spnewtheorem{transf}{Transformation}{\bfseries}{\rmfamily}
\spnewtheorem*{claimproof}{Proof}{\itshape}{\rmfamily}
\begin{document}

\pagestyle{headings}  

\title{Kernelization Algorithms for Packing Problems Allowing Overlaps (Extended Version)}

\author{Henning Fernau\inst{1}, Alejandro L\'opez-Ortiz\inst{2}, and Jazm\'in Romero\inst{2}}
\institute{FB 4-Abteilung Informatikwissenschaften, Universit\"{a}t Trier, Germany.
\and
David R.\ Cheriton School of Computer Science, University of Waterloo, Canada.}
\maketitle


\begin{abstract}
We consider  the problem of discovering overlapping communities in networks which we model as 
generalizations of the Set and Graph Packing problems with overlap.
As usual for Set Packing problems we seek a collection $\mathcal{S}' \subseteq \mathcal{S}$  
consisting of at least $k$ sets subject to certain disjointness restrictions. 
In the $r$-Set Packing with $t$-Membership, each element of $\mathcal{U}$ belongs to at most $t$ sets of $\mathcal{S'}$ while in $r$-Set Packing with $t$-Overlap each pair of sets in $\mathcal{S'}$ overlaps in at most $t$ elements. For both problems, each set of $\mathcal{S}$ has at most $r$ elements.

Similarly, both of our graph packing problems seek a collection $\mathcal{K}$ of at least $k$ subgraphs in a graph $G$  each isomorphic to a graph $H \in \mathcal{H}$.
In $\mathcal{H}$-Packing with $t$-Membership, each vertex of $G$ belongs to at most $t$ subgraphs of $\mathcal{K}$ while in $\mathcal{H}$-Packing with $t$-Overlap each pair of subgraphs in $\mathcal{K}$ overlaps in at most $t$ vertices. For both problems, each member of $\mathcal{H}$ has at most $r$ vertices and $m$ edges, where  $t$, $r$, and $m$ are constants.

Here, we show NP-Completeness results for all of our packing problems. Furthermore, we give a dichotomy result for the $\mathcal{H}$-Packing with $t$-Membership problem analogous to the Kirkpatrick and Hell~\cite{Kirk78}. Given this intractability, we reduce the $r$-Set Packing with $t$-Membership to a problem kernel with $O((r+1)^r k^{r})$ elements while we achieve a kernel with $O(r^r k^{r-t-1})$ elements for the $r$-Set Packing with $t$-Overlap. In addition, we reduce the $\mathcal{H}$-Packing with $t$-Membership and its edge version to problem kernels with $O((r+1)^r k^{r})$ and $O((m+1)^{m} k^{{m}})$ vertices, respectively.  On the other hand, we achieve kernels with $O(r^r k^{r-t-1})$ and $O(m^{m} k^{m-t-1})$ vertices for the $\mathcal{H}$-Packing with $t$-Overlap and its edge version, respectively. In all cases, $k$ is the input parameter while $t$, $r$, and $m$ are constants.

\end{abstract}

\section{Introduction}\label{introductionSection}
Networks are commonly used to model complex systems that arise in real life, for example, social and protein-interaction networks.  A community emerges in a network when two or more entities have common interests, e.g., groups of people or related proteins. Naturally, a given person can have more than one social circle, and a protein can belong to  more than one protein complex. Therefore,  communities can share members \cite{Palla05}. The way a community is modeled is highly dependent on the specific application being modeled by the network. One flexible approach is the use of a family of graphs where each graph would be a community model. 

The $\mathcal{H}$-Packing with $t$-Overlap problem was introduced in \cite{Packing14} to capture the problem of discovering overlapping communities. The goal is to find at least $k$ subgraphs (the communities) in a graph $G$ (the network) where each subgraph is isomorphic to a member of a family of graphs $\mathcal{H}$ (the community models) and each pair of subgraphs overlaps in at most $t$ vertices (the shared members). Such type of overlap can be found in certain clustering algorithms as, e.g., in \cite{BanKhu2001}.

In this work we also consider the $\mathcal{H}$-Packing with $t$-Membership problem which bounds the number of communities that a member of a network can belong to. The goal is to find in $G$ at least $k$ isomorphic subgraphs to a member of $\mathcal{H}$ such that each vertex in $V(G)$ belongs to at most $t$ isomorphic subgraphs. This type of overlap was also previously studied, for instance, in \cite{Fellows09} in the context of graph editing. 
Both of our Graph Packing problems generalize the $H$-Packing problem which consists in finding within $G$ at least $k$ vertex-disjoint subgraphs isomorphic to a graph $H$. 

Similarly, we also consider overlap for the Set Packing problem. Given a collection of sets $\mathcal{S}$ each of size at most $r$ drawn from a universe $\mathcal{U}$, the $r$-Set Packing with $t$-Overlap problem seeks for at least $k$ sets from $\mathcal{S}$ such that each pair of selected sets overlaps in at most $t$ elements. In contrast, in an $r$-Set Packing with $t$-Membership, each member of $\mathcal{U}$ is contained in at most $t$ of the $k$ sets. The pair $\mathcal{S}$ and $\mathcal{U}$ can be treated as an hypergraph, where the vertices are the members of $\mathcal{U}$ and the hyper-edges are the members of $\mathcal{S}$. Thus, an $r$-Set Packing with $t$-Overlap can be seen as $k$ hyper-edges that pairwise intersect in no more than $t$ vertices while an $r$-Set Packing with $t$-Membership can be interpreted as $k$ hyper-edges where every vertex is contained in at most $t$ of them. We believe that these generalizations are interesting on its own. 

Some of our generalized problems are NP-complete. This follows immediately from the NP-completeness of the classical  $H$-Packing and Set Packing problems,
which is a special case of our problem setting. Our goal is to design \emph{kernelization algorithms}; that is, algorithms that in polynomial time reduce any instance to a size bounded by $f(k)$ (a \emph{problem kernel}), where $f$ is some arbitrary computable function. After that a brute-force search on the kernel gives a solution in  $g(k) n^{O(1)}$ running time, where $g$ is some computable function. An algorithm that runs in that time is a \emph{fixed-parameter algorithm},
giving rise to the complexity class FPT. For all our problems, we consider $k$ as the parameter, while $t$ and $r$ are constants.

\paragraph{Related Results.}
An FPT-algorithm for the $\mathcal{H}$-Packing with $t$-Overlap problem was developed in \cite{BstAlgorithm14}. 
The running time is $O(r^{rk}k^{(r-t-1)k+2}n^r)$, where $r=|V(H)|$ for $H \in \mathcal{H}$, $|\mathcal{H}|=1$, and $0 \leq t < r$. A $2(rk-r)$ kernel when $K_r \in \mathcal{H}$, $|\mathcal{H}|=1$, and $t=r-2$ is given in \cite{Packing14}. 

The smallest kernel for the $H$-Packing problem has size $O(k^{r-1})$ where $H$ is an arbitrary graph and $r=|V(H)|$ \cite{Moser09}. More kernels results when $H$ is a prescribed graph can be found in \cite{Chen12,Fellows04a,Fernau09,JanMar2014,Prieto06,Wang10}. 

There is an $O(r^{rk}k^{(r-t-1)k+2}n^r)$ algorithm for the $r$-Set Packing with $t$-Overlap~\cite{BstAlgorithm14} while the Set Packing (element-disjoint) has an $O(r^r k^{r-1})$ kernel \cite{Faisal10}, also confer to~\cite{DelMar2012}.

\paragraph{Summary of Results.}
For our Set Packing generalizations, each member in $\mathcal{S}$ has at most $r$ elements while for our Graph Packing problems, each member of $\mathcal{H}$ has at most $r$ vertices and $m$ edges. For the problems with $t$-Membership, $t \geq 1$ while for $t$-Overlap problems, $0 \leq t \leq r-1$. In any case, $t$, $r$, and $m$ are constants. The induced versions of our graph problems seek for induced isomorphic subgraphs in $G$, while the edge versions focus the overlap on edges instead of vertices.

In 
Section \ref{complexityResults}, we show that the $\mathcal{H}$-Packing with $t$-Membership problem is $NP$-Complete for all values of $t \geq 1$ when $\mathcal{H}=\{H\}$ and $H$ is an arbitrary connected graph with at least three vertices, but
polynomial-time solvable for smaller graphs.
Hence, we obtain a dichotomy result for the $\{H\}$-Packing with $t$-Membership which is analogous to the one of Kirkpatrick and Hell~\cite{Kirk78}. In addition, we show NP-completeness results for the induced and edge versions of $\mathcal{H}$-Packing with $t$-Membership and for $r$-Set Packing with $t$-Membership. Moreover, we prove that for any $t \geq 0$, there always exists a connected graph $H_t$ such that the $\mathcal{H}$-Packing problem with $t$-Overlap is NP-Complete where $\mathcal{H}=\{H_t\}$.

In Section \ref{membershipSection}, we give a polynomial parameter transformation (PPT) from the $r$-Set Packing with $t$-Membership problem to an instance of the $(r+1)$-Set Packing problem. With this transformation, we reduce $r$-Set Packing with $t$-Membership to a problem kernel with $O((r+1)^r k^{r})$ elements. In addition, we obtain a kernel with $O((r+1)^r k^{r})$ vertices for $\mathcal{H}$-Packing with $t$-Membership and for its induced version by reducing these problems to $r$-Set Packing with $t$-Membership. 
In a similar way, we obtain a kernel with $O((m+1)^{m} k^ {m})$ vertices for the edge version. PPTs are commonly used to show kernelization lower bounds \cite{Bodlaender08}. To our knowledge, this is the first time that PPTs are used to obtain problem kernels.

Inspired by the ideas in \cite{Fellows04b,Moser09} in Section \ref{pairwiseSection}, we give a kernelization algorithm that reduces $r$-Set Packing with $t$-Overlap to a kernel with $O(r^r k^{r-t-1})$ elements. By transforming an instance of $\mathcal{H}$-Packing with $t$-Overlap to an instance of $r$-Set Packing with $t$-Overlap, we achieve a kernel with $O(r^r k^{r-t-1})$ vertices for the $\mathcal{H}$-Packing with $t$-Overlap problem and for its induced version as well. With similar ideas, we obtain a kernel with $O(m^{m} k^ {m-t-1})$ vertices for the edge version. We improve this kernel to $O(r^r k^{r-t-1})$, where $\mathcal{H}$ is the family of cliques.

\section{Terminology}\label{terminologySection}
Graphs are undirected and simple unless specified otherwise. For a graph $G$, $V(G)$ and $E(G)$ denote its sets of vertices and edges, respectively. We use the letter $n$ to denote the size of $V(G)$. The subgraph induced in $G$ by a set of vertices $L$ is represented by $G[L]$. The degree of a vertex $v$ in $G$ will be denoted as $deg_{G}(v)$. A set of vertices $V'$ or edges $E'$ is \emph{contained} in a subgraph $H$, if $V \subseteq V(H)$ or $E \subseteq E(H)$. In addition, $V(E')$ denotes the set of vertices corresponding to the end-points of the edges in $E'$.

Let $\mathcal{S}$ be a collection of sets drawn from $\mathcal{U}$. The size of $\mathcal{U}$ will be denoted by $n$. For $\mathcal{S'} \subseteq \mathcal{S}$, $val(\mathcal{S'})$ denotes the union of all members of $\mathcal{S'}$. For $P \subset \mathcal{U}$, $P$ is \emph{contained} in $S \in \mathcal{S}$, if $P \subseteq S$. Two sets $S$,$S'$$\in \mathcal{S}$ \emph{overlap in} $|S \cap S'|$ elements and they \emph{conflict} if $|S \cap S'| \geq t+1$.

All our graph problems deal with a family of graphs $\mathcal{H}$ where each $H \in \mathcal{H}$ is an arbitrary graph. Here and in the following, let $r(\mathcal{H})=\max\{|V(H)|: H\in\mathcal{H}\}$ denote the order of the biggest graph in $\mathcal{H}$. Observe that in order to have this as a reasonable notion, we always (implicitly) require $\mathcal{H}$ to be a finite set. We simply write $r$ instead of $r(\mathcal{H})$ if $\mathcal{H}$ is clear from the context. Similarly, $m(\mathcal{H})=\max\{|E(H)|: H\in\mathcal{H}\}$, and we write $m$ whenever it is clear from the context. A subgraph of $G$ that is isomorphic to some $H \in \mathcal{H}$ is called an \emph{$\mathcal{H}$-subgraph}. We denote as $\mathcal{H}_G$ the set of all $\mathcal{H}$-subgraphs in $G$.

We consider packing problems that allow overlap in a solution in two different ways. In the \emph{$t$-Membership problems}, the \emph{overlap condition} bounds the number of times an \emph{object} is contained in a \emph{group} of a solution. In the \emph{$t$-Overlap problems},  that condition bounds the number of \emph{objects} that a pair of \emph{groups} of a solution shares.  Depending on the type of problem (packing sets or packing graphs) an object is an element, a vertex or an edge, and a group is a set or a subgraph.

We next introduce the formal definitions of all the $t$-Membership problems considered in this work. Let $r,t\geq 1$ be fixed in the following problem definitions, that in actual fact defines a whole family of problems.

\medskip
\begin{center}
\fbox{
\parbox{13.5cm}{
\textbf{The $r$-Set Packing with $t$-Membership problem}
    
    \noindent \emph{Input}: A collection $\mathcal{S}$ of distinct sets, each of size at most $r$, drawn from a universe $\mathcal{U}$ of size $n$, and a non-negative integer $k$.

    \noindent \emph{Parameter}: $k$

    \noindent \emph{Question}:  Does $\mathcal{S}$ contain a \emph{$(k,r,t)$-set membership}, i.e., at least $k$ sets $\mathcal{K}=\{S_1,\dots , S_k\}$ where each element of $\mathcal{U}$ is in at most $t$ sets of $\mathcal{K}$
    ?}}
\end{center}

\medskip
\begin{center}
\fbox{
\parbox{13.5cm}{
\textbf{The $\mathcal{H}$-Packing with $t$-Membership problem}.
    
    \noindent \emph{Input}: A graph $G$, and a non-negative integer $k$.

    \noindent \emph{Parameter}: $k$

    \noindent \emph{Question}: Does $G$ contain a $(k,r,t)$-$\mathcal{H}$-\emph{membership}, i.e.,  a set of at least $k$ subgraphs $\mathcal{K}=\{H_1, \dots ,H_k\}$ where each $H_i$ is isomorphic to some graph $H \in \mathcal{H}$, $V(H_i) \neq V(H_j)$ for $i\neq j$, and every vertex in $V(G)$ is contained in at most~$t$ 
    subgraphs of~$\mathcal{K}$
    ?}}
\end{center}

\medskip
\begin{center}
\fbox{
\parbox{13.5cm}{
\textbf{The Induced-$\mathcal{H}$-Packing with $t$-Membership problem}.
    
    \noindent \emph{Input}: A graph $G$, and a non-negative integer $k$.

    \noindent \emph{Parameter}: $k$

    \noindent \emph{Question}: Does $G$ contain a $(k,r,t)$-\emph{induced}-$\mathcal{H}$-\emph{membership}, i.e.,  a set of at least $k$ induced subgraphs $\mathcal{K}=\{H_1, \dots ,H_k\}$ where each $H_i$ is isomorphic to some graph $H \in \mathcal{H}$, $V(H_i) \neq V(H_j)$ for $i\neq j$, and every vertex in $V(G)$ is contained in at most~$t$ 
    subgraphs of~$\mathcal{K}$
    ?}}
\end{center}

\medskip
\begin{center}
\fbox{
\parbox{13.5cm}{
\textbf{The $\mathcal{H}$-Packing with $t$-Edge Membership problem}
    
    \noindent \emph{Input}: A graph $G$, and a non-negative integer $k$.

    \noindent \emph{Parameter}: $k$

    \noindent \emph{Question}: Does $G$ contain a $(k,m,t)$-edge-$\mathcal{H}$-membership, i.e.,  a set of at least $k$ subgraphs $\mathcal{K}=\{H_1, \dots ,H_k\}$ where each $H_i$ is isomorphic to some graph $H \in \mathcal{H}$, $E(H_i) \neq E(H_j)$ for $i\neq j$, and every edge in $E(G)$ belongs to at most~$t$ 
    subgraphs of~$\mathcal{K}$
    ?
}}
\end{center}



\medskip


Our $t$-Overlap problems are defined next. In these definitions, $r \geq 1$ and  $t\geq 0$ are fixed.

\medskip

\begin{center}
\fbox{
\parbox{13.5cm}{
\textbf{The $r$-Set Packing with $t$-Overlap problem}
    
    \noindent \emph{Instance}: A collection $\mathcal{S}$ of distinct sets, each of size at most $r$, drawn from a universe $\mathcal{U}$ of size $n$, and a non-negative integer $k$.

    \noindent \emph{Parameter}: $k$.

    \noindent \emph{Question}: Does $\mathcal{S}$ contain a $(k,r,t)$-set packing, i.e., a collection of at least $k$ sets $\mathcal{K}=\{S_1,\dots , S_k\}$ where $|S_i \cap S_j| \leq t$, for any pair $S_i,S_j$ with $i\neq j$?
}
}
\end{center}

\medskip
\begin{center}
\fbox{
\parbox{13.5cm}{
\textbf{The $\mathcal{H}$-Packing with $t$-Overlap problem}
    
    \noindent \emph{Input}: A graph $G$, and a non-negative integer $k$.

    \noindent \emph{Parameter}: $k$

    \noindent \emph{Question}: Does $G$ contain a $(k,r,t)$-$\mathcal{H}$-packing, i.e., a set of at least $k$ subgraphs $\mathcal{K}=\{H_1, \dots ,H_k\}$ where each $H_i$ is isomorphic to some graph from  $\mathcal{H}$ and $|V(H_i) \cap V(H_j)| \leq t$ for any pair $H_i,H_j$ with $i\neq j$?
}}
\end{center}

\medskip
\begin{center}
\fbox{
\parbox{13.5cm}{
\textbf{The Induced-$\mathcal{H}$-Packing with $t$-Overlap problem}
    
    \noindent \emph{Input}: A graph $G$, and a non-negative integer $k$.

    \noindent \emph{Parameter}: $k$

    \noindent \emph{Question}: Does $G$ contain a $(k,r,t)$-\emph{induced}-$\mathcal{H}$-packing, i.e., a set of at least $k$ subgraphs $\mathcal{K}=\{H_1, \dots ,H_k\}$ where each $H_i$ is isomorphic to some graph from  $\mathcal{H}$ and $|V(H_i) \cap V(H_j)| \leq t$ for any pair $H_i,H_j$ with $i\neq j$?
}}
\end{center}

\medskip
\begin{center}
\fbox{
\parbox{13.5cm}{
\textbf{The $\mathcal{H}$-Packing with $t$-Edge-Overlap problem}
    
    \noindent \emph{Input}: A graph $G$, and a non-negative integer $k$.

    \noindent \emph{Parameter}: $k$

    \noindent \emph{Question}: Does $G$ contain a $(k,m,t)$-edge-$\mathcal{H}$-packing, i.e., a set of at least $k$ subgraphs $\mathcal{K}=\{H_1, \dots ,H_k\}$ where each $H_i$ is isomorphic to a graph $H \in \mathcal{H}$, and $|E(H_i) \cap E(H_j)| \leq t$ for any pair $H_i,H_j$ with $i\neq j$?
}}
\end{center}
\medskip


Notice that when $t=1$ and $t=0$ for the $t$-Membership and $t$-Overlap problems, respectively, we are back to the classical $r$-Set Packing, $\mathcal{H}$-Packing and Edge-$\mathcal{H}$-Packing problems.

Sometimes the size $l$ will be dropped from the notation $(l,r,t)$-set membership, $(l,r,t)$-$\mathcal{H}$-membership, $(l,r,t)$-set packing and $(l,r,t)$-$\mathcal{H}$-packing and be simply denoted as  $(r,t)$-set membership, $(r,t)$-$\mathcal{H}$-membership, $(r,t)$-set packing and $(r,t)$-$\mathcal{H}$-packing, respectively.

An $(r,t)$-$\mathcal{H}$-membership $P$ of $G$ is \emph{maximal} if every $\mathcal{H}$-subgraph of $G$ that is not in $P$ has at least one vertex $v$ contained in $t$ $\mathcal{H}$-subgraphs of $P$. Similarly, an $(r,t)$-set packing $\mathcal{M}$ of $\mathcal{S'} \subseteq \mathcal{S}$ is \emph{maximal} if any set of $\mathcal{S'}$ that is not already in $\mathcal{M}$ conflicts with some set in $\mathcal{M}$. That is, for each set $S \in \mathcal{S'}$ where $S \notin \mathcal{M}$, $|S \cap S'| \geq t+1$ for some $S' \in \mathcal{M}$.

\section{Hardness of Packing Problems Allowing Overlaps}\label{complexityResults}
Strictly speaking, we have introduced whole families of problems. Each $\mathcal{H}$ and each $t$ specify a particular 
 $\mathcal{H}$-Packing with $t$-Overlap respectively $t$-Membership problem. Notice that the constant $r$ we referred to above is rather a property of $\mathcal{H}$. 
Generally speaking, all these problems are in NP, and some of them are NP-complete. This follows since every instance of the $\mathcal{H}$-Packing problem, which is NP-complete \cite{Kirk78}, is mapped to an instance of these problems by setting $t=0$ and $t=1$, respectively. Similar reductions to the Set Packing problem apply for the $r$-Set Packing with $t$-Overlap and $t$-Membership problems. Notice that not every $\mathcal{H}$-Packing problem is NP-hard, a famous exception being 
 the $\{K_2\}$-Packing problem, also known as Maximum Matching.
 Notice that for specific sets $\mathcal{H}$ (and hence for specific $r$) and also for specific $t$, it is wide open whether the corresponding packing problems with $t$-Membership or with $t$-Overlap are NP-hard. 

Let us present first one concrete $\mathcal{H}$-Packing with $t$-Membership problem that is hard for all possible values of $t$.

\begin{theorem}\label{thm-P_3-membership}
For all $t\geq 1$, the $\{P_3\}$-Packing with $t$-Membership problem is NP-complete.
\end{theorem}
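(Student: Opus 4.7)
Membership in NP is immediate: given a candidate collection $\mathcal{K}$, we verify in polynomial time that each member is a $P_3$-subgraph of $G$, that the vertex sets are pairwise distinct, and that every vertex of $G$ lies in at most $t$ members. For NP-hardness the plan is to reduce from the classical vertex-disjoint $\{P_3\}$-Packing problem, which is NP-complete by Kirkpatrick and Hell~\cite{Kirk78}. The case $t=1$ coincides with vertex-disjoint $\{P_3\}$-Packing, so the crucial case is $t\geq 2$.

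Given an instance $(G,k)$ of vertex-disjoint $\{P_3\}$-Packing, I would build $G'$ by attaching, to each $v\in V(G)$, a bundle of $t-1$ private ``consumption gadgets''. Each gadget is designed so that it contributes exactly one $P_3$ through $v$ that is forced into every maximum $(k',3,t)$-membership of $G'$. The simplest candidate is a pendant path: introduce new vertices $a_v^i,b_v^i$ with edges $va_v^i$ and $a_v^ib_v^i$ for $i=1,\ldots,t-1$, giving the forced $P_3$ $\{v,a_v^i,b_v^i\}$. Setting $k'=k+(t-1)|V(G)|$, the forward direction is immediate: any vertex-disjoint $\{P_3\}$-packing of $G$ extends to a $(k',3,t)$-membership of $G'$ by appending all gadget $P_3$'s, since each $v\in V(G)$ then lies in at most $t-1$ gadget triples plus at most one packing triple.

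For the reverse direction the argument would proceed in two exchange steps. First, since each $b_v^i$ has degree one, it belongs to no $P_3$ of $G'$ other than the gadget $P_3$ through $v$; a local swap then lets us assume every gadget $P_3$ lies in $\mathcal{M}$ without decreasing $|\mathcal{M}|$. Second, one must eliminate ``mixed'' $P_3$'s of the form $\{a_v^i,v,y\}$ with $y\in N_G(v)$, which a bare pendant admits: to block these, I would strengthen each gadget by attaching additional private leaves to the connector $a_v^i$, so that in every maximum membership $a_v^i$ is saturated by leaf-to-leaf triples and therefore cannot participate in any mixed $P_3$. After both swaps every non-gadget triple of $\mathcal{M}$ lies inside $V(G)$, and the $t$-membership bound forces each vertex of $G$ to appear in at most one such triple, so they form a vertex-disjoint $\{P_3\}$-packing of $G$ of size at least $k'-(t-1)|V(G)|=k$.

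The main obstacle is therefore the gadget design: a small example on $P_4$ with $t=2$ already shows that the bare pendant gadget admits memberships strictly larger than $k+(t-1)|V(G)|$ via mixed triples, so the count $k'$ cannot be read off the pendant gadget alone. The strengthened gadget must simultaneously (i) force exactly one $v$-using $P_3$ per gadget into every maximum solution, (ii) saturate the connector $a_v^i$ with gadget-internal $P_3$'s so that no mixed $P_3$ is available, and (iii) not introduce bonus $P_3$'s that would break the forward direction's count. Verifying that these three requirements are mutually compatible, together with the bookkeeping of the parameter $k'$ under the strengthened gadget, is the technical heart of the argument.
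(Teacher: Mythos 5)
There is a genuine gap, and you have in fact named it yourself: the entire reduction hinges on a gadget that (i) forces exactly one $v$-using $P_3$ per gadget into every maximum solution, (ii) blocks all mixed triples through the connector, and (iii) creates no bonus triples --- and you neither construct such a gadget nor verify that these three requirements can be met simultaneously. They are in genuine tension. If you saturate $a_v^i$ with $t$ gadget-internal leaf-to-leaf triples to block mixed triples $\{y,v,a_v^i\}$, then $a_v^i$ has no free slot left for the forced triple $\{v,a_v^i,b_v^i\}$, and the gadget consumes no membership slot at $v$ at all; if you leave a slot at $a_v^i$ for the forced triple, an adversarial maximum solution can spend that slot on a mixed triple instead, reviving exactly the overcounting you observed on $P_4$ with $t=2$. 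Balancing this (e.g., giving $a_v^i$ exactly $2(t-1)$ extra leaves so that $t-1$ leaf pairs plus the $v$-triple saturate it) changes the target value $k'$, reopens the question of whether the leaf-pair triples themselves are forced (a solution may instead pair $b_v^i$ or $v$ with a leaf), and requires a terminating exchange argument normalizing an arbitrary maximum solution into the intended shape. None of this bookkeeping is done, so the reverse direction of the reduction is not established.

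For comparison, the paper avoids designing a single gadget that consumes $t-1$ slots at once by inducting on $t$: it reduces $\{P_3\}$-Packing with $t$-Membership to $\{P_3\}$-Packing with $(t+1)$-Membership, so each step only needs to consume \emph{one} extra slot per original vertex. The gadget there is shared by a block of $t+1$ original vertices $v_i,\dots,v_{i+t}$, all joined to a new vertex $u$ that has a single pendant neighbor $u'$; the $t+1$ triples $v_{i+j}uu'$ exactly saturate $u$, and a short occupancy-balancing exchange argument (first equalizing the loads of $u$ and $u'$, then ensuring every original vertex lies on exactly one gadget triple) shows that any solution of size $n+k$ in $G'$ decomposes into $n$ gadget triples plus a $(3,t)$-membership of size $k$ in $G$. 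If you want to rescue your one-shot construction, you would need to carry out the analogue of that exchange argument for your strengthened gadget and recompute $k'$ accordingly; as written, the proof is a plan with its central component missing.
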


\begin{proof}
$\{P_3\}$-Packing with $t$-Membership in NP is easy to see for any fixed $t$. Moreover, the assertion is well-known for $t=1$.

We are now showing a reduction from  $\{P_3\}$-Packing with $t$-Membership to $\{P_3\}$-Packing with $(t+1)$-Membership, which proves the claim by induction.

Let $(G,k)$ be an instance of the $\{P_3\}$-Packing with $t$-Membership problem, where $G=(V,E)$, with $V=\{v_0,\dots,v_{n-1}\}$.
Without loss of generality, assume that $n$ is divisible by $t+1$; otherwise, simply add some isolated vertices to obtain an equivalent instance.
Let $U=\{u_0,\dots,u_{(2n)/(t+1)-1}\}$ be a set of new vertices, i.e., $V\cap U=\emptyset$.
Create a graph $G'=(V',E')$ as follows:
\begin{itemize}
\item $V'=V\cup U$;
\item $E'=E\cup \{v_{i+j}u_{2i/(t+1)},u_{2i/(t+1)}u_{2i/(t+1)+1}\mid 0\leq i< n, {{i\pmod {t+1}}\equiv 0},0\leq j\leq t\}$.
\end{itemize}

(See Figure \ref{construction})

We now prove the following claim: $G$ has a $(3,t)$-$\{P_3\}$-membership of size at least $k$ if and only if $G'$ has a $(3,t+1)$-$\{P_3\}$-membership of size at least $n+k$, where $n=|V|$.

If $P$ is a $(3,t)$-$\{P_3\}$-membership set for $G$, then $P'=P\cup X$ is a $(3,t+1)$-$\{P_3\}$-membership for $G'$,
where 
$$X=\{v_{i+j}u_{2i/(t+1)}u_{{2i/(t+1)}+1}\mid 0\leq i< n, {{i\pmod{t+1}}\equiv 0},0\leq j\leq t\}$$
Clearly, $|P'|=|P|+|X|=|P|+n$.

Conversely, let $P'$ be a $(3,t+1)$-$\{P_3\}$-membership for $G'=(V',E')$.
Let us denote by $oc_{P'}:V'\to\{0,\dots,t+1\}$ the \emph{occupancy} of $v$, meaning that
$$oc_{P'}(v)=\{p\in P'\mid v\text{ occurs on path }p\}.$$
Clearly, $oc_{P'}(u_{2i/(t+1)})\geq 
oc_{P'}(u_{{2i/(t+1)}+1})$ for all $0\leq i< n$, ${{i\pmod {t+1}}\equiv 0}$, $0\leq j\leq t$.
If, for some $i$, $oc_{P'}(u_{2i/(t+1)})>
oc_{P'}(u_{{2i/(t+1)}+1})$, then there must be some $p\in P'$ that looks like
$p=v_{i+j}u_{2i/(t+1)}v_{i+\ell}$ or $p=v_{i+j}v_{i+\ell}u_{2i/(t+1)}$ or
$p=v_{i+\ell}v_{i+j}u_{2i/(t+1)}$
with $j<\ell$, 
or like $p=v_{i'+j'}v_{i+\ell}u_{2i/(t+1)}$
for some $i',j'$ such that $i'\neq i$.



In either case, replace $p$ by 
$v_{i+\ell}u_{2i/(t+1)}u_{{2i/(t+1)}+1}$.
Doing this consecutively wherever possible, we end up with a $P'$ whose occupancy function satisfies
$oc_{P'}(u_{2i/(t+1)})=
oc_{P'}(u_{{2i/(t+1)}+1})$ for all $i$;
obviously this new $(3,t+1)$-$\{P_3\}$-membership is not smaller than the original one.

Now, $P'$ contains only two types of paths: those (collected in $P_G$) completely consisting of vertices from $G$, and those (collected in $P_X$) containing  exactly one vertex from $G$ (and two from $U$). We will maintain this invariant in the following modifications.
Also, we will maintain that $|P'|\geq n+k$ and that 
$oc_{P'}(u_{2i/(t+1)})=
oc_{P'}(u_{{2i/(t+1)}+1})$ for all $i$.
We are now describing a loop in which we gradually modify 
$P'$. 

(*) It might be that $P'$ is not maximal  at this stage, meaning that 
more $P_3$'s can be added to $P'$ without destroying the property that
$P'$ forms a $(3,t+1)$-$\{P_3\}$-membership. So, we first greedily add paths to $P'$ to ensure maximality. 
For simplicity, we keep the names $P_X$ and $P_G$ 
for the possibly new sets.

Consider some vertex $v$ with  $oc_{P_G}(v)<t+1$. As $P'$ is maximal, $oc_{P_X}(v)=1$. 
But, if $oc_{P_G}(v)=t+1$, then   $oc_{P_X}(v)=0$, as $P'$ is a $(3,t+1)$-$\{P_3\}$-membership.

If there is no $v\in V$ with $oc_{P_X}(v)=0$, then we quit this loop.
Otherwise, there is some $v\in V$ with $oc_{P_X}(v)=0$.
By maximality of $P'$, this means that $oc_{P_G}(v)=t+1$.
Pick some $p_v\in P_G$ that contains $v$.  

%

Assume that $v=v_{i+j}$ 
for some $i,j$ with ${{i\pmod{t+1}}\equiv 0}$ and $0\leq j\leq t$.
Delete $p_v$ from $P'$ (and hence from 
$P_G$) and replace it by $v_{i+j}u_{2i/(t+1)}u_{{2i/(t+1)}+1}$.
As (previously) $oc_{P_X}(v)=0$,
the replacement was not contained in $P'$ before. 
This new $(3,t+1)$-$\{P_3\}$-membership of $G'$, called again $P'$ for the sake of simplicity, is hence not smaller than the previous one, and we can recurse, re-starting at (*); the new $P'$ decomposes again into $P_G$ and into $P_X$. 

Finally, we  arrive at a $(3,t+1)$-$\{P_3\}$-membership $P'$ of 
$G'$ such that for each vertex $v\in V$, $oc_{P_X}(v)>0$, i.e., $oc_{P_X}(v)=1$. 
As clearly $t+1\geq oc_{P'}(v)=oc_{P_X}(v)+oc_{P_G}(v)$,
this implies that $P_G$ is a $(3,t)$-$\{P_3\}$-membership of $G$.

(See Figure \ref{recursion})

Hence, $P'$  decomposes into a $(3,t)$-$\{P_3\}$-membership $P_G$ of $G$ and a set of paths $P_X$  containing  exactly one vertex from~$G$. If $|P'|\geq n+k$, then $|P_G|=|P'|-|P_X|\geq n+k-|P_X|\geq k$
as required.
\qed
\end{proof}

\begin{figure}[htb]     
     \centerline{{\includegraphics[scale=0.60]{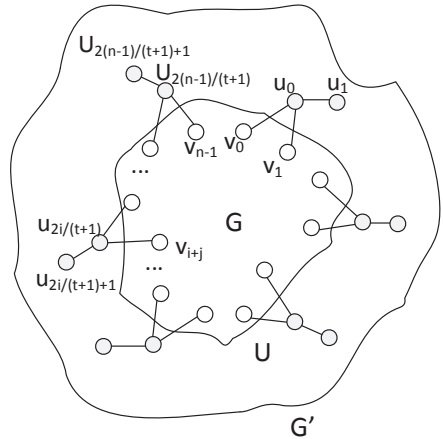}}}
     \caption{An illustration of the construction of $G'$ from $G$ and $U$ in proof of Theorem \ref{thm-P_3-membership}. In this example $t=1$. Edges of $G$ were omitted for clarity. } \label{construction}
\end{figure}

\begin{figure}[htb]     
     \centerline{{\includegraphics[scale=0.60]{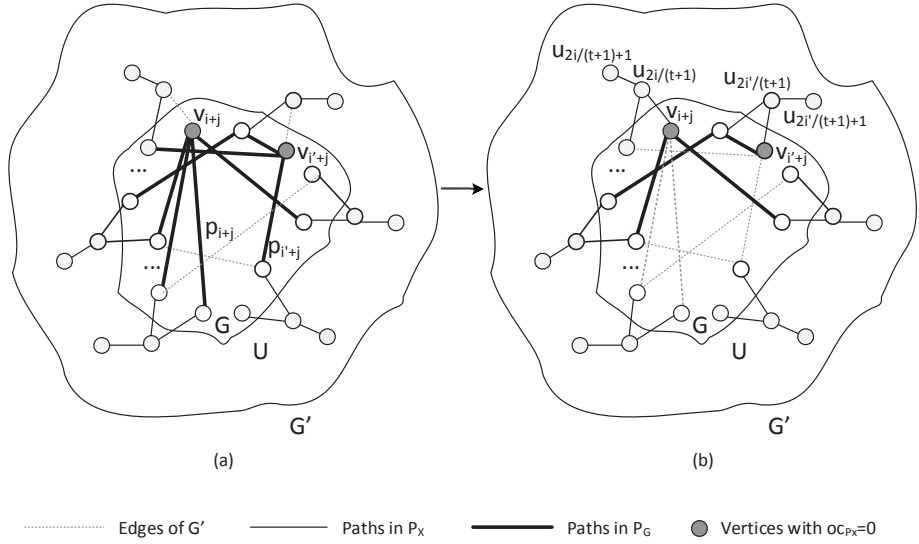}}}
     \caption{A $(3,t+1)$-$\{P_3\}$-membership $P'$ of $G'$ (for $t=1$) in proof of Theorem \ref{thm-P_3-membership}. Paths in $P_X$ and $P_G$ are indicated in black and bold black, respectively. Edges of $G'$ are indicated with dashed gray lines. In this example, there are two vertices $v_{i+j}$ and $v_{i'+j}$ with $oc_{P_X}(v_{i+j})=0$ and $oc_{P_X}(v_{i'+j})=0$. Arbitrarily, the paths $p_{{i+j}}$ and  $p_{_{i'+j}}$ are removed from $P'$ and replaced by $v_{i+j}u_{2i/(t+1)}u_{{2i/(t+1)}+1}$ and $v_{i'+j}u_{2i'/(t+1)}u_{{2i'/(t+1)}+1}$, respectively (see (b)). Observe that $P_{G}$ is now a $(3,t)$-$\{P_3\}$-membership of $G$.} \label{recursion}
\end{figure}

In fact, a similar reduction can be constructed for each $\{H\}$-Packing with $t$-Membership problem for each $H$ containing at least three vertices based on results of Kirkpatrick and Hell~\cite{Kirk78}.








Omitting some technical details, we can hence state:

\begin{theorem}\label{thm-vertex-membership-hardness}
Let $H$ be a connected graph with at least three vertices.
For all $t\geq 1$, the $\{H\}$-Packing with $t$-Membership problem is NP-complete.
\end{theorem}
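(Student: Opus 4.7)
The proof proceeds by induction on $t$, mirroring Theorem \ref{thm-P_3-membership}. The base case $t=1$ coincides with the classical vertex-disjoint $\{H\}$-Packing problem, shown NP-complete by Kirkpatrick and Hell~\cite{Kirk78} for any connected $H$ with $|V(H)|\geq 3$, so only the inductive step requires new work.

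For the inductive step, I would reduce $\{H\}$-Packing with $t$-Membership to $\{H\}$-Packing with $(t{+}1)$-Membership by generalizing the $P_3$ gadget. Fix a vertex $h_0\in V(H)$, and given $(G,k)$ with $|V(G)|=n$ construct $G'$ by attaching to each $v\in V(G)$ a private gadget $X_v$: a fresh copy $H_v^{-}$ of $H-\{h_0\}$, joined to $v$ via the edges incident to $h_0$ in $H$. The subgraph $C_v$ on $\{v\}\cup V(H_v^{-})$ is then isomorphic to $H$ with $v$ playing the role of $h_0$; I call the $C_v$ the \emph{canonical gadget copies}. To enable the rerouting step below, one may further equip each $H_v^{-}$ with extra pendant or twin vertices (the analogues of the pendant $u_{{2i/(t+1)}+1}$ in the $P_3$ proof) that can only participate in $C_v$. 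The reduction asks whether $G'$ admits a $(r,t{+}1)$-$\{H\}$-membership of size at least $n+k$.

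The forward direction is immediate: given any $(r,t)$-$\{H\}$-membership $P$ of $G$, the set $P\cup\{C_v:v\in V(G)\}$ is a $(r,t{+}1)$-$\{H\}$-membership of $G'$ of size $|P|+n$, since each $v$ gains exactly one additional occurrence (in $C_v$) while each gadget vertex has occupancy $1$. The backward direction is the main work: given a $(r,t{+}1)$-$\{H\}$-membership $P'$ of $G'$ of size $\geq n+k$, I would transform it without decreasing its size so that each of its members either lies entirely inside $V(G)$ or equals a canonical $C_v$. Partitioning the resulting $P'$ into $P_G$ (the former) and $P_X$ (the latter), once every $v\in V(G)$ satisfies $oc_{P_X}(v)=1$ one concludes $|P_X|=n$ (as each canonical copy uses exactly one original vertex), whence $|P_G|\geq k$ and $oc_{P_G}(v)\leq (t+1)-1=t$, so $P_G$ is a valid $(r,t)$-$\{H\}$-membership of $G$.

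The transformation has two stages paralleling those of Theorem \ref{thm-P_3-membership}: (i) \emph{rerouting}: any member of $P'$ meeting some gadget $X_v$ in a non-canonical way is replaced by $C_v$, which is available thanks to the anchoring pendants; (ii) \emph{swapping}: after greedily enforcing maximality, whenever some $v$ has $oc_{P_X}(v)=0$, maximality forces $oc_{P_G}(v)=t+1$, and one exchanges one $P_G$-copy containing $v$ for $C_v$. The key technical obstacle, and what the paper alludes to as omitted details, is designing $X_v$ rigidly enough that the only $H$-subgraphs meeting a single gadget usefully are the canonical $C_v$, and no $H$-subgraph of $G'$ straddles two distinct gadgets; this rigidity depends on the structure of $H$ and follows Kirkpatrick and Hell's constructions~\cite{Kirk78}.
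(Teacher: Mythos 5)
Your high-level strategy (induction on $t$, a gadget per original vertex, forward direction by adding canonical copies, backward direction by rerouting and swapping until every $v$ has $oc_{P_X}(v)=1$) is exactly the strategy the paper intends for this theorem, which it states while explicitly omitting the technical details. However, the concrete gadget you propose does not carry the argument, and the step you defer is precisely where the proof lives. A private copy of $H-\{h_0\}$ hung off $v$ is not rigid even for $H=P_3$: if $h_0$ is an endpoint and you attach $a\text{--}b$ to $v$ via $va$, then $wva$ is a non-canonical $P_3$ through the gadget vertex $a$ for every $w\in N_G(v)$, and if $h_0$ is the centre the same problem occurs. Your suggested fix of ``extra pendant or twin vertices that can only participate in $C_v$'' cannot work as stated, since $C_v$ must have exactly $|V(H)|$ vertices, so no additional vertex can be forced into it; and a vertex outside $C_v$ adjacent only to the gadget cannot appear in any $H$-copy at all, so it constrains nothing.

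The rerouting stage (i) then breaks down concretely. In a $(k,r,t{+}1)$-membership the members must have pairwise distinct vertex sets, so $C_v$ can appear at most once; yet a gadget vertex of $X_v$ may lie on up to $t+1$ non-canonical copies, and you cannot send all of them to the single set $C_v$ without either creating duplicates or deleting members and dropping below $n+k$. The proof of Theorem~\ref{thm-P_3-membership} avoids exactly this by using a \emph{shared} hub $u_{2i/(t+1)}$ of capacity $t+1$ serving $t+1$ original vertices, together with a pendant $u_{2i/(t+1)+1}$ whose occupancy is forced to track that of the hub; this is what makes both the rerouting (each displaced path has a distinct target $v_{i+\ell}uu'$) and the counting ($|P_X|=n$ with $|U|=2n/(t+1)$) go through. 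A correct generalization must either reproduce this capacity structure for general $H$ or prove genuine rigidity via an $H$-specific construction in the style of Kirkpatrick and Hell; naming that obstacle, as you do, is not the same as overcoming it, so the inductive step is not yet established.
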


In particular, this applies to packings with complete graphs, with $K_3=C_3$ being the simplest example. Since in this case, the vertex-induced subgraph version coincides with the subgraph version, we can conclude:

\begin{corollary}
For each $r\geq 3$, there exists a connected graph $H$ on $r$ vertices such that, for all $t\geq 1$, the Induced-$\{H\}$-Packing with $t$-Membership problem is NP-complete.
\end{corollary}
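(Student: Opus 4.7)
The plan is to exhibit $H=K_r$ as the witness graph for each $r\geq 3$. Since $K_r$ is connected and has at least three vertices, Theorem~\ref{thm-vertex-membership-hardness} already gives that the (non-induced) $\{K_r\}$-Packing with $t$-Membership problem is NP-complete for every $t\geq 1$. The entire work of the corollary is therefore to transfer this hardness from the subgraph version to the induced-subgraph version.

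The transfer step uses the elementary observation that when $H$ is a complete graph, ``subgraph isomorphic to $H$'' and ``induced subgraph isomorphic to $H$'' are the same notion. More precisely, for any host graph $G$ and any $r$-vertex set $W\subseteq V(G)$, $G[W]$ is isomorphic to $K_r$ if and only if $W$ is the vertex set of some subgraph of $G$ isomorphic to $K_r$: in both cases the condition is that all $\binom{r}{2}$ pairs in $W$ are edges of $G$, since $K_r$ contains every possible edge on its vertices and hence has no ``missing'' edges that could distinguish an induced copy from a non-induced one.

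Consequently, the family $\mathcal{H}_G$ of $\{K_r\}$-subgraphs of $G$ coincides, as a set of vertex subsets, with the family of induced $\{K_r\}$-subgraphs of $G$, so an instance $(G,k)$ is a \textsc{yes}-instance of $\{K_r\}$-Packing with $t$-Membership if and only if it is a \textsc{yes}-instance of Induced-$\{K_r\}$-Packing with $t$-Membership. The identity map is therefore a (trivial) polynomial-time reduction between the two problems, and NP-completeness carries over directly.

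There is really no obstacle in this argument: membership in NP is inherited verbatim from the non-induced version, and hardness follows instantaneously from Theorem~\ref{thm-vertex-membership-hardness} once the cliques-are-always-induced observation is in place. The only thing one has to be mildly careful about is stating the coincidence of the two problems for $H=K_r$ precisely enough that the reduction is literally the identity, rather than invoking a separate construction.
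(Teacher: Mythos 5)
Your proposal is correct and matches the paper's own argument exactly: the paper likewise takes $H=K_r$, observes that for complete graphs the induced-subgraph and subgraph versions coincide, and invokes Theorem~\ref{thm-vertex-membership-hardness}. Nothing further is needed.
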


As also the $r$-Set Packing problem is known to be NP-complete for any $r\geq 3$, we can adapt the construction of Theorem~\ref{thm-P_3-membership} also easily to obtain the following result:

\begin{theorem}
For all $t\geq 1$ and $r\geq 3$, the $r$-Set Packing with $t$-Membership problem is NP-complete.
\end{theorem}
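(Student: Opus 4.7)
The plan is to proceed by induction on $t$, mirroring the structure of Theorem~\ref{thm-P_3-membership}. The base case $t=1$ is precisely the classical $r$-Set Packing problem, which is NP-complete for every $r\geq 3$. For the inductive step, I would give a polynomial reduction from $r$-Set Packing with $t$-Membership to $r$-Set Packing with $(t+1)$-Membership.

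Given an instance $(\mathcal{U},\mathcal{S},k)$ of the former, first pad $\mathcal{U}$ with dummy elements (not contained in any set of $\mathcal{S}$) so that $n:=|\mathcal{U}|$ is a multiple of $t+1$, and partition $\mathcal{U}$ into blocks $B_1,\dots,B_{n/(t+1)}$ of size $t+1$. For each block $B_i$ introduce two fresh elements $x_i,y_i$ and $t+1$ new \emph{gadget sets} $\gamma_{i,u}=\{u,x_i,y_i\}$, one for every $u\in B_i$. Each gadget set has size $3\leq r$, so it is admissible. Let $\mathcal{U}'$ extend $\mathcal{U}$ with all the $x_i,y_i$, let $\mathcal{S}'$ extend $\mathcal{S}$ with all the gadget sets, and set $k':=k+n$. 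The construction is clearly polynomial.

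The forward direction is immediate: any $(k,r,t)$-set membership $\mathcal{M}$ for $\mathcal{S}$, enlarged by \emph{all} $n$ gadget sets, yields a $(k+n,r,t+1)$-set membership of $\mathcal{S}'$, since each $u\in\mathcal{U}$ acquires exactly one new occurrence (inside $\gamma_{i,u}$), and each $x_i,y_i$ appears in exactly $t+1$ gadget sets. The reverse direction is the heart of the proof. Starting from an arbitrary $(k+n,r,t+1)$-set membership $\mathcal{M}'$, I plan to argue via a swap procedure that the number of gadget sets inside $\mathcal{M}'$ can be pushed up to $n$ without decreasing $|\mathcal{M}'|$. The key observations are that $x_i,y_i$ occur only in the gadget sets of block $i$, and that a given $u\in B_i$ occurs in exactly one gadget set, namely $\gamma_{i,u}$. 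Consequently, if $\gamma_{i,u}\notin\mathcal{M}'$ then $x_i$ and $y_i$ each lie in at most $t$ sets of $\mathcal{M}'$ (else all $t+1$ gadgets of block $i$, including $\gamma_{i,u}$, would already be in $\mathcal{M}'$). The only obstruction to inserting $\gamma_{i,u}$ is that $u$ may already occur in $t+1$ sets of $\mathcal{M}'$; but all such sets must then be original members of $\mathcal{S}$, so one of them can be removed and replaced by $\gamma_{i,u}$ without changing $|\mathcal{M}'|$ or violating any membership bound.

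Once the swap loop terminates, all $n$ gadget sets lie in $\mathcal{M}'$, and the remaining non-gadget part $\mathcal{M}'_S$ comes entirely from $\mathcal{S}$ and has cardinality at least $k$. Since each $u\in\mathcal{U}$ has already spent one of its $t+1$ membership slots inside its gadget, $u$ occurs in at most $t$ sets of $\mathcal{M}'_S$, so $\mathcal{M}'_S$ is a valid $(k,r,t)$-set membership of $\mathcal{S}$. The principal obstacle I anticipate is packaging the swap argument cleanly — in particular, checking that distinctness of sets and the $(t+1)$-membership bound are preserved through every exchange — but these properties follow from the fact that the fresh symbols $x_i,y_i$ appear in no original set, so gadget sets are automatically distinct from everything in $\mathcal{S}$.
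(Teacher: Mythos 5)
Your proposal is correct and is essentially the adaptation of the Theorem~\ref{thm-P_3-membership} construction that the paper itself invokes for this result: induction on $t$ with base case the classical $r$-Set Packing, blocks of $t+1$ elements attached to a shared pair of fresh elements via $n$ gadget sets of size $3\le r$, and a swap argument to force all gadgets into the solution. Your reverse direction is in fact cleaner than the graph version, since $x_i,y_i$ occur \emph{only} in gadget sets (there is no analogue of the stray paths $v_{i+j}u v_{i+\ell}$ that force the occupancy-balancing step in the $P_3$ proof), and all the checks you flag (distinctness, preservation of the membership bounds, termination of the swap loop) go through as you describe.
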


On the positive side, we next show that the $\{P_2\}$-Packing with $t$-Membership problem can be solved in polynomial time. 

A $(k,2,t)$-$\{P_2\}$-membership $P$ of a graph $G$ is a subset of at least $k$ edges of $E(G)$ such that every vertex of $G$ is contained in at most $t$ edges of $P$. Let us denote as $G^*$ the subgraph of $G$ induced by $P$, i.e., $G^*=(V_{G^*},P)$ where $V_{G^*}=V(P)$ (the end-points of the edges in $P$). 

\begin{lemma}\label{MaxDegree}
The graph $G^*$ has maximum degree $t$.
\end{lemma}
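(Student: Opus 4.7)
The plan is to unpack the definitions and observe that the statement is essentially immediate. A $P_2$ is a single edge (two vertices, one edge), so the elements of $P$ are just edges of $G$, and a vertex $v$ is ``contained in'' an edge of $P$ precisely when $v$ is one of its endpoints. By the definition of a $(k,2,t)$-$\{P_2\}$-membership, every vertex of $G$ belongs to at most $t$ edges of $P$.

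I would then argue as follows. Fix any vertex $v \in V_{G^*}$. Since $G^* = (V_{G^*}, P)$, the edges of $G^*$ incident to $v$ are exactly the edges of $P$ that have $v$ as an endpoint, i.e., the edges of $P$ containing $v$. Hence
\begin{equation*}
\deg_{G^*}(v) \;=\; |\{e \in P : v \in e\}| \;\leq\; t,
\end{equation*}
where the inequality is the membership condition. Since $v$ was arbitrary, $\Delta(G^*) \leq t$.

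There is no real obstacle here: the only thing to be careful about is the correspondence between ``$v$ contained in edge $e$'' (as an element of the membership) and ``$v$ incident to $e$'' (as a notion of graph degree), and that $V_{G^*}$ is defined as $V(P)$ so that every vertex of $G^*$ indeed lies on at least one edge of $P$ (so the bound is attained by vertices that are actually in $G^*$). The proof should be only two or three sentences long.
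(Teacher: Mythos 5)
Your proposal is correct and matches the paper's (implicit) reasoning: the paper states this lemma without proof, treating it as immediate from the definition of a $(k,2,t)$-$\{P_2\}$-membership, which is exactly the definition-unpacking argument you give. Your identification of $\deg_{G^*}(v)$ with the number of edges of $P$ containing $v$ is the whole content of the lemma, so nothing is missing.
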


Let $b : V (G) \to \mathbb{N}$ a degree constraint for every vertex. The problem of finding a subgraph $G^*$ of $G$ such that each vertex $v \in V(G^*)$ has degree at most $b(v)$ in $G^*$, i.e., $deg_{G^*}(v) \leq b(v)$ and the number of edges in $G^*$ is maximized is known as the \emph{degree-constrained subgraph problem} \cite{Shiloach81}. Let us refer to $G^*$ as a \emph{degree-constrained subgraph}. Y. Shiloach \cite{Shiloach81} constructs a graph $G'$ from $G$ and shows that $G$ has a degree-constrained subgraph with $k$ edges if and only if $G'$ has a maximum matching of size $|E(G)|+k$. 

By Lemma \ref{MaxDegree}, we can find a $(k,2,t)$-$\{P_2\}$-membership $P$ of $G$, by solving the degree-constrained subgraph problem with $b(v)=t$, for all $v \in V(G)$. Having a closer look at Shiloach's construction of $G'$, we observe that $|V(G')|=2|E(G)|+t|V(G)|$ and that $|E(G')|=2t|E(G)|+t|E(G)|$=$3t|E(G)|$. Thus, the maximum matching can be solved in $O(\sqrt{2|E(G)|+t|V(G)|}3t|E(G)|)$ by running Micali and Vazirani's algorithm~\cite{MicaliVazirani}. Hence, we can state:


\begin{corollary}\label{cor-vertex-membership-easy} Let $t \geq 1$. 
$P_2$-Packing with $t$-Membership can be solved in  time
that is polynomial both in the size of the input graph $G$ and in $t$.
\\ 
\end{corollary}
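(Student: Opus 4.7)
The plan is to assemble the ingredients that have already been laid out just before the statement. By Lemma~\ref{MaxDegree}, the subgraph $G^*=(V(P),P)$ induced by a $(k,2,t)$-$\{P_2\}$-membership $P$ has maximum degree at most $t$. Conversely, every spanning subgraph of $G$ with maximum degree at most $t$ and at least $k$ edges trivially furnishes such a $P$. Hence, I would first argue that deciding the existence of a $(k,2,t)$-$\{P_2\}$-membership is equivalent to deciding whether the degree-constrained subgraph problem on $G$ with bound $b(v)=t$ for every $v\in V(G)$ admits a solution with at least $k$ edges, and equivalent (for optimisation) to computing a maximum-edge degree-constrained subgraph.

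Next, I would invoke Shiloach's reduction~\cite{Shiloach81}: from $(G,b)$ one builds a graph $G'$ with $|V(G')|=2|E(G)|+t|V(G)|$ vertices and $|E(G')|=3t|E(G)|$ edges such that $G$ has a degree-constrained subgraph with exactly $k$ edges if and only if $G'$ has a maximum matching of size $|E(G)|+k$. The construction is polynomial in $|G|$ and in $t$, so we may assume $G'$ is produced explicitly in polynomial time. I would therefore reduce the problem to one call of maximum matching on $G'$.

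Finally, I would run the Micali--Vazirani algorithm~\cite{MicaliVazirani}, whose running time is $O(\sqrt{|V(G')|}\,|E(G')|)$. Substituting the bounds on $|V(G')|$ and $|E(G')|$ yields the claimed running time of
\[
O\!\left(\sqrt{2|E(G)|+t|V(G)|}\cdot 3t|E(G)|\right),
\]
which is polynomial in both $|G|$ and $t$, proving the corollary.

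I do not foresee a serious obstacle: the only thing that truly needs to be checked is that the equivalence in the first paragraph goes through when we insist on \emph{distinct} $P_2$'s in the packing, but since the $P_2$'s are just edges and the packing collects at least $k$ of them, distinctness of the subgraphs $\{H_i\}$ is exactly distinctness of edges, which is what a subgraph of $G$ provides. Hence no extra argument beyond Lemma~\ref{MaxDegree} and the cited matching machinery is required.
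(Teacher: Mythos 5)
Your proposal matches the paper's argument essentially verbatim: Lemma~\ref{MaxDegree} reduces the problem to the degree-constrained subgraph problem with $b(v)=t$, Shiloach's construction converts that to a maximum matching instance of size $|V(G')|=2|E(G)|+t|V(G)|$ and $|E(G')|=3t|E(G)|$, and Micali--Vazirani gives the stated polynomial bound. The proposal is correct and takes the same route as the paper, with the observation about distinct $P_2$'s being a harmless extra check.
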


We can summarize Theorem~\ref{thm-vertex-membership-hardness} and Corollary~\ref{cor-vertex-membership-easy} by stating the following dichotomy result that is completely analogous to the classical one due to Kirkpatrick and Hell~\cite{Kirk78}.

\begin{theorem} (Dichotomy Theorem) 
Let $t\geq 1$.
Assuming that P is not equal to NP, then 
the $\{H\}$-Packing with $t$-Membership problem
can be solved in polynomial time if and only if $|V(H)|\leq 2$.
\end{theorem}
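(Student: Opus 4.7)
The plan is to derive the dichotomy as a packaging of Theorem~\ref{thm-vertex-membership-hardness} and Corollary~\ref{cor-vertex-membership-easy}, via a case analysis on $|V(H)|$. As in the Kirkpatrick--Hell setting~\cite{Kirk78}, I would take $H$ to be connected throughout, matching the hypotheses of both cited results.

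For the easy direction ($|V(H)| \leq 2$), I would first dispose of the degenerate case $|V(H)| = 1$: here $H$ is a single vertex, and since the definition of a $(k,r,t)$-$\{H\}$-membership requires $V(H_i)\neq V(H_j)$ for $i\neq j$, the existence of such a packing is equivalent to $|V(G)| \geq k$, which is decidable in linear time (the membership bound $t\geq 1$ is automatically satisfied when each vertex is used at most once). When $|V(H)| = 2$, connectedness forces $H = K_2 = P_2$, and Corollary~\ref{cor-vertex-membership-easy} supplies a polynomial-time algorithm (polynomial in both $|V(G)|$ and $t$) via Shiloach's reduction to maximum matching on a degree-constrained auxiliary graph.

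For the hard direction ($|V(H)| \geq 3$), Theorem~\ref{thm-vertex-membership-hardness} immediately yields NP-completeness for every connected $H$ with $|V(H)|\geq 3$ and every fixed $t \geq 1$. Under the assumption $P \neq NP$, no such problem admits a polynomial-time algorithm, which closes the biconditional.

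The main obstacle is essentially nonexistent at this stage: the heavy lifting lives in Theorem~\ref{thm-P_3-membership} (the inductive reduction from $t$-Membership to $(t{+}1)$-Membership) and in its extension to arbitrary connected $H$ on at least three vertices captured by Theorem~\ref{thm-vertex-membership-hardness}. The only point requiring care is the $|V(H)| = 1$ boundary case, which is not covered by the cited corollary and must be handled ad hoc as above; beyond that, the proof is a two-line combination of the two prior statements.
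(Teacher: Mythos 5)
Your proposal is correct and follows essentially the same route as the paper, which likewise obtains the dichotomy by directly combining Theorem~\ref{thm-vertex-membership-hardness} (NP-completeness for connected $H$ with at least three vertices) with Corollary~\ref{cor-vertex-membership-easy} (the polynomial-time algorithm for $P_2$ via the degree-constrained subgraph reduction). Your explicit treatment of the degenerate case $|V(H)|=1$ is a small addition the paper leaves implicit, but it does not change the argument.
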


We now turn to the edge version of the $\mathcal{H}$-Packing with $t$-Membership problem. 
Notice that it is known that for each connected graph $H$ on at least three edges, the problem of finding an edge-disjoint packing of size $k$ in a graph $G$ is NP-complete, as finally shown by Dor and Tarsi~\cite{DorTar97}. 

Again, we only present one concrete hardness result,
assuming that the 
more general hardness result follows in a similar fashion.

\begin{theorem}
For any $t\geq 1$, the $\{C_3\}$-Packing with $t$-Edge Membership problem is NP-complete.
\end{theorem}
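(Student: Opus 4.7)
\medskip

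\noindent\textbf{Proof Proposal.} The plan is to mimic the inductive structure of Theorem~\ref{thm-P_3-membership}. The base case $t=1$ is the classical edge-disjoint triangle packing problem, whose NP-completeness is well known (and also follows from the Dor--Tarsi result~\cite{DorTar97} cited above). The inductive step will be a polynomial-time reduction from $\{C_3\}$-Packing with $t$-Edge Membership to $\{C_3\}$-Packing with $(t+1)$-Edge Membership.

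Given an instance $(G,k)$, construct $G'$ by adding, for each edge $e=uv\in E(G)$, a new vertex $w_e$ together with the two edges $uw_e$ and $vw_e$. Set $k'=k+|E(G)|$. The key structural observation is that since each $w_e$ has only two neighbors (namely $u$ and $v$), every triangle in $G'$ is of one of exactly two types: type (a), a triangle entirely inside $G$; or type (b), of the form $uvw_e$ for a unique edge $e=uv\in E(G)$. In particular, each new edge $uw_e$ or $vw_e$ lies in only one triangle of $G'$.

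For the forward direction, given a $(3,t)$-$\{C_3\}$-edge-membership $P$ of $G$ of size at least $k$, let $X=\{uvw_{uv}\mid uv\in E(G)\}$ and take $P'=P\cup X$. Each edge of $G$ gets at most $t$ memberships from $P$ plus exactly $1$ from $X$, and each new edge sits in exactly one triangle, so $P'$ is a valid $(3,t+1)$-$\{C_3\}$-edge-membership of $G'$ of size $|P|+|E(G)|\geq k+|E(G)|$.

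For the reverse direction, let $P'$ be a $(3,t+1)$-$\{C_3\}$-edge-membership of $G'$ with $|P'|\geq k+|E(G)|$, and write $P'=P_G\cup P_X$ as the partition into type-(a) and type-(b) triangles. The plan is to modify $P'$ without decreasing its size until $P_X=X$. If some $uvw_{uv}\notin P'$, then both new edges $uw_{uv}$ and $vw_{uv}$ have edge-membership $0$ in $P'$; if edge $uv$ has current membership at most $t$, simply insert $uvw_{uv}$ (strictly enlarging $P'$); otherwise $uv$ has membership exactly $t+1$, all coming from type-(a) triangles (since the unique type-(b) triangle on $uv$ is the missing one), so delete one such type-(a) triangle and add $uvw_{uv}$, preserving $|P'|$ and the overall membership bound. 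Each swap is targeted at a distinct $G$-edge, so the process terminates after at most $|E(G)|$ steps with $P_X=X$. Finally, $|P_G|=|P'|-|E(G)|\geq k$, and each edge $uv\in E(G)$ has membership $(t+1)-1=t$ in $P_G$, so $P_G$ is a valid $(3,t)$-$\{C_3\}$-edge-membership of $G$ of size at least $k$.

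The main obstacle, mirroring the analogous worry in Theorem~\ref{thm-P_3-membership}, is verifying that the swap procedure preserves both the size of $P'$ and the edge-membership bound throughout. This is cleaner here than in the $P_3$ case because each new gadget vertex $w_e$ participates in a single triangle of $G'$, so each new edge has membership value in $\{0,1\}$ and the bookkeeping is immediate.
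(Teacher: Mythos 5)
Your proposal is correct and follows essentially the same route as the paper: the identical gadget (one new degree-two vertex per edge of $G$, so that each new edge lies in a unique triangle), the same target $k'=k+|E(G)|$, and the same swap argument in the reverse direction to force all $|E(G)|$ gadget triangles into the solution. The only cosmetic differences are that the paper credits the $t=1$ base case to Holyer rather than Dor--Tarsi, and your final membership count should read ``at most $t$'' rather than exactly $t$, which does not affect the conclusion.
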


\begin{proof}
Membership in NP is obvious.
For the hardness, we proceed again by induction.
For $t=1$, the result was shown by Holyer~\cite{Hol81}.

Assume that it is known that the $\{C_3\}$-Packing with $t$-Edge Membership problem is NP-hard for some concrete $t\geq 1$. We are going to show then that the $\{C_3\}$-Packing with $(t+1)$-Edge Membership problem is NP-hard. 

Let $G=(V,E)$ and $k$ represent an instance of the $\{C_3\}$-Edge Packing with $t$-Membership problem.
We are going to construct an instance $(G',k')$ of the $\{C_3\}$-Packing with $(t+1)$-Membership problem as follows:

\begin{itemize}
\item $G'=(V',E')$, where $V'=V\cup E$ and 
$$E'=E\cup\{ve:v\in V, e\in E, v\in e\}\,,$$
\item $k'=k+|E|$.
\end{itemize}

Namely, if $P$ is a $(k,3,t)$-edge-$\{C_3\}$-membership of $G$, we can get a 
$(k',3,t+1)$-edge-$\{C_3\}$-membership of $G'$ by adding all triangles of the form $\{uv,u,v\}$ to $P$, where $u,v\in V$ and $uv\in E$.

Conversely, let $P'$ be a $(k',3,t+1)$-edge-$\{C_3\}$-membership of $G'$.
Observe that (*) the edges $(u,uv)\in E'$ with $u,v\in V$ and $uv\in E$ can only be made use of in the triangle $\{uv,u,v\}$. Moreover, any $(k',t+1)$-edge-$\{C_3\}$-membership $P'$ that does not use this triangle could be changed in a $(3,t+1)$-edge-$\{C_3\}$-membership with the same number of triangles (or more) by adding this triangle, possibly at the expense of deleting another triangle using the edge $uv$. Hence, we can assume that all $|E|$ triangles of the form  $\{uv,u,v\}$
are in the $(k',3,t+1)$-edge-$\{C_3\}$-membership of $G'$ that we consider. 
Every edge of $G$ is used exactly once in this way.
Thanks to (*), the remaining $k'-|E|=k$ triangles in $P'$ form a $(k,t)$-edge-$\{C_3\}$-membership of $G$. 
\qed
\end{proof}

It seems to be not so clear how to generalize the previous theorem to other connected graphs $H$.
The difficulty of using the same reduction idea lies 
in the fact that Observation (*) might fail.



For the Packing problems with $t$-Overlap, the complexity landscape is known even less. For instance, we don't know of any concrete graph $H$ such that the corresponding $\{H\}$-Packing with $t$-Overlap problem is NP-complete for all $t\geq 0$. However, the next theorem explains at least that there are NP-hard $\{H\}$-Packing with $t$-Overlap problems for each level $t\geq 0$.

\begin{theorem}For any $t\geq0$,
there exists a connected graph $H_t$
such that the $\{H_t\}$-Packing with $t$-Overlap problem is NP-complete.
\end{theorem}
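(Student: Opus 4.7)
The case $t=0$ is immediate: $0$-overlap coincides with vertex-disjointness, so taking $H_0$ to be any connected graph on at least three vertices (e.g.\ $H_0=K_3$) yields NP-completeness of $\{H_0\}$-Packing with $0$-Overlap via Kirkpatrick and Hell~\cite{Kirk78}.

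For $t\geq 1$ the plan is to reduce from 3-Dimensional Matching (3DM), which is NP-complete. I define $H_t$ to be the ``three-petal'' graph obtained from three copies of $K_{t+2}$ by identifying a single common vertex across all three copies; equivalently, $H_t$ consists of a center vertex $c$ together with three private $(t+1)$-cliques each fully adjacent to $c$. Then $H_t$ is connected on $3t+4$ vertices, and $c$ is the unique vertex of $H_t$ that belongs to three $(t+2)$-cliques whose pairwise intersections all equal $\{c\}$. Given a 3DM instance $(X,Y,Z,T,k)$, I construct the graph $G$ as follows: for each element $e\in X\cup Y\cup Z$, introduce a private clique $V_e\cong K_{t+1}$; for each triple $\tau=(x,y,z)\in T$, introduce a fresh vertex $c_\tau$ adjacent to every vertex of $V_x\cup V_y\cup V_z$, and include no other edges.

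The forward direction is immediate: a family of $k$ pairwise-disjoint triples $\{\tau_1,\ldots,\tau_k\}$ yields $k$ pairwise-vertex-disjoint $H_t$-copies in $G$, one centered at each $c_{\tau_i}$ with petals $V_{x_i},V_{y_i},V_{z_i}$, hence a $(k,H_t,t)$-packing.

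The reverse direction proceeds by a structural analysis of $G$, which is the main technical step. Because there are no edges between distinct $V_e$-blocks nor between distinct $c_\tau$-vertices, every $K_{t+2}$-subgraph of $G$ is of the form $V_e\cup\{c_\tau\}$ with $e\in\tau$. From this I conclude that the center vertex of any $H_t$-copy in $G$ must be some $c_\tau$: a vertex $v\in V_e$ fails, since the $K_{t+2}$'s containing $v$ are precisely the sets $V_e\cup\{c_\tau\}$ over triples $\tau\ni e$, and any two such $K_{t+2}$'s intersect in the entire block $V_e$ rather than in $\{v\}$ alone. Hence every $H_t$-copy is \emph{canonical}, i.e.\ centered at some $c_\tau$ with petals $V_x,V_y,V_z$ for $\tau=(x,y,z)$; two canonical copies whose triples share an element $e$ share the entire block $V_e$, which has $t+1>t$ vertices and hence violates $t$-overlap. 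Thus a $(k,H_t,t)$-packing in $G$ corresponds precisely to $k$ element-disjoint triples in $T$, i.e.\ to a 3DM solution of size $k$. The main obstacle is carrying out this structural analysis precisely, in particular ruling out $V_e$-vertices as centers of $H_t$-copies; this is the routine but slightly delicate case distinction just sketched.
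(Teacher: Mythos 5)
Your proof is correct, but it takes a genuinely different route from the paper's. The paper reduces from the problem of packing vertex-disjoint triangles in graphs of maximum degree four (Caprara and Rizzi), takes $H_t$ to be a triangle with three stars ($K_{1,5}$, or $K_{1,t-1}$ for larger $t$) attached, and obtains rigidity from a degree argument: the star centers are the only vertices of sufficiently high degree in the constructed graph, so they must host the corresponding vertices of $H_t$, and no overlap can be exploited. You instead reduce from 3-Dimensional Matching with a ``three-petal'' gadget $H_t$ built from three $(t+2)$-cliques glued at a vertex, and you obtain rigidity from the clique-intersection structure: every $(t+2)$-clique of $G$ is a block $V_e$ plus one triple-vertex $c_\tau$, so the center of any $H_t$-copy must be some $c_\tau$ (a block vertex fails because all $(t+2)$-cliques through it share the whole block $V_e$, of size $t+1\geq 2$, rather than a single vertex), and two copies whose triples share an element overlap in the $t+1$ vertices of that block, exceeding the budget. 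Your element blocks of size $t+1$ play the role that the paper's growing stars play: both scale a gadget with $t$ so that any attempted overlap is immediately too large. One thing your write-up buys is a fully carried-out reverse direction (the canonical-copy analysis), whereas the paper only sketches its argument; conversely, the paper's construction has the feature, noted there, of preserving bounded degree and planarity of the source instance, which your clique blocks do not. Both arguments are valid; do state explicitly that membership in NP is immediate, which you omit.
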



\begin{proof} (Sketch)
According to Caprara and Rizzi~\cite{CaprRiz2002},
the problem of packing at least $k$ triangles with zero vertex overlap in a graph of maximum degree four is NP-complete.
We reduce from this problem and add, to each vertex $v$ of the triangle packing instance graph $G$, a $K_{1,5}$, call it $S^v$ that is connected to $v$ by one edge to its center $c'$.
This gives the new graph $G'$. 
Likewise, the graph $H$ is a triangle, where we add three $K_{1,5}$'s. Clearly, any triangle packing of $G$ translates into an $H$-packing of $G'$ of the same size, and vice versa. 
However, as the only vertices in
$G'$ that are of degree six are the centers $v'$ in the $S^v$ that are connected to $v$ via a bridge, these must host the degree six vertices of $H$. Hence,
there is no way of exploiting possible overlaps, as long as $t\leq 6$.
For bigger $t$, we can change the construction by adding bigger stars $K_{1,t-1}$ to each vertex.
This finally yields the claim.
\qed
\end{proof}

This type of construction can be used in any situation where some packing problem is still NP-hard on bounded degree graphs.The trick is to add high-degree vertices that can be only matched in a specific manner.
Hence, this type of result is also true for the induced subgraph variants or for the edge variants.

More generally speaking, all our hardness constructions also show that if some packing problem is still NP-hard on planar bounded-degree graphs, this is also true for the new packing problem variants (with overlaps in any sense).


\section{Packing Problems with Bounded Membership}\label{membershipSection}

In this section, we introduce a polynomial parametric transformation for the $r$-Set Packing with $t$-Membership problem to the {$(r+1)$}-Set Packing problem. We obtain a kernel result for the $r$-Set Packing with $t$-Membership problem by running a kernelization algorithm on the transformed instance of the Set Packing problem. This compression result can be turned into a proper kernel result by re-interpreting the $(r+1)$-Set Packing kernel within the original problem. 

To obtain kernel results for our graph with $t$-Membership problems, we transform each graph with $t$-Membership problem to an instance of the $r$-Set Packing with $t$-Membership problem. Recall that $r,t \geq 1$ are fixed constants.

\subsection{Packing Sets with $t$-Membership}\label{membershipSets}

Let us begin with the $r$-Set Packing with $t$-Membership problem. 

    




We create an instance for the $(r+1)$-Set Packing problem, (a universe $\mathcal{U}^T$ and a collection $\mathcal{S}^T$) using an instance of the $r$-Set Packing with $t$-Membership problem. 

\begin{transf}\label{MembershipSetTransf}

\textbf{Input:} $\mathcal{S}$, $\mathcal{U}$; \textbf{Output:} $\mathcal{S}^T$, $\mathcal{U}^T$, and $r$

The universe $\mathcal{U}^T$ equals $(\mathcal{U} \times\{1,\ldots,t\})\cup \mathcal{S}$.

The collection $\mathcal{S}^T$ contains all subsets of $\mathcal{U}^T$ each with at most $r+1$ elements of the following form:  $\{\{(u_1,j_1)$,$\dots,$$(u_i,j_i)$,$\dots$,$(u_{r'},j_{r'}), S\}$ $\mid S \in \mathcal{S}$, $S= \{u_1,\dots,u_{r'}\}$, for each $1 \leq j_i \leq t$ and $1 \leq i \leq r'$, where $r' \leq r\}$.
\end{transf}

Intuitively, since each element in $\mathcal{U}$ can be in at most $t$ sets of a $(k,r,t)$-set membership, we duplicate $t$ times each element. This would imply that there are $t^r$ sets in $\mathcal{S}^T$ representing the same set $S \in \mathcal{S}$. All of those sets have a common element which is precisely the set of elements in $S$, i.e., some $S \in \mathcal{S} \subset \mathcal {U}^T$. This is why the upper bound on the number of elements per subset increases.

\begin{example}
For an instance of the $r$-Set Packing with $t$-Membership with universe $\mathcal{U}=\{a,b,c,d,e,f,g,h\}$,  $\mathcal{S}=$ $\{$ $\{a,b,c,d\}$, $\{b,c,e,f\}$, $\{b,c,g,h\}$ $\}$, $k=2$ and $t=2$, the constructed instance of the ${(4+1)}$-Set Packing is as follows. 
\begin{eqnarray*}\mathcal{U}^T &=&\{(a,1),(b,1),(c,1),(d,1),(e,1),(f,1),(g,1),(h,1),
\\&&(a,2),(b,2),(c,2), (d,2),(e,2),(f,2),(g,2),(h,2)\}\\&\cup&\mathcal{S}.
\end{eqnarray*}

Some sets of the collection $\mathcal{S}^T$ are 
$\{(a,1),$ $(b,1),$ $(c,1),$ $(d,1),$ $\{a,b,c,d\}\}$, 
$\{(a,1),$ $(b,1),$ $(c,1),$ $(d,2),$ $\{a,b,c,d\}\}$, 
$\{(a,2),$ $(b,2),$ $(c,2),$ $(d,2),$ $\{a,b,c,d\}\}$, 
$\{(b,1),$ $(c,2),$ $(e,1),$ $(f,2),$ $\{b,c,e,f\}\}$, 
$\{(b,1),$ $(c,1),$ $(g,2),$ $(h,2),$ $\{b,c,g,h\}\}$.

A $(2,4+1,0)$-set packing is $\{$$\{(a,1),(b,1),(c,1),(d,1),\{a,b,c,d\}\}$, \\
$\{(b,2),(c,2),(e,1),(f,1),\{b,c,e,f\}\}$$\}$. This corresponds to the $(2,4,2)$-set packing $\{\{a,b,c,d\}$, $\{b,c,e,f\}\}$. 
\end{example}

The size of $\mathcal{U}^T$ is bounded by $|\mathcal{U}| \cdot t + |\mathcal{S}|$ $<$ $tn + r n^r =O(n^r)$. Each set in $\mathcal{S}^T$ has size at most $r+1$. For each $S \in \mathcal{S}$, we can form at most $t^r$ sets with the $tr$ ordered pairs from the elements in $S$. In this way, for each $S \in \mathcal{S}$ there are $t^r$ sets in $\mathcal{S}^T$, and $|\mathcal{S}^T|=t^r|\mathcal{S}|=O(t^r n^r)$. This lead us to the following result.

\begin{lemma}
Transformation \ref{MembershipSetTransf} can be computed in $O(t^r n^r)$ time.
\end{lemma}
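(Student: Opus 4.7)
The plan is to bound the running time by summing the cost of constructing $\mathcal{U}^T$ and the cost of enumerating $\mathcal{S}^T$ separately, treating $r$ and $t$ as constants as stipulated by the problem setting.

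First I would deal with $\mathcal{U}^T = (\mathcal{U}\times\{1,\dots,t\})\cup\mathcal{S}$. The first part has exactly $tn$ elements and can clearly be written down in $O(tn)$ time. For the second part, note that $|\mathcal{S}|\leq\binom{n}{r}\cdot\text{const}=O(n^r)$ since each set has at most $r$ elements from a universe of size $n$; listing these (they are already part of the input) costs $O(rn^r)=O(n^r)$ time. Combined, $\mathcal{U}^T$ is built in $O(tn+n^r)=O(n^r)$ time, safely within the claimed bound.

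Next I would handle $\mathcal{S}^T$. For each fixed $S=\{u_1,\dots,u_{r'}\}\in\mathcal{S}$ with $r'\leq r$, the construction enumerates all tuples $(j_1,\dots,j_{r'})\in\{1,\dots,t\}^{r'}$ and forms $\{(u_1,j_1),\dots,(u_{r'},j_{r'}),S\}$. There are at most $t^{r'}\leq t^r$ such tuples, and each resulting subset has at most $r+1$ elements, so it can be written down in $O(r)$ time per subset. Therefore the cost per $S$ is $O(r\,t^r)$, and summing over $\mathcal{S}$ yields
\[
O(r\,t^r\,|\mathcal{S}|)\;=\;O(r\,t^r\,n^r)\;=\;O(t^r n^r),
\]
where the last equality uses that $r$ is a constant.

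The only subtlety worth checking is that the enumeration can really be carried out in time proportional to its output size; this is standard (e.g., by a nested loop over the $r'$ index positions, each ranging over $\{1,\dots,t\}$). I do not expect any real obstacle here: the statement is essentially a bookkeeping exercise, and the bound $O(t^r n^r)$ matches the cardinality $|\mathcal{S}^T|=t^r|\mathcal{S}|=O(t^r n^r)$ already derived just above the lemma. Adding the $O(n^r)$ cost for $\mathcal{U}^T$ to the $O(t^r n^r)$ cost for $\mathcal{S}^T$ gives the total running time $O(t^r n^r)$ claimed in the lemma.
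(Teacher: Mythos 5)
Your proposal is correct and follows essentially the same route as the paper: bound $|\mathcal{U}^T|$ by $O(n^r)$, observe that each $S\in\mathcal{S}$ gives rise to at most $t^r$ sets of size at most $r+1$ so that $|\mathcal{S}^T|=t^r|\mathcal{S}|=O(t^r n^r)$, and conclude that writing everything down takes time proportional to the output size. The extra care you take about output-linear enumeration is a harmless elaboration of the same bookkeeping argument.
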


Note that the parameter $k$ stays the same in this transformation, and $t$ only influences the running time of the whole construction, as the $(r+1)$-Set Packing instance will grow if $t$ gets bigger.

\begin{lemma}\label{transformationMembershipSet}
$\mathcal{S}$ has a $(k,r,t)$-set membership if and only if $\mathcal{S}^T$ has $k$ disjoint sets (i.e., a $(k,r+1,0)$-set packing). 
\end{lemma}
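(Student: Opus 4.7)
The plan is a straightforward double-implication argument, exploiting the two roles played by the coordinates of the new elements in $\mathcal{U}^T$: the coordinate $S\in\mathcal{S}\subset \mathcal{U}^T$ identifies which original set is being encoded, and the labels $j\in\{1,\dots,t\}$ serve as ``slots'' that enforce the $t$-Membership bound via disjointness.

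For the forward direction, I will start with a $(k,r,t)$-set membership $\mathcal{K}=\{S_1,\dots,S_k\}\subseteq\mathcal{S}$. For each $u\in\mathcal{U}$, the sets in $\mathcal{K}$ containing $u$ form a list of length $d(u)\leq t$; I assign to these sets pairwise distinct labels from $\{1,\dots,t\}$ (say, in the order they appear), giving for each pair $(S_i,u)$ with $u\in S_i$ a label $\lambda(S_i,u)\in\{1,\dots,t\}$. Then, writing $S_i=\{u_1,\dots,u_{r'}\}$, I place in $\mathcal{S}^T$ the set $T_i=\{(u_1,\lambda(S_i,u_1)),\dots,(u_{r'},\lambda(S_i,u_{r'})),\,S_i\}$, which belongs to $\mathcal{S}^T$ by construction. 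The sets $T_1,\dots,T_k$ are pairwise disjoint: distinct indices $i\neq j$ give distinct anchors $S_i\neq S_j$ by definition of $\mathcal{K}$, and for any shared element $u\in S_i\cap S_j$, the labels $\lambda(S_i,u)$ and $\lambda(S_j,u)$ were chosen distinct. Hence we obtain a $(k,r+1,0)$-set packing in~$\mathcal{S}^T$.

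For the reverse direction, let $\{T_1,\dots,T_k\}\subseteq\mathcal{S}^T$ be pairwise disjoint. By construction of $\mathcal{S}^T$, each $T_i$ contains exactly one element $S^{(i)}\in\mathcal{S}\subset\mathcal{U}^T$ together with pairs $(u,j)$ such that $u$ ranges over the elements of $S^{(i)}$. Disjointness on the anchor coordinate forces $S^{(1)},\dots,S^{(k)}$ to be pairwise distinct sets of $\mathcal{S}$. It remains to check the membership bound: for any $u\in\mathcal{U}$, if $u$ lies in $p$ of the sets $S^{(1)},\dots,S^{(k)}$, then each of the corresponding $T_i$ contains a pair $(u,j_i)$ with $j_i\in\{1,\dots,t\}$, and disjointness forces the $j_i$ to be pairwise distinct. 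Hence $p\leq t$, and $\{S^{(1)},\dots,S^{(k)}\}$ is a $(k,r,t)$-set membership of~$\mathcal{S}$.

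I do not foresee a substantial obstacle: both directions reduce to bookkeeping on the two coordinates of $\mathcal{U}^T$. The only point that needs care is ensuring that the label assignment in the forward direction actually produces a set of $\mathcal{S}^T$ (it does, as each $T_i$ has at most $r+1$ elements by construction) and that in the reverse direction the anchor element $S\in\mathcal{S}$ appearing inside $T_i$ is uniquely determined by $T_i$ (it is, since $T_i\in\mathcal{S}^T$ is defined to contain exactly one such anchor).
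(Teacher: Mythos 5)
Your proof is correct and follows essentially the same route as the paper's: the paper also labels each occurrence of $u$ by the number of preceding sets of $\mathcal{K}$ containing $u$ (your ``in the order they appear'') to build the disjoint sets $T_i$, and reverses the argument by reading off the anchors and noting that more than $t$ occurrences of an element would require more than $t$ distinct labels. No gaps.
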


\begin{proof}
We build a $(k,r+1,0)$-set packing $\mathcal{K_S}$ from a $(k,r,t)$-set membership $\mathcal{K}$. For each $S_j \in \mathcal{K}$, we add a set $S^T_j$ to $\mathcal{S}^T$ with $|S_j|+1 \leq r+1$ elements.  By our construction, $S^T_j$ has one element from $\mathcal{S}$ and at most $r$ ordered pairs. 
As $S_j \in \mathcal{S} \subset \mathcal{U}^T$,
we can choose $S^T_j\cap \mathcal{S}=\{S_j\}$. 
For each element $u_i \in S_j$ ($1 \leq i \leq |S_j| \leq r$), we add an ordered pair $(u_i,l+1)$ to $S_j$ where $l$ corresponds to the number of sets in $\{S_1, \dots, S_{j-1}\} \subset \mathcal{K}$ that contain $u_i$. Since each element of $\mathcal{U}$ is a member of at most $t$ sets of $\mathcal{K}$ (otherwise $\mathcal{K}$ would not be a solution), $l$ is at most $t-1$ and $(u_i,l+1)$ always exists in $\mathcal{U}^T$. Thus, $S^T_i \in \mathcal{S}^T$. It remains to show that the sets in $\mathcal{K_S}$ are pairwise disjoint. 
By construction, none of the sets in $\mathcal{K_S}$ will share the element $S$. In addition, no pair $S^T_i,S^T_j$ shares an ordered pair $(u,l)$ for some $u \in \mathcal{U}$ and $1 \leq l \leq t$. Namely, if $S^T_i$ and $S^T_j$ share a pair $(u,l)$, this would imply that there are two sets $S_i,S_j \in \mathcal{K}$ that share the same element $u$. Without lost of generality, assume that $S_i$ appears before $S_j$ in $\mathcal{K}$. If $(u,l)$ is contained in both $S^T_i$ and $S^T_j$ then $u$ is a member of $l-1$ sets from both $\{\dots, S_i, \dots \}$ and $\{\dots, S_i, \dots, S_j \}$. This is a contradiction, since both $S_i$ and $S_j$ contain $u$. 

We construct a $(k,r,t)$-set membership $\mathcal{K}$ using a $(k,r+1,0)$-set packing $\mathcal{K_S}$. Each set $S^T_i \in \mathcal{K}_S$ has an element $S \in \mathcal{S}$. Let us denote that element as $S_i$. For each set $S^T_i \in \mathcal{K}_S$, we add $S_i$ to $\mathcal{K}$. Since the sets in $\mathcal{K}_S$ are pairwise disjoint, $S_i \neq S_j$. Now, we need to show that each element of $\mathcal{U}$ is a member of at most $t$ sets in $\mathcal{K}$. Assume otherwise for the sake of contradiction. If two sets in $\mathcal{K}$ share an element $u$, then there are two sets in $\mathcal{K}_{S}$ each one with a pair $(u,i)$ and $(u,j)$, respectively. Since the sets in $\mathcal{K}_S$ are disjoint, $i \neq j$. If an element $u \in \mathcal{U}$ is a member of more than $t$ sets in $\mathcal{K}$, then there are at least $t+1$ ordered pairs $(u,1),\dots,(u,t+1)$ contained in $t+1$ different sets in $\mathcal{K}_S$. However, by our construction of $\mathcal{U}^T$, there are at most $t$ ordered pairs that contain $u$. \qed 
\end{proof}

Then, we run the currently best kernelization algorithm for the $(r+1)$-Set Packing problem~\cite{Faisal10}. This algorithm would leave us with a new universe $\mathcal{U'}^{T}$ with at most $2r!((r+1)k-1)^{r}$ elements, as well with a collection $\mathcal{S'}^{T}$ of subsets. The following property is borrowed from \cite{Faisal10}. 


\begin{lemma}\label{kernelMembershipSet}
 $\mathcal{S}^T$ has a $(k,r+1,0)$-set packing if and only if $\mathcal{S'}^{T}$ has a $(k,r+1,0)$-set packing.
\end{lemma}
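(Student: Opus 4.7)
The plan is to invoke the kernelization theorem for $(r+1)$-Set Packing from \cite{Faisal10} directly, since the lemma is precisely the correctness statement of that procedure instantiated on the transformed instance produced by Transformation \ref{MembershipSetTransf}. First I would verify that the pair $(\mathcal{U}^T,\mathcal{S}^T)$ together with parameter $k$ really is a legitimate input to the $(r+1)$-Set Packing kernelization: by construction every member of $\mathcal{S}^T$ is a subset of $\mathcal{U}^T$ of cardinality at most $r+1$, and the collection is finite of size $O(t^r n^r)$, so the hypotheses are met.

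Once the hypotheses are checked, the conclusion is immediate: the kernelization procedure of \cite{Faisal10} takes the instance $(\mathcal{U}^T,\mathcal{S}^T,k)$ and returns in polynomial time a reduced instance $(\mathcal{U'}^T,\mathcal{S'}^T,k)$ with $|\mathcal{U'}^T|\leq 2r!((r+1)k-1)^r$. The defining correctness guarantee of any kernelization is that the original instance is a yes-instance iff the reduced one is, which is exactly the biconditional stated in the lemma. Thus there is essentially no new mathematical content to prove; the role of the lemma is merely to cite and name the equivalence so that it can be composed downstream with Lemma \ref{transformationMembershipSet} to yield a kernel for $r$-Set Packing with $t$-Membership.

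The only small thing to be careful about is cosmetic: to feed the algorithm of \cite{Faisal10} cleanly, one may wish to remove duplicate sets from $\mathcal{S}^T$ in a preliminary pass, but since duplicates cannot both appear in a disjoint packing they do not change the packing number, so this pre-processing is safe. I do not anticipate a genuine obstacle here; if anything subtle arose it would concern the interpretability of elements of $\mathcal{S'}^T$ back in the original problem, but the present lemma is stated purely about the packing number of $(\mathcal{U'}^T,\mathcal{S'}^T)$ as a standalone $(r+1)$-Set Packing instance, so that concern is deferred to the subsequent step where the compression is turned into a proper kernel.
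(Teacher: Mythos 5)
Your proposal matches the paper exactly: the paper gives no proof of this lemma either, stating only that ``the following property is borrowed from \cite{Faisal10}'' after running that kernelization on $(\mathcal{U}^T,\mathcal{S}^T)$, i.e., the lemma is just the correctness guarantee of the cited $(r+1)$-Set Packing kernelization applied to the transformed instance. Your extra care in checking that the transformed instance is a legitimate input (sets of size at most $r+1$ over a finite universe) is sound and, if anything, slightly more explicit than the paper.
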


Next, we construct the reduced universe $\mathcal{U}'$ and $\mathcal{S'}$ using $\mathcal{U'}^{T}$ and $\mathcal{S'}^{T}$ as follows.

\begin{transf}\label{reinterpretSetMembership}

\textbf{Input:} $\mathcal{S}'^T$, $\mathcal{U}'^T$; \textbf{Output:} $\mathcal{S}'$, $\mathcal{U}'$

The reduced universe $\mathcal{U'}$ contains each element $u$ appearing in $\mathcal{U'}^{T}$. In each set of $\mathcal{S'}^T$, there is an element $S$ that will correspond to a set of $\mathcal{S'}$. 
\end{transf}

$\mathcal{U}'$ and $\mathcal{S}'$ together with $k$, give the reduced $r$-Set Packing with $t$-Membership instance we are looking for. By our construction, $\mathcal{U}'$ will have at most $2r!((r+1)k-1)^{r}$  elements. This reduction property allows us to state:

\begin{theorem}
The $r$-Set Packing with $t$-Membership has a problem kernel with $O((r+1)^r k^{r})$ elements from the given universe.
\end{theorem}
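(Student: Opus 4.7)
The plan is to compose the two transformations with the known kernelization for ordinary $(r+1)$-Set Packing. Starting from an instance $(\mathcal{S},\mathcal{U},k)$, I first apply Transformation~\ref{MembershipSetTransf} to produce $(\mathcal{S}^T,\mathcal{U}^T,k)$, an equivalent $(r+1)$-Set Packing instance by Lemma~\ref{transformationMembershipSet}; I then run Faisal et al.'s kernelization \cite{Faisal10}, reducing it to $(\mathcal{S}'^T,\mathcal{U}'^T,k)$ with $|\mathcal{U}'^T|\leq 2r!((r+1)k-1)^r$, with equivalence preserved by Lemma~\ref{kernelMembershipSet}; finally, I apply Transformation~\ref{reinterpretSetMembership} to read off an $r$-Set Packing with $t$-Membership instance $(\mathcal{S}',\mathcal{U}',k)$. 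Each step runs in polynomial time, so the composition is a polynomial-time reduction.

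For the size bound, since $r$ is a fixed constant, $r!$ is absorbed into the $O$-notation and $((r+1)k-1)^r \leq ((r+1)k)^r$, giving $|\mathcal{U}'^T| = O((r+1)^r k^r)$. Transformation~\ref{reinterpretSetMembership} sends each element of $\mathcal{U}'^T$ (whether a pair $(u,j)$ or a set-element $S\in\mathcal{S}$) to at most one element of $\mathcal{U}'$, so $|\mathcal{U}'|\leq|\mathcal{U}'^T|=O((r+1)^r k^r)$, which is the claimed bound.

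For correctness I chain the two lemmas: $\mathcal{S}$ has a $(k,r,t)$-set membership iff $\mathcal{S}^T$ has a $(k,r+1,0)$-set packing iff $\mathcal{S}'^T$ has one. The missing link is that $\mathcal{S}'^T$ having a $(k,r+1,0)$-set packing is equivalent to $\mathcal{S}'$ having a $(k,r,t)$-set membership, which is essentially Lemma~\ref{transformationMembershipSet} re-read along the sub-collection~$\mathcal{S}'^T$. I expect this to be the main obstacle: the Faisal et al.\ kernel may keep only some of the $t^r$ copies of a transformed set while discarding others, so one must verify that this partial deletion does not break the correspondence between memberships of $\mathcal{S}'$ and packings of $\mathcal{S}'^T$. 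The cleanest way to close this gap is to observe that the ``only-if'' direction of Lemma~\ref{transformationMembershipSet} builds its witness set-by-set using just a single surviving copy per chosen set, and Transformation~\ref{reinterpretSetMembership} constructs $\mathcal{S}'$ by reading off exactly those $S\in\mathcal{U}'^T\cap\mathcal{S}$ that survived, independently of which of their $t^r$ copies in $\mathcal{S}'^T$ survived. Hence the argument of Lemma~\ref{transformationMembershipSet} transfers verbatim to $(\mathcal{S}',\mathcal{U}',\mathcal{S}'^T)$, yielding a proper kernel of the desired size rather than a mere polynomial compression.
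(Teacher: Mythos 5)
Your route is exactly the paper's: compose Transformation~\ref{MembershipSetTransf}, the kernelization of \cite{Faisal10}, and Transformation~\ref{reinterpretSetMembership}, chain Lemmas~\ref{transformationMembershipSet} and~\ref{kernelMembershipSet}, and read the size bound off $|\mathcal{U}'^{T}|\leq 2r!((r+1)k-1)^{r}$; the paper itself states the theorem with no more justification than this composition, so you are also right that the reinterpretation step is the one place needing an argument. Your size accounting is fine.

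However, your proposed patch for that step overclaims. The argument of Lemma~\ref{transformationMembershipSet} does \emph{not} transfer verbatim to $(\mathcal{S}',\mathcal{U}',\mathcal{S}'^{T})$ in the direction from a $(k,r,t)$-set membership of $\mathcal{S}'$ to a $(k,r+1,0)$-set packing of $\mathcal{S}'^{T}$: that direction assigns to each chosen set $S_j$ one \emph{specific} copy, namely the one whose second coordinates are the occupancy counts $l+1$ determined by the earlier sets of the solution, and the kernelization of \cite{Faisal10} is free to delete exactly that copy while retaining another, so the greedy slot assignment can get stuck. The repair is to observe that this direction is never needed for kernel correctness. You need only: (i) $\mathcal{S}$ yes $\Rightarrow$ $\mathcal{S}^{T}$ yes $\Rightarrow$ $\mathcal{S}'^{T}$ yes $\Rightarrow$ $\mathcal{S}'$ yes, where the last implication uses only the packing-to-membership half of Lemma~\ref{transformationMembershipSet}'s argument (extract the element $S_i\in\mathcal{S}$ from each set of the disjoint packing; these sets lie in $\mathcal{S}'$ by Transformation~\ref{reinterpretSetMembership} regardless of which copies survived, and the $t$-membership bound follows exactly as in the lemma because $\mathcal{U}^{T}$ contains only $t$ pairs per element); and (ii) $\mathcal{S}'$ yes $\Rightarrow$ $\mathcal{S}$ yes, which is immediate from $\mathcal{S}'\subseteq\mathcal{S}$. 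With that adjustment your proof is complete and matches the paper's intent.
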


Our kernelization algorithm is summarized in Algorithm \ref{SetMembershipKernel}.

\begin{algorithm}[tbh] 
  \caption{Kernelization Algorithm - $r$-Set Packing with $t$-Membership} 
	\textbf{Input:} $\mathcal{U}$ and $\mathcal{S}$ \\
	\textbf{Output:} Reduced $\mathcal{U'}$ and $\mathcal{S'}$
	\begin{algorithmic}[1] 	
	
		\STATE{Construct a universe $\mathcal{U}^T$ and collection $\mathcal{S}^T$ using Transformation \ref{MembershipSetTransf}}
		
		\STATE{Run the $(r+1)$-Set Packing kernelization algorithm~\cite{Faisal10} on $\mathcal{U}^T$ and $\mathcal{S}^T$}
	  
		\STATE{Reinterpret the kernel $\mathcal{S}'^T$, $\mathcal{U}'^T$ using Transformation \ref{reinterpretSetMembership}.}
				
		\STATE{Return the output $\mathcal{U}'$, $\mathcal{S}'$ of Transformation \ref{reinterpretSetMembership}.}
		
		\end{algorithmic}\label{SetMembershipKernel}
\end{algorithm}

\subsection{Packing Graphs with $t$-Membership - Vertex-Membership}\label{membershipGraphs}
We introduce the $\mathcal{H}$-Packing with $t$-Membership problem to bound the number of $\mathcal{H}$-subgraphs that a vertex of $G$ can belong to. To reduce this problem to a kernel, we will transform it to the $r$-Set Packing with $t$-Membership. 


%
%
%

Note that there could exists in $G$ more than one $\mathcal{H}$-subgraph with the same set of vertices (but different set of edges). However, we claim next that only one of those $\mathcal{H}$-subgraphs can be in a solution.

\begin{lemma}\label{SameVerticesMembership}
Let $H_i$ and $H_j$ be a pair of $\mathcal{H}$-subgraphs in $G$  such that $V(H_i)=V(H_j)$ but $E(H_i) \neq E(H_j)$. Any $(k,r,t)$-$\mathcal{H}$-membership of $G$ that contains $H_i$ does not contain $H_j$ (and vice versa). Furthermore, we can replace $H_i$ by $H_j$ in such membership.
\end{lemma}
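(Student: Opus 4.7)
The proof plan splits naturally into the two assertions of the lemma, and both follow almost directly from the definition of $(k,r,t)$-$\mathcal{H}$-membership.

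For the first assertion, recall that the definition of a $(k,r,t)$-$\mathcal{H}$-membership $\mathcal{K}=\{H_1,\dots,H_k\}$ requires $V(H_a)\neq V(H_b)$ for every pair $a\neq b$. Hence, if $H_i\in\mathcal{K}$ and $H_j$ is any other $\mathcal{H}$-subgraph of $G$ with $V(H_j)=V(H_i)$, then $H_j$ cannot be a member of $\mathcal{K}$, for otherwise we would have two distinct entries of $\mathcal{K}$ with equal vertex sets, contradicting the definition. So I would simply cite the definition here.

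For the second assertion, let $\mathcal{K}$ be a $(k,r,t)$-$\mathcal{H}$-membership of $G$ containing $H_i$, and set $\mathcal{K}':=(\mathcal{K}\setminus\{H_i\})\cup\{H_j\}$. I plan to verify the three defining properties of a $(k,r,t)$-$\mathcal{H}$-membership for $\mathcal{K}'$. First, $H_j$ is by hypothesis an $\mathcal{H}$-subgraph of $G$, i.e., isomorphic to some graph in $\mathcal{H}$, so every element of $\mathcal{K}'$ is of the required form. Second, for every $H_\ell\in\mathcal{K}$ with $\ell\neq i$ we have $V(H_\ell)\neq V(H_i)=V(H_j)$, so the vertex sets of members of $\mathcal{K}'$ are pairwise distinct, and in particular $|\mathcal{K}'|=|\mathcal{K}|\geq k$. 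Third, since $V(H_j)=V(H_i)$, the multiset $\{V(H'):H'\in\mathcal{K}'\}$ coincides with $\{V(H'):H'\in\mathcal{K}\}$, so the number of subgraphs in $\mathcal{K}'$ that contain any fixed vertex $v\in V(G)$ is equal to the number in $\mathcal{K}$ that contain $v$, which is at most $t$. Hence $\mathcal{K}'$ is again a $(k,r,t)$-$\mathcal{H}$-membership of $G$.

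There is no genuine obstacle here; the argument is purely a book-keeping check against the definition. The only subtlety worth flagging is that the $t$-membership condition is stated in terms of vertex incidences rather than edge incidences, so the fact that $E(H_i)\neq E(H_j)$ plays no role in whether the condition is satisfied, which is precisely why the swap is harmless. An analogous statement would \emph{not} be true for the edge-membership variant, and I would remark on this distinction briefly at the end.
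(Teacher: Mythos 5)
Your proof is correct and follows exactly the route the paper intends: the paper states this lemma without proof, treating both assertions as immediate from the definition of a $(k,r,t)$-$\mathcal{H}$-membership (which explicitly requires $V(H_i)\neq V(H_j)$ for distinct members, and whose occupancy condition depends only on vertex sets). Your bookkeeping check, including the observation that the multiset of vertex sets is unchanged under the swap so every vertex's occupancy is preserved, is a faithful and complete filling-in of that omitted argument.
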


We denote as $\mathcal{H}_{G}$ the set of all $\mathcal{H}$-subgraphs in $G$; thus, $|\mathcal{H}_G|=O(|\mathcal{H}|n^{{r(\mathcal{H})}^2})$. We can find a $(k,r,t)$-$\mathcal{H}$-membership from $G$ by selecting $k$ $\mathcal{H}$-subgraphs from $\mathcal{H}_G$ such that every vertex of $V(G)$ is contained in at most $t$ of those subgraphs. By Lemma \ref{SameVerticesMembership}, we can apply the next reduction rule to $\mathcal{H}_G$.

\begin{redrule}\label{cleanupRepeated}
For any pair of $\mathcal{H}$-subgraphs $H_1,H_2$ in $\mathcal{H}_G$ such that $V(H_1)=V(H_2)$, we arbitrary select one and remove the other from $\mathcal{H}_G$.
\end{redrule}

It is important to clarify that we only apply Reduction Rule \ref{cleanupRepeated} to $\mathcal{H}_G$ for the vertex and induced version of this problem. Thus, after applying this rule, $|\mathcal{H}_G|=O(|\mathcal{H}|n^{r(\mathcal{H})})$.

We construct an instance for the $r$-Set Packing with $t$-Membership as follows.

\begin{transf}\label{VertexPackingToOSP}
\textbf{Input:} $G$, $\mathcal{H}$; \textbf{Output:} $\mathcal{U}$, $\mathcal{S}$, and $r$

The universe $\mathcal{U}$ equals $V(G)$.

There is a set in $\mathcal{S}$ for each $\mathcal{H}$-subgraph $H$ in $\mathcal{H}_G$ and $S=V(H)$.

Furthermore, let $r=r(\mathcal{H})$. 
\end{transf}




In this way, $|\mathcal{U}|=O(n)$ and $|\mathcal{S}| = |\mathcal{H}_G| = O(|\mathcal{H}|n^{r(\mathcal{H})})$. Each set in $\mathcal{S}$ has size at most $r(\mathcal{H})$. 

\begin{lemma}
$G$ has a $(k,r,t)$-$\mathcal{H}$-membership if and only if $\mathcal{S}$ has a $(k,r,t)$-set membership
\end{lemma}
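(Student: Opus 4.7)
The plan is to exhibit a direct bijection between $(k,r,t)$-$\mathcal{H}$-memberships of $G$ and $(k,r,t)$-set memberships of $\mathcal{S}$, using the one-to-one correspondence between $\mathcal{H}_G$ and $\mathcal{S}$ that Transformation~\ref{VertexPackingToOSP} sets up after Reduction Rule~\ref{cleanupRepeated} has been applied.

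For the forward direction, I would start with a $(k,r,t)$-$\mathcal{H}$-membership $\mathcal{K} = \{H_1,\dots,H_k\}$ of $G$. For each $H_i$, let $S_i = V(H_i) \in \mathcal{S}$; these sets exist in $\mathcal{S}$ by construction. By the definition of $(k,r,t)$-$\mathcal{H}$-membership, $V(H_i)\neq V(H_j)$ for $i\neq j$, so the sets $S_1,\dots,S_k$ are pairwise distinct. Each $|S_i|=|V(H_i)|\leq r$. Finally, since $\mathcal{U}=V(G)$ and every $v\in V(G)$ lies in at most $t$ subgraphs of $\mathcal{K}$, the same $v$ lies in at most $t$ of the $S_i$'s. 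Hence $\{S_1,\dots,S_k\}$ is a $(k,r,t)$-set membership.

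For the converse, let $\{S_1,\dots,S_k\}\subseteq\mathcal{S}$ be a $(k,r,t)$-set membership. By Transformation~\ref{VertexPackingToOSP} each $S_i$ comes from some $\mathcal{H}$-subgraph $H_i\in\mathcal{H}_G$ with $V(H_i)=S_i$. After Reduction Rule~\ref{cleanupRepeated}, $\mathcal{H}_G$ contains at most one subgraph per vertex set, so $H_i$ is uniquely determined by $S_i$, and distinct $S_i$'s yield distinct $H_i$'s with distinct vertex sets. Since every element $v\in\mathcal{U}=V(G)$ appears in at most $t$ of the $S_i$'s, the same $v$ appears in at most $t$ of the $H_i$'s, so $\{H_1,\dots,H_k\}$ is a $(k,r,t)$-$\mathcal{H}$-membership of $G$.

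The only delicate point is justifying that restricting to $\mathcal{H}_G$ after Reduction Rule~\ref{cleanupRepeated} does not lose solutions: if a valid $\mathcal{H}$-membership of $G$ uses some $\mathcal{H}$-subgraph $H$ that was discarded by the rule (because another $H'$ with $V(H')=V(H)$ was kept), Lemma~\ref{SameVerticesMembership} guarantees that $H$ can be replaced by $H'$ without affecting the overlap condition, since the membership condition depends only on vertex occupancies. I would invoke this observation at the start of the forward direction to ensure that the $\mathcal{H}$-membership can be chosen with all its subgraphs in the pruned $\mathcal{H}_G$, making the translation to $\mathcal{S}$ well-defined. This is the main (minor) obstacle; the rest is an entirely routine unwinding of definitions.
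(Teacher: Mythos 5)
Your proposal is correct and follows essentially the same route as the paper: translate each $H_i$ to $S_i=V(H_i)$ and back, with the overlap bound carrying over directly because $\mathcal{U}=V(G)$. You are somewhat more explicit than the paper about the distinctness of the resulting sets and the role of Reduction Rule~\ref{cleanupRepeated} and Lemma~\ref{SameVerticesMembership}, but this is the same argument.
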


\begin{proof}
We build a $(k,r,t)$-set membership $\mathcal{K}_S$ from a $(k,r,t)$-$\mathcal{H}$-membership $\mathcal{K}$. For each $\mathcal{H}$-subgraph $H_i$ in $\mathcal{K}$, we add a set $S_i = V(H_i)$ to $\mathcal{K}_S$. By our construction, $S_i \in \mathcal{S}$. Given that every vertex of $\mathcal{U}$ is contained in at most $t$ 
$\mathcal{H}$-subgraphs of $\mathcal{K}$ then each element of $\mathcal{U}$ will be contained in at most $t$ sets of $\mathcal{K}_S$.

Given a $(k,r,t)$-set membership $\mathcal{K}_S$, we build a $(k,r,t)$-$\mathcal{H}$-membership $\mathcal{K}$ of $G$. For each set $S_i$ in $\mathcal{K}_S$ we add an $\mathcal{H}$-subgraph $H_i \subseteq G[S_i]$. Since, $\mathcal{U} = V(G)$, $H_i$ is an $\mathcal{H}$-subgraph of $G$. Each vertex of $G$ will be contained in at most $t$ $\mathcal{H}$-subgraphs of $\mathcal{K}$; otherwise there would be one element of $\mathcal{U}$ contained in more than $t$ sets of $\mathcal{K}_S$. \qed
\end{proof}

We obtain reduced universe $\mathcal{U'}$ and $\mathcal{S'}$ for the constructed instance of the $r$-Set Packing with $t$-Membership with Algorithm \ref{SetMembershipKernel}. After that, the reduced graph for the original instance is obtained as follows.

\begin{transf}\label{FromUniverseToGraph}
\textbf{Input:} $G$, $\mathcal{U'}$; \textbf{Output:} $G'$

Return $G' = G[\mathcal{U'}]$. 
\end{transf}


\begin{lemma}
$G'$ has a $(k,r,t)$-$\mathcal{H}$-membership if and only if $\mathcal{S'}$ has a $(k,r,t)$-set membership.
\end{lemma}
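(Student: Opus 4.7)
The plan is to obtain the desired equivalence by chaining together the equivalences already established in this section, exploiting the fact that $G'=G[\mathcal{U}']$ and the specific way Transformation~\ref{reinterpretSetMembership} extracts $\mathcal{U}'$ and $\mathcal{S}'$ from the kernelized set-packing instance. Combining the preceding (unlabeled) lemma linking $G$ and $\mathcal{S}$ with Lemmas~\ref{transformationMembershipSet} and~\ref{kernelMembershipSet} already shows that $G$ has a $(k,r,t)$-$\mathcal{H}$-membership if and only if $\mathcal{S}'^T$ has a $(k,r+1,0)$-set packing. It will therefore suffice to show that the same equivalence holds with $G'$ in place of $G$.

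For the backward direction, starting from a $(k,r,t)$-set membership $\mathcal{K}_S \subseteq \mathcal{S}'$ I would invoke Reduction Rule~\ref{cleanupRepeated} to obtain, for each $S \in \mathcal{K}_S$, the unique $\mathcal{H}$-subgraph $H_S$ of $G$ with $V(H_S)=S$. The key structural observation I would prove is that $S \subseteq \mathcal{U}'$: since $S \in \mathcal{S}'$, by Transformation~\ref{reinterpretSetMembership} the element $S$ occurs in some set of $\mathcal{S}'^T$, and by the shape dictated by Transformation~\ref{MembershipSetTransf} that very set also contains a pair $(u,j)$ for each $u \in S$, forcing every such $u$ into $\mathcal{U}'$. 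Hence $H_S$ lies inside $G[\mathcal{U}']=G'$, and the vertex-membership bound transfers verbatim: any vertex covered by more than $t$ of the $H_S$ would force some $u \in \mathcal{U}'$ to lie in more than $t$ sets of $\mathcal{K}_S$.

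For the forward direction, I would use $G' \subseteq G$ to observe that any $(k,r,t)$-$\mathcal{H}$-membership of $G'$ is also one of $G$; the chain above then yields a $(k,r+1,0)$-set packing $\{S^T_1,\dots,S^T_k\} \subseteq \mathcal{S}'^T$. Mimicking the ``only if'' direction of Lemma~\ref{transformationMembershipSet}, I would take the unique representative $S_i \in \mathcal{S}$ inside each $S^T_i$; each $S_i$ lies in $\mathcal{S}'$ because $S_i$ appears in $\mathcal{U}'^T$, the $S_i$ are pairwise distinct because the $S^T_i$ are disjoint, and no $u \in \mathcal{U}$ can belong to more than $t$ of the $S_i$, since only $t$ pairs $(u,j)$ exist in $\mathcal{U}^T$ and these must be distributed across pairwise-disjoint sets.

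The step I expect to require the most care is checking that the argument of Lemma~\ref{transformationMembershipSet} really does transfer to $\mathcal{S}'^T$ and $\mathcal{S}'$, even though $\mathcal{S}'^T$ is not obtained by re-applying Transformation~\ref{MembershipSetTransf} to $\mathcal{S}'$. A direct inspection of Transformation~\ref{reinterpretSetMembership} shows that every surviving set of $\mathcal{S}'^T$ still has the canonical form that the original proof relies on: exactly one representative element drawn from $\mathcal{S}' \subseteq \mathcal{S}$ together with at most $r$ ordered pairs of the form $(u,j)$, which is all the argument needs.
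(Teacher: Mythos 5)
Your proof is correct, and it supplies an argument that the paper itself omits: this lemma is stated bare, as an implicit consequence of the surrounding transformations, so there is no ``paper proof'' to match against. Your decomposition is the right one. The direction from $G'$ to $\mathcal{S}'$ is handled by lifting a membership of $G'$ to $G$ and running it through the unlabeled lemma, Lemma~\ref{transformationMembershipSet} and Lemma~\ref{kernelMembershipSet}, then reading off the distinguished representatives from the disjoint sets of $\mathcal{S}'^T$; the distinctness and occupancy arguments you sketch are exactly those already used in the paper's proof of Lemma~\ref{transformationMembershipSet} (a small nitpick: the cleanest reason each $S_i$ lies in $\mathcal{S}'$ is that Transformation~\ref{reinterpretSetMembership} \emph{defines} $\mathcal{S}'$ as the collection of representatives of sets in $\mathcal{S}'^T$, rather than that $S_i$ appears in $\mathcal{U}'^T$). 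More importantly, you isolate the one point that genuinely needs checking and that the paper leaves unstated: that every $S\in\mathcal{S}'$ satisfies $S\subseteq\mathcal{U}'$, so the $\mathcal{H}$-subgraph with vertex set $S$ actually lives in $G'=G[\mathcal{U}']$. Your justification --- $S$ is the distinguished element of a surviving set of $\mathcal{S}'^T$, which by the canonical form of Transformation~\ref{MembershipSetTransf} also contains a pair $(u,j)$ for every $u\in S$, and these pairs belong to $\mathcal{U}'^T$ --- is correct, though it silently uses that the kernelization of \cite{Faisal10} outputs a valid instance in which every set of $\mathcal{S}'^T$ is a subset of $\mathcal{U}'^T$; that assumption is standard but deserves to be stated, since it is the hinge of the whole reinterpretation step. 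No gaps beyond that.
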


Since $|\mathcal{U'}|=O((r+1)^rk^r)$ (by Algorithm \ref{SetMembershipKernel}), this reduction property allows us to state:

\begin{theorem}\label{VertexMembershipKernel}
$\mathcal{H}$-Packing with $t$-Membership has a problem kernel with $O((r+1)^r k^{r})$ vertices.
\end{theorem}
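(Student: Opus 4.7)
The plan is simply to chain together the machinery already developed in this subsection. Given an instance $(G, k)$ of $\mathcal{H}$-Packing with $t$-Membership, first I would enumerate $\mathcal{H}_G$ in polynomial time (since $r$ is constant), apply Reduction Rule \ref{cleanupRepeated} to eliminate duplicates with the same vertex set, and then build the instance $(\mathcal{U}, \mathcal{S}, k)$ of $r$-Set Packing with $t$-Membership via Transformation \ref{VertexPackingToOSP}. By the equivalence lemma just above (that $G$ has a $(k,r,t)$-$\mathcal{H}$-membership iff $\mathcal{S}$ has a $(k,r,t)$-set membership), this is an equivalent instance with the same parameter $k$.

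Next I would run Algorithm \ref{SetMembershipKernel} on $(\mathcal{U}, \mathcal{S}, k)$. By the theorem proved in Section~\ref{membershipSets}, this produces an equivalent reduced instance $(\mathcal{U'}, \mathcal{S'}, k)$ with $|\mathcal{U'}| = O((r+1)^r k^r)$ elements, and this computation runs in polynomial time (for constant $r$ and $t$).

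Finally, I would apply Transformation \ref{FromUniverseToGraph} to produce $G' = G[\mathcal{U'}]$. By the preceding lemma, $G'$ has a $(k,r,t)$-$\mathcal{H}$-membership iff $\mathcal{S'}$ has a $(k,r,t)$-set membership, so by transitivity $(G', k)$ is equivalent to the original $(G, k)$. Since $V(G') = \mathcal{U'}$, we obtain $|V(G')| = O((r+1)^r k^r)$, yielding the claimed kernel bound. Each of the three steps runs in polynomial time, so the overall kernelization is polynomial, completing the proof.

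I do not anticipate a real obstacle here: every ingredient (the two transformations, the set-packing kernel, and the two equivalence lemmas) has already been established earlier in the section, so the proof reduces to verifying that they compose correctly and that the final vertex bound matches the universe bound from Algorithm~\ref{SetMembershipKernel}. The only minor point to mention explicitly is that Reduction Rule~\ref{cleanupRepeated} is safe by Lemma~\ref{SameVerticesMembership}, so nothing is lost when passing from $\mathcal{H}_G$ to the set system $\mathcal{S}$.
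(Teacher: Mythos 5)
Your proposal is correct and follows essentially the same route as the paper: apply Reduction Rule~\ref{cleanupRepeated} (justified by Lemma~\ref{SameVerticesMembership}), convert to $r$-Set Packing with $t$-Membership via Transformation~\ref{VertexPackingToOSP}, kernelize with Algorithm~\ref{SetMembershipKernel}, and reinterpret via Transformation~\ref{FromUniverseToGraph}, using the two equivalence lemmas to conclude $|V(G')|=|\mathcal{U'}|=O((r+1)^r k^r)$. This is precisely the chain of steps the paper uses (packaged there as an instantiation of Algorithm~\ref{GenericKernelization}), so nothing further is needed.
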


The induced version of the problem seeks for at least $k$ induced $\mathcal{H}$-subgraphs in $G$ where each vertex in $V(G)$ is contained in at most $t$ of these $\mathcal{H}$-subgraphs. To achieve a problem kernel, we redefine $\mathcal{S}$ in Transformation \ref{VertexPackingToOSP}, adding a set $S=V(H)$ per each each \textbf{induced} $\mathcal{H}$-subgraph $H$ of $G$.

\begin{theorem}
Induced-$\mathcal{H}$-Packing with $t$-Membership has a problem kernel with $O((r+1)^r k^{r})$ vertices.
\end{theorem}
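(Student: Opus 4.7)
The plan is to mirror the proof of Theorem~\ref{VertexMembershipKernel}, substituting induced $\mathcal{H}$-subgraphs for arbitrary ones throughout. First, I would adapt Transformation~\ref{VertexPackingToOSP} by letting $\mathcal{S}$ contain a set $S=V(H)$ for each \emph{induced} $\mathcal{H}$-subgraph $H$ of $G$, keeping $\mathcal{U}=V(G)$ and $r=r(\mathcal{H})$. A pleasant side effect of restricting to induced subgraphs is that Reduction Rule~\ref{cleanupRepeated} becomes automatic: two induced $\mathcal{H}$-subgraphs with identical vertex sets must coincide (they both equal $G[V(H_i)]$), so each vertex subset contributes at most one set to $\mathcal{S}$. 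The size bounds $|\mathcal{U}|=O(n)$, $|\mathcal{S}|=O(|\mathcal{H}|n^{r(\mathcal{H})})$, and maximum set size $r$ carry over unchanged.

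Next, I would establish the equivalence: $G$ has a $(k,r,t)$-induced-$\mathcal{H}$-membership if and only if $\mathcal{S}$ has a $(k,r,t)$-set membership. The forward direction maps each induced $\mathcal{H}$-subgraph $H_i$ in the membership to $S_i=V(H_i)\in\mathcal{S}$; distinctness of the $V(H_i)$ is immediate from the induced version of Lemma~\ref{SameVerticesMembership}, and the bound on vertex occupancy transfers verbatim. For the reverse direction, each $S_i$ in a $(k,r,t)$-set membership of $\mathcal{S}$ by construction corresponds to a unique induced $\mathcal{H}$-subgraph of $G$, namely $G[S_i]$; collecting these yields the required induced packing.

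Then, applying Algorithm~\ref{SetMembershipKernel} to $(\mathcal{U},\mathcal{S},k)$ produces a reduced universe $\mathcal{U}'$ with $|\mathcal{U}'|=O((r+1)^r k^r)$, and Transformation~\ref{FromUniverseToGraph} yields $G'=G[\mathcal{U}']$ on the same vertex bound. The step I expect to require the most care is verifying the final equivalence between the reduced graph $G'$ and the reduced set system $\mathcal{S}'$. The key observation is that induced subgraphs are preserved under further restriction: for any $V'\subseteq \mathcal{U}'$, one has $G'[V']=G[\mathcal{U}'][V']=G[V']$. Hence a vertex subset $V'\subseteq\mathcal{U}'$ induces an $\mathcal{H}$-subgraph in $G'$ exactly when it does so in $G$, which aligns $\mathcal{S}'$ (as reinterpreted in $G'$) with the induced $\mathcal{H}$-subgraphs of $G'$. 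Combined with Lemma~\ref{kernelMembershipSet}, this gives the desired equivalence between the original and reduced instances, and the $O((r+1)^r k^r)$ vertex bound follows immediately from the size of $\mathcal{U}'$.
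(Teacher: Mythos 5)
Your proposal is correct and follows essentially the same route as the paper, which simply redefines Transformation~\ref{VertexPackingToOSP} to use induced $\mathcal{H}$-subgraphs and reuses the machinery of Theorem~\ref{VertexMembershipKernel}. Your additional observations --- that Reduction Rule~\ref{cleanupRepeated} becomes vacuous for induced subgraphs and that $G'[V']=G[V']$ for $V'\subseteq\mathcal{U}'$ --- are exactly the (unstated) reasons the paper's one-line adaptation goes through.
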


Given that all our kernelization algorithms for all our graph problems with $t$-Membership and $t$-Overlap use a similar methodology, we will provide a common kernelization algorithm that can be used for all of them. This algorithm besides receiving as input $G$, $\mathcal{H}$ and $k$, it also has four additional parameters: ``Problem ${\pi}$'', [Transformation], [KernelAlg], and [Reinterpretation]. The last three parameters are placeholders for specific routines that vary according to the type of problem that we are reducing. For example, the kernelization algorithm that we described above corresponds to Algorithm \ref{GenericKernelization} with input $G$,$\mathcal{H}$,$k$, ``$r$-SetPacking with $t$-Membership'', Transformation \ref{VertexPackingToOSP}, Algorithm \ref{SetMembershipKernel}, and Transformation \ref{FromUniverseToGraph}. Notice that the parameter $k$ stays the same, as it is not changed in any of our routines. 

\begin{algorithm}[tbh] 
  \caption{Kernelization Algorithm - $\mathcal{H}$-Packing with Membership/Overlap} 
	\textbf{Input:} $G$, $\mathcal{H}$, $k$, ``Problem ${\pi}$'', [Transformation], [KernelAlg], and [Reinterpretation]. \\
	\textbf{Output:} $G'$
	\begin{algorithmic}[1] 	
	
		\STATE{Construct an instance of [Problem ${\pi}$] (a universe $\mathcal{U}$ and a collection $\mathcal{S}$) by running [Transformation($G$,$\mathcal{H}$)]}
		
		\STATE{Obtain reduced $\mathcal{U'}$ and $\mathcal{S'}$ by running [KernelAlg($\mathcal{U}$,$\mathcal{S}$,$k$)] (i.e., a kernelization algorithm of [Problem ${\pi}$])}
	  
		\STATE{Reinterpret $G'$ by running [Reinterpretation$(G,\mathcal{U'})$]}
				
		\STATE{Return $G'$}
		
		\end{algorithmic}\label{GenericKernelization}
\end{algorithm} 

\subsubsection{$\mathcal{H}$-subgraphs with identical set of vertices - Vertex-Membership.}

We discuss a more flexible definition of the $\mathcal{H}$-Packing with $t$-Membership problem and
allow $\mathcal{H}$-subgraphs in a $(k,r,t)$-$\mathcal{H}$-membership with identical sets of vertices (but require different sets of edges). This is equivalent to remove the condition $V(H_i) \neq V(H_j)$ from  the $\mathcal{H}$-Packing with $t$-Membership problem definition. Let us indicate this variant by adding ISV (identical set of vertices) to the problem name. 
We present a kernelization algorithm through some PPT reduction for this version, as well.

Take for example the input graph of Figure \ref{membership2} and suppose that $\mathcal{H}=\{C_4,K_4\}$ and $t=2$. With the problem definition that asks for $V(H_i) \neq V(H_j)$ in a $(k,r,t)$-$\mathcal{H}$-membership, either the $K_4$ with vertices $\{b,e,c,f\}$ or one of the two $C_4$'s composed of the vertices $\{b,e,c,f\}$ will be part of a solution. However, it may exist the case that we still want to have both $\mathcal{H}$-subgraphs in a solution (up to $t$ times because the overlap condition). To achieve the flexibility of having $\mathcal{H}$-subgraphs with identical set of vertices in a $(k,r,t)$-$\mathcal{H}$-membership, we transform an instance of $\mathcal{H}$-Packing with $t$-Membership to an instance of the ($r+1$)-Set Packing problem (element disjoint) rather to an instance of $r$-Set Packing with $t$-Membership.

\begin{figure}[htb]     
     \centerline{{\includegraphics[scale=0.60]{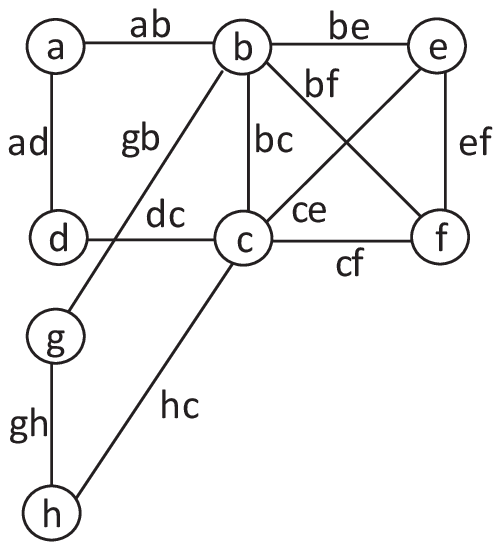}}}
     \caption{An input graph $G$ for the $\mathcal{H}$-Packing with $t$-Membership problem.} \label{membership2}
\end{figure}

Recall that $\mathcal{H}_{G}$ is the set of all $\mathcal{H}$-subgraphs in $G$. Notice that for this transformation we do not apply Reduction Rule \ref{cleanupRepeated} to $\mathcal{H}_G$. Next, we create an instance for the $(r+1)$-Set Packing problem, (a universe $\mathcal{U}$ and a collection $\mathcal{S}$) as follows.

Let $\mathcal{E}$ defines a collection of sets of edges each of which induces a subgraph isomorphic to some $H \in \mathcal{H}$. That is, for each $E \in \mathcal{E}$, there is an $H \subseteq G[V(E)]$ that is an $\mathcal{H}$-subgraph. In this way, for each $E \in\mathcal{E}$, $|E| \leq r^2$, and $|\mathcal{E}| \leq n^{r^2}$. Slightly abusing our  notation, let $V(E)$ denote the set of vertices corresponding to the end-points of the edges $E$.  

\begin{transf}\label{AltransfForVertexMembership}
\textbf{Input:} $G$, $\mathcal{H}$;  \textbf{Output:} $\mathcal{U}$, $\mathcal{S}$, and $r$.

The universe $\mathcal{U}$ equals $(V(G)\times\{1,\dots,t\}) \cup \mathcal{E}$.

Let $r=r(\mathcal{H})$.

The collection $\mathcal{S}$ contains all subsets of $\mathcal{U}$ each with at most $r+1$ elements $\{\{(v_1,j_1)$,$\dots,$$(v_i,j_i)$,$\dots$,$(v_{r'},j_{r'}), E\}$ $\mid E \in \mathcal{E}$, $V(E)= \{v_1,\dots,v_{r'}\}$, for each $1 \leq j_i \leq t$ and $1 \leq i \leq r'$, where $r' \leq r\}$.
\end{transf}

The size of $\mathcal{U}$ is upper-bounded by $|\mathcal{U}|=|V(G)| \times t + |\mathcal{E}|$ $\leq$ $tn + n^{m(\mathcal{H})} =O(n^{r^2})$. 
Each set in $\mathcal{S}$ has size at most $r+1$ (where $r=r(\mathcal{H})$) and $|\mathcal{S}| \leq t^{r} |\mathcal{H}| n^{m(\mathcal{H})}$.

Intuitively, each set in $\mathcal{S}$ would represent an $\mathcal{H}$-subgraph in $G$. Since each vertex in $V(G)$ can be in at most $t$ $\mathcal{H}$-subgraphs, we replicate  each vertex $t$ times. To allow $\mathcal{H}$-subgraphs with the same set of vertices, we add the element $E$ to the sets. In this way, $\mathcal{H}$-subgraphs with identical set of vertices will have a different element $E$.

\begin{example}
Let us consider the graph of Figure \ref{membership2}, $\mathcal{H}=\{C_4,K_4\}$, $t=3$, and $k=3$. 

$\mathcal{E}=$$\{$$\{ab,ad,bc,dc\}$, $\{bg,bc,gh,hc\}$, $\{be,ef,cf,bc\}$, $\{be,ce,cf,bf\}$, $\{be,bc,ef,cf,bf,ce\}$ $\}$
and 

$\mathcal{U} =\{(a,1),(b,1),(c,1)$, $(d,1)$, $(e,1)$, $(f,1),(g,1),(h,1)$, $(a,2),(b,2),(c,2),$ $(d,2),$  $(e,2),(f,2),(g,2),(h,2)\}$  $\cup$ $\mathcal{E}$. 

Some sets of the collection $\mathcal{S}$ are 

$\{(a,1),$ $(b,1),$ $(c,1),$ $(d,1),$ $\{ab,ad,bc,bd\}\}$, 

$\{(a,1),$ $(b,1),$ $(c,1),$ $(d,2),$ $\{ab,ad,bc,bd\}\}$, 

$\{(b,1),$ $(c,1),$ $(e,1),$ $(f,1),$ $\{be,ef,cf,bc\}\}$, 

$\{(b,2),$ $(c,2),$ $(e,2),$ $(f,2),$ $\{be,ce,cf,bf\}\}$, 

$\{(b,3),$ $(c,3),$ $(e,3),$ $(f,3),$ $\{be,bc,ef,cf,bf,ce\}\}$.

A $(3,4+1,3)$-set packing is $\{$$\{(b,1),$ $(c,1),$ $(e,1),$ $(f,1),$ $\{be,ef,cf,bc\}\}$, 
$\{(b,2),$ $(c,2),$ $(a,1),$ $(d,1),$ $\{ab,ad,bc,bd\}\}$, and  
$\{(b,3),$ $(c,3),$ $(g,3),$ $(h,3),$ $\{be,bc,ef,cf,bf,ce\}\}$$\}$.

This corresponds to the $(3,4,3)$-$\mathcal{H}$-membership  $\{H_1=(\{b,c,e,f\}$, $\{be,ef,cf,bc\})$, $H_2=(\{a,b,c,d\}$, $\{ab,ad,bc,bd\})$ and  $H_3=(\{b,c,e,f\}$, $\{be,bc,ef,cf,bf,ce\})$. 
\end{example}

\begin{lemma}\label{membershipTransformationVertex}
$G$ has a $(k,r,t)$-$\mathcal{H}$-membership (ISV) if and only if $\mathcal{S}$ has a $(k,r+1,0)$-set packing.
\end{lemma}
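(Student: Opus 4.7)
The plan is to follow the blueprint of Lemma~\ref{transformationMembershipSet}, with the family $\mathcal{E}$ of $\mathcal{H}$-subgraph edge sets playing the role that $\mathcal{S}$ played there. The key new ingredient is that each set of $\mathcal{S}$ contains exactly one ``token'' $E\in\mathcal{E}$, and two $\mathcal{H}$-subgraphs sharing a vertex set but having different edge sets correspond to \emph{different} tokens; this is exactly what allows ISV packings (where subgraphs are required to differ in edges rather than in vertices) to be encoded as element-disjoint set packings.

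For the forward direction, given a $(k,r,t)$-$\mathcal{H}$-membership $\mathcal{K}=\{H_1,\dots,H_k\}$ (ISV), I would process the $H_i$ in order. For each $H_i$, set $E_i=E(H_i)\in\mathcal{E}$ and, for every $v\in V(H_i)$, let $l_{v,i}=|\{j<i\mid v\in V(H_j)\}|$. The $t$-membership bound guarantees $l_{v,i}\leq t-1$, so the pair $(v,l_{v,i}+1)$ lies in $V(G)\times\{1,\dots,t\}$. Form $S_i^T=\{(v,l_{v,i}+1)\mid v\in V(H_i)\}\cup\{E_i\}$, which is a set in $\mathcal{S}$ of size at most $r+1$ by Transformation~\ref{AltransfForVertexMembership}. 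Pairwise disjointness of the $S_i^T$'s follows from two observations: the tokens $E_i$ are pairwise distinct because distinct ISV subgraphs are distinguished precisely by their edge sets, and whenever two constructed sets share a vertex $v$, the strictly increasing counter $l_{v,\cdot}$ forces distinct ordered pairs $(v,\cdot)$.

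Conversely, suppose $\{S_1^T,\dots,S_k^T\}$ is pairwise element-disjoint in $\mathcal{S}$. Each $S_i^T$ contains exactly one token $E_i\in\mathcal{E}$, and disjointness forces $E_i\neq E_j$ for $i\neq j$. Define $H_i$ as the $\mathcal{H}$-subgraph of $G$ whose edge set is $E_i$ (with vertex set $V(E_i)$); these $k$ subgraphs are then distinct in the ISV sense. If some $v\in V(G)$ occurred in more than $t$ of the $H_i$'s, their corresponding sets would have to contain more than $t$ pairs of the form $(v,\cdot)$, but only $(v,1),\dots,(v,t)$ exist in $\mathcal{U}$, contradicting disjointness. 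Hence the $H_i$ form the required $(k,r,t)$-$\mathcal{H}$-membership.

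The main obstacle I expect is verifying that the token $E_i$ unambiguously recovers its $\mathcal{H}$-subgraph: this is immediate if every $H\in\mathcal{H}$ has no isolated vertices, so that $V(E_i)=V(H_i)$ is forced; otherwise the definition of $\mathcal{E}$ would need a mild enrichment to record vertex data alongside the edge set. Beyond this, the only technical care needed is to make the ordering-based counting argument deliver disjointness of the $(v,\cdot)$ pairs cleanly when many $\mathcal{H}$-subgraphs share the same vertex, but this is essentially the bookkeeping already carried out in Lemma~\ref{transformationMembershipSet}.
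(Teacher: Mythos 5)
Your proposal is correct and follows essentially the same route as the paper's proof: the forward direction assigns each vertex the ordered pair indexed by a prefix-count of earlier subgraphs containing it (bounded by $t-1$ by the membership condition), and the backward direction recovers the subgraphs from the distinct tokens $E_i$ and uses the pigeonhole bound of $t$ copies of each vertex in $\mathcal{U}$. Your side remark about isolated vertices is a fair observation about Transformation~\ref{AltransfForVertexMembership} itself (the paper implicitly identifies $V(H_i)$ with $V(E(H_i))$, recovering the vertex set from the ordered pairs in the set), but it does not change the argument.
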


\begin{proof}
We build a $(k,r+1,0)$-set packing $\mathcal{K_S}$ from a $(k,r,t)$-$\mathcal{H}$-membership (ISV) $\mathcal{K}$. 
For each $H_j \in \mathcal{K}$, we add a set $S_j$ with $|V(H_j)|+1 \leq r+1$ elements as follows.  
The element $E$ in $S_j$ corresponds to the set of edges $E(H_j)=\{e_1,\dots,e_{|E|}\}$. 
Note that $E(H_j) \in \mathcal{E} \subseteq \mathcal{U}$. For each vertex $v_i \in V(H_j)$ ($1 \leq i \leq |V(H_j)| \leq r$), we add an ordered pair $(v_i,l+1)$ to $S_j$, where $l$ corresponds to the number of $\mathcal{H}$-subgraphs in $\{H_1, \dots, H_{j-1}\} \subseteq \mathcal{K}$ that $v_i$ is contained in. Since each vertex is contained in at most $t$ $\mathcal{H}$-subgraphs, $0 \leq l \leq t-1$ and $(v_i,l+1)$ always exists in $\mathcal{U}$. By our construction of $\mathcal{S}$,  $S_j \in \mathcal{S}$. It remains to show that the sets in $\mathcal{K_S}$ are pairwise disjoint. Since $E(H_i) \neq E(H_j)$ for each pair $H_i,H_j \in \mathcal{K}$, none of the sets in $\mathcal{K_S}$ will share the same element $E$.  In addition, no pair $S_i,S_j$ shares an ordered pair $(v,l)$ for some $v \in V(G)$ and $1 \leq l \leq t$. If $S_i$ and $S_j$ share a pair $(v,l)$ this would imply that there are two $\mathcal{H}$-subgraphs $H_i$ and $H_j$ that share the same vertex $v$.  Without lost of generality, assume that $H_i$ appears before $H_j$ in $\mathcal{K}$. If $(v,l)$ is contained in both $S_i$ and $S_j$  then $v$ is a member of $l-1$ $\mathcal{H}$-subgraphs from both $\{\dots, H_i, \dots \}$ and $\{\dots, H_i, \dots, H_j \}$, a contradiction, since both $H_i$ and $H_j$ contain $v$.

We construct a $(k,r,t)$-$\mathcal{H}$-membership (ISV) $\mathcal{K}$ using a $(k,r+1,0)$-set packing $\mathcal{K_S}$. 
Each set $S_i \in \mathcal{K}_S$ has an element $E_{S_i} \in \mathcal{E}$ which corresponds to a set of edges in $E(G)$ that forms a  subgraph isomorphic to $H \in \mathcal{H}$ in $G$. In addition, we take the vertex $v_l$ that appears in each $(v_l,j) \in S_i$ (for $1 \leq l \leq r$ and $j \in [1...t]$) and we add it to a set $V_{S_i}$. In this way, for each set $S_i \in \mathcal{K}_S$, we add to $\mathcal{K}$ an $\mathcal{H}$-subgraph $H_{S_i}$ with a set of vertices $V_{S_i}$ and a set of edges $E_{S_i}$, i.e., $H_{S_i}=(V_{S_i},E_{S_i})$. Since the sets in $\mathcal{K}_S$ are pairwise disjoint, $E(H_{S_i}) \neq E(H_{S_j})$.
Now, we need to show that each vertex of $V(G)$ is a member of at most $t$ $\mathcal{H}$-subgraphs in $\mathcal{K}$. Assume otherwise by contradiction. If two $\mathcal{H}$-subgraphs in $\mathcal{K}$ share a vertex $v$ then there are two sets in $\mathcal{K}_{S}$ where each one has a pair $(v,i)$ and $(v,j)$, respectively. Since the sets in $\mathcal{K}_S$ are disjoint then $i \neq j$. If a vertex $v \in V(G)$ is a member of more than $t$ $\mathcal{H}$-subgraphs in $\mathcal{K}$ then there are at least $t+1$ ordered pairs $(v,1),\dots,(v,t+1)$ contained in $t+1$ different sets in $\mathcal{K}_S$. However, by our construction of $\mathcal{U}$, there are at most $t$ ordered pairs that contain $v$. \qed 
\end{proof}

Given a reduced universe $\mathcal{U}' \subseteq \mathcal{U}$, we construct a graph $G'$ as follows.

\begin{transf}\label{FromUniverseToGraph_2}
\textbf{Input:} $G$, $\mathcal{U'}$; \textbf{Output:} $G'$

We take each vertex $v$ that appears in each $(v,i) \in\mathcal{U}'$ and the vertices in each $E \in \mathcal{E}'$ and consider the graph $G'$ that is induced by all these in $G$.
\end{transf}

\begin{lemma}
$G'$ has a $(k,r,t)$-$\mathcal{H}$-membership (ISV) if and only if $\mathcal{S'}$ has a $(k,r+1,0)$-set packing.
\end{lemma}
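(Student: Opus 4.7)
The plan is to establish the claim by chaining together the two correspondences already available: Lemma~\ref{membershipTransformationVertex} relating graphs to their set-packing transforms (applied in both directions, to $G$ and in essence to $G'$) and Lemma~\ref{kernelMembershipSet}, which says that the $(r{+}1)$-Set Packing kernelization preserves answers between $\mathcal{S}$ and $\mathcal{S}'$. The crucial structural fact that makes both directions work is that Transformation~\ref{FromUniverseToGraph_2} builds $G'$ as an \emph{induced} subgraph of $G$ whose vertex set already contains all endpoints of every edge set $E\in\mathcal{E}'$, so edges mentioned in $\mathcal{S}'$ really live in $G'$.

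For the forward direction, I would argue that any $(k,r,t)$-$\mathcal{H}$-membership (ISV) of $G'$ is trivially also one of $G$ (since $G'\subseteq G$ and an $\mathcal{H}$-subgraph of $G'$ remains an $\mathcal{H}$-subgraph of $G$). Applying Lemma~\ref{membershipTransformationVertex} produces a $(k,r{+}1,0)$-set packing in $\mathcal{S}$, and then Lemma~\ref{kernelMembershipSet} transports it to $\mathcal{S}'$.

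For the converse, suppose $\mathcal{S}'$ has a $(k,r{+}1,0)$-set packing $\mathcal{K}_S$. Each set in $\mathcal{K}_S$ has the shape $\{(v_1,j_1),\ldots,(v_{r'},j_{r'}),E\}$ with $E\in\mathcal{E}'$ and $V(E)=\{v_1,\ldots,v_{r'}\}$. By the definition of $\mathcal{U}'$ after the kernelization and by the construction of $G'$, every such vertex $v_i$ lies in $V(G')$, and because $G'$ is the induced subgraph of $G$ on $V(G')$, the edges of $E$ all belong to $E(G')$. Hence the subgraph on $V(E)$ formed by $E$ is an $\mathcal{H}$-subgraph of $G'$. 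I would then reuse the bookkeeping of the proof of Lemma~\ref{membershipTransformationVertex} verbatim: pairs $(v,j)$ in distinct sets force distinct $\mathcal{H}$-subgraphs through a common vertex at different ``occurrence counts,'' so every vertex of $G'$ appears in at most $t$ of the chosen $\mathcal{H}$-subgraphs, while distinct edge-set elements $E$ guarantee the ISV condition $E(H_i)\neq E(H_j)$.

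The only genuine subtlety is the convention that Transformation~\ref{FromUniverseToGraph_2} adds back all endpoints of every $E\in\mathcal{E}'$, not just vertices whose copies survive in $\mathcal{U}'$; without this the backward direction could fail because some endpoint needed to realize an $\mathcal{H}$-subgraph might have been kernelized away. I expect this to be the step that deserves explicit comment, since everything else is either an invocation of Lemma~\ref{membershipTransformationVertex} and Lemma~\ref{kernelMembershipSet} or a direct re-use of the disjointness argument from the proof of Lemma~\ref{membershipTransformationVertex}.
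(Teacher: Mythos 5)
The paper states this lemma without proof, treating it as an immediate consequence of Lemma~\ref{membershipTransformationVertex}, the kernel-preservation guarantee of the $(r{+}1)$-Set Packing kernelization, and the construction of $G'$ in Transformation~\ref{FromUniverseToGraph_2}; your argument is exactly that intended chain, spelled out correctly in both directions. Your observation about why the backward direction is sound --- every surviving set's pairs $(v_i,j_i)$ and edge-set element $E$ lie in $\mathcal{U}'$, so Transformation~\ref{FromUniverseToGraph_2} guarantees all of $V(E)$ survives and the induced construction keeps the edges of $E$ in $G'$ --- is the one point that genuinely needed checking, and you handle it properly.
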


Our kernelization algorithm for $\mathcal{H}$-Packing with $t$-Membership problem (ISV) corresponds to Algorithm \ref{GenericKernelization} with input: $G$, $\mathcal{H}$, $k$, ``$(r+1)$-Set Packing'', Transformation \ref{AltransfForVertexMembership}, kernelization algorithm \cite{Faisal10}, and Transformation \ref{FromUniverseToGraph_2}. Given that the universe $\mathcal{U'}$ returned by algorithm \cite{Faisal10} has at most 
$2r!((r+1)k-1)^{r}$ elements, then by our construction, $G'$ has at most $2r!((r+1)k-1)^{r}$ vertices. This implies that our kernelization algorithm reduces the $\mathcal{H}$-Packing with $t$-Membership problem (ISV) to a kernel with $O((r+1)^r k^{r})$ vertices.

\begin{theorem}
The $\mathcal{H}$-Packing with $t$-Membership problem (ISV) possess a kernel with $O((r+1)^r k^{r})$ vertices.
\end{theorem}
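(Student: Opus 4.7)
The plan is to simply assemble the pieces that have already been prepared in this subsection and to verify the final vertex count. Concretely, I would invoke Algorithm~\ref{GenericKernelization} with parameters $(G,\mathcal{H},k)$, ``$(r+1)$-Set Packing'', Transformation~\ref{AltransfForVertexMembership}, the kernelization algorithm of~\cite{Faisal10}, and Transformation~\ref{FromUniverseToGraph_2}. Correctness of the whole pipeline is a three-step chain: Lemma~\ref{membershipTransformationVertex} equates the original instance $(G,\mathcal{H},k)$ with the $(r+1)$-Set Packing instance $(\mathcal{U},\mathcal{S},k)$ produced by Transformation~\ref{AltransfForVertexMembership}; the kernelization of~\cite{Faisal10} produces an equivalent reduced instance $(\mathcal{U}',\mathcal{S}',k)$ with $|\mathcal{U}'|\leq 2r!((r+1)k-1)^{r}$; and the lemma immediately preceding the theorem closes the loop by asserting that $G'$, obtained via Transformation~\ref{FromUniverseToGraph_2}, has a $(k,r,t)$-$\mathcal{H}$-membership (ISV) iff $\mathcal{S}'$ has a $(k,r+1,0)$-set packing.

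It then remains to bound $|V(G')|$. By the definition of Transformation~\ref{FromUniverseToGraph_2}, $V(G')$ is the union of (i) all $v\in V(G)$ such that $(v,i)\in\mathcal{U}'$ for some $1\leq i\leq t$ and (ii) all endpoints of edges in $E$ for each edge-set element $E\in\mathcal{E}'\cap\mathcal{U}'$. The first set has cardinality at most $|\mathcal{U}'|$, while each $E\in\mathcal{E}'\cap\mathcal{U}'$ contributes at most $r$ vertices since $E$ consists of the edges of an $\mathcal{H}$-subgraph whose vertex set has size at most $r=r(\mathcal{H})$. Therefore
\[
|V(G')| \;\leq\; (r+1)\cdot |\mathcal{U}'| \;\leq\; (r+1)\cdot 2r!\bigl((r+1)k-1\bigr)^{r} \;=\; O\bigl((r+1)^{r}k^{r}\bigr),
\]
since $r$ is a constant. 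The running time is polynomial in $n$ because each of the three routines is polynomial: Transformation~\ref{AltransfForVertexMembership} enumerates $\mathcal{H}$-subgraphs in time $O(n^{r})$, the algorithm of~\cite{Faisal10} runs in polynomial time, and Transformation~\ref{FromUniverseToGraph_2} is a single pass over $\mathcal{U}'$.

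The main (and essentially only) subtlety is making sure that the multi-representation of the same vertex set by different edge-sets in Transformation~\ref{AltransfForVertexMembership} does not blow up the vertex count of $G'$. The key observation is that this blow-up only increases $|\mathcal{S}|$ (via the $t^{r}|\mathcal{H}|n^{m(\mathcal{H})}$ bound) and $|\mathcal{U}|$ (via the $|\mathcal{E}|$ summand) during the construction phase, but it is the reduced universe $\mathcal{U}'$ produced by~\cite{Faisal10}, and not the original $\mathcal{U}$, that determines $|V(G')|$. Hence the $O((r+1)^{r}k^{r})$ bound on $|\mathcal{U}'|$ transfers to $V(G')$ up to the constant factor $(r+1)$, yielding the claimed kernel size.
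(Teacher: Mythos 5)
Your proposal is correct and follows essentially the same route as the paper: instantiate Algorithm~\ref{GenericKernelization} with Transformation~\ref{AltransfForVertexMembership}, the kernelization of~\cite{Faisal10}, and Transformation~\ref{FromUniverseToGraph_2}, chain the equivalence lemmas, and bound $|V(G')|$ by the size of the reduced universe. You are in fact slightly more careful than the paper, which asserts $|V(G')|\leq|\mathcal{U}'|$ directly, whereas you correctly account for the up-to-$r$ vertices contributed by each edge-set element $E\in\mathcal{E}'\cap\mathcal{U}'$, giving the factor $(r+1)$ that is absorbed into the $O((r+1)^{r}k^{r})$ bound.
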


\subsection{Packing Graphs with $t$-Membership - Edge-Membership}\label{EdgeMembershipSection}


    



Similarly as with vertex-membership, we will reduce the $\mathcal{H}$-Packing with $t$-Edge Membership Problem to the $r$-Set Packing with $t$-Membership problem.

For the edge-membership version, we assume that each $H \in \mathcal{H}$ has no isolated vertices. Otherwise, we can replace each $H$ that has a set  $H_I$ of isolated vertices with a graph $H'=H\backslash H_I$. Let $\mathcal{H}'$ denote the family of subgraphs with these modified graphs. Furthermore, let $I$ denote the set of isolated vertices contained in $G$.

\begin{lemma}\label{GNoIsolated}
G has a $(k,r,t)$-edge-$\mathcal{H}$-membership if and only if $G \backslash I$ has a $(k,r,t)$-edge-$\mathcal{H}'$-membership.
\end{lemma}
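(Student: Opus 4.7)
My plan is to prove the biconditional by a direct surgery on the packing in each direction, exploiting the fact that isolated vertices carry no edges and thus do not interact with the edge-membership condition or the distinctness-of-edge-sets requirement.

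For the $(\Rightarrow)$ direction, starting from a $(k,r,t)$-edge-$\mathcal{H}$-membership $\mathcal{K}=\{H_1,\dots,H_k\}$ of $G$, I would define $H_i'$ by removing from $H_i$ the vertices that correspond, under the chosen isomorphism $H_i\cong H\in\mathcal{H}$, to the isolated vertices $H_I$ of $H$. Then $H_i'\cong H\setminus H_I = H'\in\mathcal{H}'$, and crucially $E(H_i')=E(H_i)$. Every vertex remaining in $V(H_i')$ is incident to some edge of $H_i\subseteq G$, hence non-isolated in $G$, so $V(H_i')\subseteq V(G)\setminus I$ and therefore $H_i'\subseteq G\setminus I$. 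Since edge sets are unchanged by this trimming, the pairwise distinctness $E(H_i')\neq E(H_j')$ and the edge-membership bound $t$ transfer verbatim from $\mathcal{K}$, so $\{H_1',\dots,H_k'\}$ is the required edge-$\mathcal{H}'$-membership of $G\setminus I$.

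For the $(\Leftarrow)$ direction, given a $(k,r,t)$-edge-$\mathcal{H}'$-membership $\mathcal{K}'=\{H_1',\dots,H_k'\}$ of $G\setminus I$, I would reconstruct each $H_i$ by padding $H_i'$ with isolated vertices to recover an $\mathcal{H}$-subgraph. Specifically, if $H_i'\cong H'=H\setminus H_I$ with $s=|H_I|$, pick $s$ distinct vertices of $V(G)\setminus V(H_i')$, preferentially from $I$ (which is a ready supply of unused isolated vertices), and set $H_i=(V(H_i')\cup P_i, E(H_i'))$ where $P_i$ is the set of chosen padding vertices. Then $H_i\cong H\in\mathcal{H}$, the identity $E(H_i)=E(H_i')$ preserves edge-set distinctness and the edge-membership bound $t$, and $E(H_i)\subseteq E(G\setminus I)\subseteq E(G)$ certifies $H_i\subseteq G$.

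The only delicate point is the availability of padding vertices in the reverse direction: we need $|V(G)\setminus V(H_i')|\geq s$, equivalently $|V(G)|\geq |V(H)|$, for each $H\in\mathcal{H}$ hit by the isomorphism. Because each $|V(H)|\leq r$ is a constant and $|V(H_i')|\leq r$, this is a mild requirement easily secured as a preliminary step: if necessary, one may prepend $O(r)$ dummy isolated vertices to $G$, which enlarges $I$ but does not alter the edge-membership status on either side of the biconditional (no new edges are introduced, and the $\mathcal{H}'$-subgraphs of $G\setminus I$ are unaffected since the dummies land in $I$). Modulo this cosmetic adjustment, the two constructions are inverse to each other, establishing the equivalence.
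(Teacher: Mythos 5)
The paper states this lemma without any proof, so there is nothing to compare against; your argument supplies the evidently intended justification, and its core is correct: trimming the isolated-vertex part of each $H_i$ (forward direction) or padding each $H_i'$ with arbitrary extra vertices (reverse direction) leaves the edge set untouched, so the distinctness condition $E(H_i)\neq E(H_j)$ and the bound of $t$ subgraphs per edge transfer verbatim, and the observation that every vertex of $H_i'$ is incident to an edge and hence lies outside $I$ is exactly what places $H_i'$ inside $G\setminus I$.

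The one place where your write-up is not quite right is the patch for the padding-availability issue. You correctly identify that the reverse direction needs $|V(G)|\geq|V(H)|$, but your claim that prepending dummy isolated vertices to $G$ ``does not alter the edge-membership status on either side'' fails on the left side: if $G$ is a single edge and $H$ is an edge plus many isolated vertices, then $G$ has no $\mathcal{H}$-subgraph at all, while $G$ augmented with enough isolated vertices does, so the augmentation flips a no-instance into a yes-instance and is not cosmetic. The cleaner way to dispose of this degenerate case is to note that it can only occur when $|V(G)|<|V(H)|\leq r$, i.e., when the instance has constant size and can be decided outright (or, equivalently, that the lemma is invoked under the standing assumption $|V(G)|\geq r$). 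With that substitution your proof is complete.
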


Henceforth, for the remainder of this Section we assume that each $H \in \mathcal{H}$ and hence $G$ as well do not contain isolated vertices. 

We create next an instance of the $r$-Set Packing with $t$-Membership problem using an instance of the $\mathcal{H}$-Packing with $t$-Edge Membership Problem.

\begin{transf}\label{edgePackingToOSP}
\textbf{Input:} $G$, $\mathcal{H}$;  \textbf{Output:} $\mathcal{U}$, $\mathcal{S}$ and $r$.

The universe $\mathcal{U}$ equals to $E(G)$.

There is a set $S$ in $\mathcal{S}$ for each $\mathcal{H}$-subgraph $H$ in $\mathcal{H}_G$, and $S=E(H)$. 

Furthermore, let $r=m(\mathcal{H})$.
\end{transf}

In this way, $|\mathcal{U}|=O(n^2)$ and $|\mathcal{S}| = |\mathcal{H}_G| = O(|\mathcal{H}|n^{2r})$.




\begin{lemma}
$G$ has a $(k,r,t)$-edge-$\mathcal{H}$-membership if and only if $\mathcal{S}$ has a $(k,r,t)$-set membership.
\end{lemma}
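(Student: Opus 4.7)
The plan is to set up a bijective correspondence between $\mathcal{H}$-subgraphs of $G$ and the sets in $\mathcal{S}$ produced by Transformation~\ref{edgePackingToOSP}, namely $H \mapsto E(H)$. Thanks to Lemma~\ref{GNoIsolated}, we may assume that no $H \in \mathcal{H}$ contains isolated vertices, so any $\mathcal{H}$-subgraph $H$ of $G$ is determined by $E(H)$ via $V(H) = V(E(H))$ (the set of endpoints of the edges). In particular, the map $H \mapsto E(H)$ is injective on $\mathcal{H}_G$, and each set of $\mathcal{S}$ singles out a unique $\mathcal{H}$-subgraph of $G$. The whole lemma then reduces to carrying over the "at most $t$'' condition from edges of $G$ (in the graph problem) to elements of $\mathcal{U} = E(G)$ (in the set problem).

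For the forward direction, I would take a $(k,r,t)$-edge-$\mathcal{H}$-membership $\mathcal{K} = \{H_1, \ldots, H_k\}$ of $G$ and define $\mathcal{K}_S = \{E(H_1), \ldots, E(H_k)\}$. By Transformation~\ref{edgePackingToOSP}, each $E(H_i) \in \mathcal{S}$ and has size at most $m(\mathcal{H}) = r$. Since $E(H_i) \neq E(H_j)$ for $i \neq j$ (from the definition of an edge-$\mathcal{H}$-membership), we get $k$ distinct sets. For any $e \in \mathcal{U} = E(G)$, the number of sets of $\mathcal{K}_S$ that contain $e$ equals the number of $H_i \in \mathcal{K}$ with $e \in E(H_i)$, which is at most $t$ by the edge-membership condition. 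Hence $\mathcal{K}_S$ is a $(k,r,t)$-set membership.

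For the backward direction, take a $(k,r,t)$-set membership $\mathcal{K}_S = \{S_1, \ldots, S_k\} \subseteq \mathcal{S}$. By the construction of $\mathcal{S}$, each $S_i$ equals $E(H_i)$ for a unique $\mathcal{H}$-subgraph $H_i$ of $G$ (uniqueness thanks to the no-isolated-vertex assumption). Set $\mathcal{K} = \{H_1, \ldots, H_k\}$. Because the $S_i$ are distinct, so are their associated edge sets, so $E(H_i) \neq E(H_j)$ for $i \neq j$. For any edge $e \in E(G) = \mathcal{U}$, the number of $H_i$'s containing $e$ equals the number of sets $S_i$ containing $e$, which is at most $t$. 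Therefore $\mathcal{K}$ is a $(k,r,t)$-edge-$\mathcal{H}$-membership of $G$.

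The content of the argument is entirely bookkeeping; the only subtle point is the well-definedness of the inverse map $S \mapsto H$, i.e., that distinct $\mathcal{H}$-subgraphs always produce distinct sets. This is precisely where Lemma~\ref{GNoIsolated} and the standing no-isolated-vertex assumption are needed: without them, two $\mathcal{H}$-subgraphs could share an edge set (differing only in isolated vertices), breaking injectivity and forcing a more elaborate choice procedure. Given that assumption, both directions are immediate.
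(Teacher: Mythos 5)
Your proof is correct and follows essentially the same route as the paper: map each $\mathcal{H}$-subgraph $H_i$ to the set $E(H_i)$ in one direction, and recover $H_i = (V(S_i), S_i)$ from each set $S_i$ in the other, with the membership bound transferring directly since $\mathcal{U} = E(G)$. Your explicit remark on why the no-isolated-vertices assumption makes the inverse map well defined is a useful clarification that the paper leaves implicit, but it is not a different argument.
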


\begin{proof}
We build a $(k,r,t)$-set membership $\mathcal{K}_S$ from a $(k,r,t)$-edge-$\mathcal{H}$-membership $\mathcal{K}$ of $G$. For each $\mathcal{H}$-subgraph $H_i$ in $\mathcal{K}$, we add a set $S_i = E(H_i)$ to $\mathcal{K}_S$. 
Since every edge of $E(G)$ is contained in at most $t$ $\mathcal{H}$-subgraphs of $\mathcal{K}$, every element of $\mathcal{U}$ will be contained in at most $t$ $\mathcal{H}$-subgraphs of $\mathcal{K}_S$.

Given a $(k,r,t)$-set membership $\mathcal{K}_S$ we build a $(k,r,t)$-edge-$\mathcal{H}$-membership $\mathcal{K}$ of $G$. For each set $S_i$ in $\mathcal{K}_S$ we add an $\mathcal{H}$-subgraph $H_i$ with set of vertices $V(S_i)$ and the set of edges $S_i$. Each edge of $G$ will be contained in at most $t$ $\mathcal{H}$-subgraphs of $\mathcal{K}$. Otherwise, $\mathcal{K}_S$ would not be a $(k,r,t)$-set membership. \qed
\end{proof}

We obtain a reduced universe $\mathcal{U'}$ and a collection $\mathcal{S'}$ for the constructed instance of the $r$-Set Packing with $t$-Membership with Algorithm \ref{SetMembershipKernel}. The reduced graph for the original instance is obtained with the following transformation.

\begin{transf}\label{FromUniverseToGraph_v3}
\textbf{Input:} $G$, $\mathcal{U'}$; \textbf{Output:} $G'$

Return $G' = G[V(\mathcal{U'})]$. 
\end{transf}

\begin{lemma}
$G'$ has a $(k,r,t)$-edge-$\mathcal{H}$-membership if and only if $\mathcal{S'}$ has a $(k,r,t)$-set membership.
\end{lemma}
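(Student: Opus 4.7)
The plan is to prove the biconditional by chaining it through the (un-reduced) set-packing formulation, leveraging two previously established facts: the preceding lemma in this subsection (which gives $G$ has a $(k,r,t)$-edge-$\mathcal{H}$-membership iff $\mathcal{S}$ has a $(k,r,t)$-set membership under Transformation \ref{edgePackingToOSP}) and the correctness of Algorithm \ref{SetMembershipKernel} applied to the constructed instance (which gives $\mathcal{S}$ has a $(k,r,t)$-set membership iff $\mathcal{S}'$ has one). What remains is to bridge the graph side between $G$ and $G' = G[V(\mathcal{U}')]$.

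For the $(\Rightarrow)$ direction, I would start with a $(k,r,t)$-edge-$\mathcal{H}$-membership $\mathcal{K}$ of $G'$ and observe that, because $G'$ is an induced subgraph of $G$, every $\mathcal{H}$-subgraph in $\mathcal{K}$ is also an $\mathcal{H}$-subgraph of $G$; the edge-membership bound is trivially inherited (edges of $E(G)\setminus E(G')$ appear in zero members of $\mathcal{K}$). Hence $\mathcal{K}$ is a $(k,r,t)$-edge-$\mathcal{H}$-membership of $G$, and the previous lemma followed by the kernelization correctness produce a $(k,r,t)$-set membership of $\mathcal{S}'$.

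For the $(\Leftarrow)$ direction, I would take a $(k,r,t)$-set membership $\{S_1,\dots,S_k\}$ of $\mathcal{S}'\subseteq\mathcal{S}$ and, using Transformation \ref{edgePackingToOSP}, recover $\mathcal{H}$-subgraphs $H_i$ with $E(H_i)=S_i$. Because $S_i\subseteq\mathcal{U}'$ and by the standing assumption of Section \ref{EdgeMembershipSection} no $H\in\mathcal{H}$ has isolated vertices, every vertex of $H_i$ is an endpoint of some edge in $\mathcal{U}'$, so $V(H_i)\subseteq V(\mathcal{U}')=V(G')$ and $H_i$ is a subgraph of $G'$. The edge-membership bound on edges belonging to $\mathcal{U}'$ transfers directly from the set-membership condition on $\mathcal{U}'$.

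The main subtlety I anticipate is verifying the edge-membership bound on edges of $G'$ that were \emph{not} retained in $\mathcal{U}'$, since $G'=G[V(\mathcal{U}')]$ is an induced subgraph and may carry such extra edges. This is resolved by a simple observation: since each $E(H_i)=S_i\subseteq\mathcal{U}'$, no $H_i$ uses any edge outside $\mathcal{U}'$, so every edge of $E(G')\setminus\mathcal{U}'$ lies in zero of the $H_i$'s and the bound $\leq t$ holds vacuously. Combined with the two invocations sketched above, this closes the biconditional.
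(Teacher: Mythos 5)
Your proof is correct. The paper actually states this lemma without any proof, and your argument --- chaining the already-proved equivalence between $G$ and $\mathcal{S}$ with the correctness of Algorithm \ref{SetMembershipKernel}, then bridging $G$ and $G'=G[V(\mathcal{U}')]$ in both directions --- is exactly the intended route; your explicit handling of the edges of $E(G')\setminus\mathcal{U}'$ (which exist because $G'$ is an induced subgraph but are used by none of the recovered $H_i$ since $E(H_i)=S_i\subseteq\mathcal{U}'$) supplies the one detail the paper glosses over.
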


Given that $G$ does not contain isolated vertices, $|V(G')| \leq 2|\mathcal{U'}|$,  

\begin{theorem}\label{KernelEdgeMembership}
The $\mathcal{H}$-Packing with $t$-Edge Membership problem has a kernel with $O((r+1)^{r} k^{r})$ vertices where $r=m(\mathcal{H})$.
\end{theorem}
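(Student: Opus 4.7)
The plan is to chain together the transformation/kernelization pipeline that has already been set up in the section, and then convert an edge-count bound into a vertex-count bound. First, I would invoke Lemma \ref{GNoIsolated} to assume, without loss of generality, that neither $G$ nor any $H\in\mathcal{H}$ contains isolated vertices; this is the reduction step that lets us bound $|V(G')|$ in terms of $|E(G')|$.

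Next, apply Transformation \ref{edgePackingToOSP} with $r=m(\mathcal{H})$ to obtain an instance $(\mathcal{U},\mathcal{S},k)$ of $r$-Set Packing with $t$-Membership, where $\mathcal{U}=E(G)$ and each set in $\mathcal{S}$ corresponds to the edge set of some $\mathcal{H}$-subgraph of $G$. The previous lemma of the subsection establishes that this instance is equivalent to the original one. Now feed this instance into Algorithm \ref{SetMembershipKernel}; by the kernel result of Section~\ref{membershipSets} (the theorem giving $O((r+1)^r k^r)$ elements for $r$-Set Packing with $t$-Membership), we obtain a reduced universe $\mathcal{U}'\subseteq\mathcal{U}$ and collection $\mathcal{S}'$ with $|\mathcal{U}'|=O((r+1)^r k^r)$, preserving equivalence to the original $r$-Set Packing with $t$-Membership instance.

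Finally, apply Transformation \ref{FromUniverseToGraph_v3} to get $G'=G[V(\mathcal{U}')]$. By the previous lemma of the subsection, $G'$ has a $(k,r,t)$-edge-$\mathcal{H}$-membership if and only if $\mathcal{S}'$ has a $(k,r,t)$-set membership, so the reduction is sound. For the size bound, observe that each element of $\mathcal{U}'$ is an edge of $G$, hence has exactly two endpoints, so $|V(\mathcal{U}')|\leq 2|\mathcal{U}'|$. Since we removed isolated vertices at the start, $V(G')=V(\mathcal{U}')$, which yields $|V(G')|\leq 2|\mathcal{U}'|=O((r+1)^r k^r)$ with $r=m(\mathcal{H})$.

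The main obstacle is essentially bookkeeping: one must carefully verify that each transformation in the pipeline preserves yes/no answers in both directions (already done by the earlier lemmas in the subsection), and crucially that the reduction to the no-isolated-vertex setting is valid, since without that hypothesis the step $|V(G')|\leq 2|\mathcal{U}'|$ would fail and no vertex-count kernel would follow from the edge-indexed kernel.
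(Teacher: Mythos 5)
Your proposal is correct and follows essentially the same route as the paper: it chains Lemma \ref{GNoIsolated}, Transformation \ref{edgePackingToOSP}, the equivalence lemma, Algorithm \ref{SetMembershipKernel} with its $O((r+1)^r k^r)$ element bound, and Transformation \ref{FromUniverseToGraph_v3}, concluding with $|V(G')|\leq 2|\mathcal{U}'|$ exactly as the paper does. No gaps.
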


Our kernelization algorithm for the $\mathcal{H}$-Packing with $t$-Edge Membership problem corresponds to Algorithm \ref{GenericKernelization} with input $G$, $\mathcal{H}$, $k$, ``$r$-Set Packing with $t$-Membership'', Transformation \ref{edgePackingToOSP}, Algorithm \ref{SetMembershipKernel}, and Transformation \ref{FromUniverseToGraph_v3}.

\subsubsection{$\mathcal{H}$-subgraphs with non identical set of vertices - Edge-Membership.}

We present a stricter definition of the $\mathcal{H}$-Packing with $t$-Edge Membership problem to
avoid $\mathcal{H}$-subgraphs in a $(k,m,t)$-edge-$\mathcal{H}$-membership with identical sets of vertices. This is equivalent to add the condition $V(H_i) \neq V(H_j)$ to the $\mathcal{H}$-Packing with $t$-Edge Membership problem definition. We will indicate this variant by adding NISV (non identical set of vertices) to the problem name. 

\begin{example}
Figure \ref{membership2} shows an input graph for the $\mathcal{H}$-Packing with $t$-Edge Membership Problem. Let us consider $\mathcal{H}=\{C_4\}$, $t=2$, and $k=2$. The constructed instance for the 
$r$-Set Packing with $t$-Membership will have
\\
$\mathcal{U}=\{ab,ad,bg,dc,gh,hc,bc,be,bf,bc,ce,cf,ef\}$ and\\ $\mathcal{S}$=$\{\{ab,ad,bc,dc\}$,$\{bg,$ $gh,hc,bc\}$, $\{be,bc,cf,ef\}$,$\{be,bf,cf,ce\}\}$
 
A $(2,4,2)$-set membership is $\{\{be,bc,cf,ef\},\{be,bf,cf,ce\}\}$. This corresponds to the $(2,4,2)$-edge-$\mathcal{H}$-membership
$H_1=(\{\{b,e,c,f\}\{be,bc,cf,ef\}),H_2=(\{\{b,e,c,f\}\{be,bf,cf,ce\})\}$.
\end{example}

Observe that $V(H_1)=V(H_2)$. This is still a solution for the $\mathcal{H}$-Packing with $t$-Edge Membership problem that does not require that $V(H_i) \neq V(H_j)$ for each pair of $\mathcal{H}$-subgraphs of a $(k,m,t)$-edge-$\mathcal{H}$-membership.

It could be of interest to exclusively have $\mathcal{H}$-subgraphs with different set of vertices in a $(k,m,t)$-edge-$\mathcal{H}$-membership. We reduce the $\mathcal{H}$-Packing with $t$-Edge Membership problem (NISV) directly to ($r+1$)-Set Packing (element-disjoint), instead of reducing to $r$-Set Packing with $t$-Membership. 

Let $\mathcal{V}$ be a collection of sets of vertices each of which forms a subgraph (not necessarily induced) isomorphic to some $H \in \mathcal{H}$. That is, for each $V \in \mathcal{V}$, there is an $H' \subseteq G[V]$ that is an $\mathcal{H}$-subgraph, (here $\subseteq$ also refers to the subgraph relation). Hence, for each $V \in\mathcal{V}$, $|V| \leq r$, and $|\mathcal{V}| = O(n^r)$.  In addition,  let $\mathcal{E}$ be defined as in previous section. Notice that there could be two $\mathcal{H}$-subgraphs $H_1$, $H_2$ in $G$ with the same element $V \in \mathcal{V}$ but different set of edges $E_1,E_2 \in \mathcal{E}$. In this way, $V(E_1)=V(E_2)$ but $E_1 \neq E_2$.




\begin{transf}\label{AltransfForEdgeMembership}
\textbf{Input:} $G$, $\mathcal{H}$;
\textbf{Output:} $\mathcal{U}$, $\mathcal{S}$, and $r$.

The universe $\mathcal{U}$ equals $(E(G)\times\{1,\dots,t\})\cup \mathcal{V}$. 

Let $r= m(\mathcal{H})$.

The collection $\mathcal{S}$ contains all subsets of $\mathcal{U}$ each with at most $r+1$ elements $\{\{(e_1,j_1), \dots, (e_{m'},j_{m'}), V(E)\}$ $\mid E \in \mathcal{E}$, $E=\{e_1,\dots e_{m'}\}$ for each $1 \leq j_i \leq t$ and $1 \leq i \leq m'$, where $m' \leq m(\mathcal{H})\}$.
\end{transf}

The intuition behind our transformation is that each set in $\mathcal{S}$ would represent an $\mathcal{H}$-subgraph in $G$. Since each edge in $E(G)$ can be in at most $t$ $\mathcal{H}$-subgraphs, we replicate  each edge $t$ times. This would imply that there are $t^{r}$ sets in $\mathcal{S}$ representing the same $\mathcal{H}$-subgraph $H$ of $G$. All of those sets have a common element which is precisely the set of vertices in $V(H)$, i.e., some $V \in \mathcal{V} \subset \mathcal {U}$. This will avoid to choose sets that corresponds to $\mathcal{H}$-subgraphs with identical set of vertices.


\begin{example}
For the graph of Figure \ref{membership2}, $\mathcal{H}=\{C_4\}$, $k=3$, and $t=2$.
\begin{eqnarray*}\mathcal{V}&=&\left\{\{a,b,c,d\}, \{b,c,e,d,f\}, \{b,c,g,h\}\right\}\\
\mathcal{U} &=&\{{(ab,1), (ad,1), (dc,1), (bc,1), (be,1), (ef,1), (cf,1), (bf,1), (ec,1), (gb,1), }\\
             &&\ {(dc,1), (hc,1), (ab,2), (ad,2), (dc,2), (bc,2), (be,2), (ef,2), (cf,2), (bf,2), }\\
             &&\ {(ec,2), (gb,2), (dc,2),(hc,2)} \}\\
&\cup&\mathcal{V}\,. 
\end{eqnarray*}



Some sets of the collection $\mathcal{S}$ are 

$\{(ad,1),$$(ab,1),$$(bc,1),$$(dc,1),$$\{a,b,c,d\}\}$, 

$\{(bc,1),$$(cf,1),$$(ef,1),$$(be,1),$$\{b,c,e,f\}\}$, 

$\{(bf,1),$$(cf,1),$$(ec,1),$$(be,1),$$\{b,c,e,f\}\}$,

$\{(bc,1),$$(bf,1),$$(ef,1),$$(ec,1),$$\{b,c,e,f\}\}$,

$\{(gh,1),$$(dc,1),$$(bc,1),$$(hc,1),$$\{b,c,g,h\}\}$.

A $(3,4+1,0)$-set packing is $\{$$\{(ab,1)$, $(ad,1)$, $(dc,1)$, $(bc,1)$, $\{a,b,c,d\}\}$, $\{(be,1)$, $(bf,1)$, $(cf,1)$, $(ec,1)$, $\{b,c,e,f\}\}$ and $\{(bc,2)$, $(gb,1)$, $(hc,1)$, $(gh,1)$, $\{b,c,g,h\}\}$$\}$. 

This corresponds to the following $(3,4,2)$-edge-$\mathcal{H}$-membership \\ $\{H_1=(\{a,b,c,d\}$, $\{ab,ad,dc,bc\})$, $H_2=(\{b,c,e,f\}$, $\{be,bf,cf,ec\})$ and  $H_3=(\{b,c,g,h\}$, $\{bc,gb,gh,hc\})$. 
\end{example}

\begin{lemma}
$G$ has a $(k,r,t)$-edge-$\mathcal{H}$-membership (NISV) if and only if $\mathcal{S}$ has a $(k,r+1,0)$-set packing.
\end{lemma}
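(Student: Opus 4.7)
The plan is to mirror the argument used for Lemma~\ref{membershipTransformationVertex}, but with the roles of vertices and edges swapped so that (i) edge copies in $\mathcal{U}$ record how often an edge is used, and (ii) the auxiliary element from $\mathcal{V}$ enforces the NISV condition.

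For the forward direction, let $\mathcal{K}=\{H_1,\ldots,H_k\}$ be a $(k,r,t)$-edge-$\mathcal{H}$-membership (NISV). Process the $H_i$ in order and build a set $S_i\in\mathcal{S}$ as follows. Let $E_i=E(H_i)=\{e^i_1,\ldots,e^i_{m_i}\}\in\mathcal{E}$; include the element $V(E_i)\in\mathcal{V}\subseteq\mathcal{U}$ in $S_i$. For each edge $e^i_j$, let $\ell$ be the number of indices $i'<i$ with $e^i_j\in E(H_{i'})$; since every edge lies in at most $t$ members of $\mathcal{K}$, we have $0\le\ell\le t-1$, so the pair $(e^i_j,\ell+1)$ exists in $\mathcal{U}$, and we include it in $S_i$. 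By the construction of $\mathcal{S}$, $S_i\in\mathcal{S}$. Pairwise disjointness is then immediate: the NISV condition guarantees $V(E_i)\neq V(E_{i'})$ for $i\neq i'$, so the $\mathcal{V}$-elements do not collide; and the labelling scheme ensures that if $e$ appears in both $S_i$ and $S_{i'}$ (with $i<i'$), the two copies are labelled with distinct values in $\{1,\ldots,t\}$.

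For the converse, let $\mathcal{K}_{\mathcal{S}}=\{S_1,\ldots,S_k\}$ be a $(k,r+1,0)$-set packing in $\mathcal{S}$. Each $S_i$ has, by the definition of $\mathcal{S}$, a unique element $V_i\in\mathcal{V}$ together with labelled edges $(e^i_1,j^i_1),\ldots,(e^i_{m_i},j^i_{m_i})$ whose underlying edge set $E_i=\{e^i_1,\ldots,e^i_{m_i}\}$ belongs to $\mathcal{E}$ and satisfies $V(E_i)=V_i$. Define $H_i=(V_i,E_i)$; by construction $H_i$ is an $\mathcal{H}$-subgraph of $G$. Since the $S_i$ are pairwise disjoint and each contains exactly one $\mathcal{V}$-element, the $V_i$ are pairwise distinct, giving the NISV property $V(H_i)\neq V(H_j)$ for $i\neq j$. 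Finally, disjointness of the $S_i$ forbids any labelled edge $(e,j)$ from being reused, and since $\mathcal{U}$ contains only the $t$ copies $(e,1),\ldots,(e,t)$, each edge $e\in E(G)$ appears in at most $t$ of the sets $E_i$; equivalently, every edge of $G$ lies in at most $t$ subgraphs of $\mathcal{K}=\{H_1,\ldots,H_k\}$, so $\mathcal{K}$ is a $(k,r,t)$-edge-$\mathcal{H}$-membership (NISV).

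The technically interesting steps are verifying that the $\mathcal{V}$-element is exactly what forces NISV (rather than merely excluding edge-identical subgraphs) and that the labelling argument correctly translates the $t$-edge-membership bound into element-disjointness of the corresponding sets; both considerations are essentially parallel to Lemma~\ref{membershipTransformationVertex}, with edges in place of vertices and $V(E)$ in place of the $E$-element used there. No further obstacle arises.
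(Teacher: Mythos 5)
Your proposal is correct and follows essentially the same route as the paper's proof: the same labelled-copy construction for the forward direction (with the label counting prior occurrences of each edge), the same use of the $\mathcal{V}$-element to enforce NISV, and the same pigeonhole argument on the $t$ copies of each edge for the converse. The only cosmetic difference is that you justify the distinctness of labels for a shared edge directly via monotonicity of the count, where the paper phrases it as a short contradiction; both are valid.
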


\begin{proof}
We build a $(k,r+1,0)$-set packing $\mathcal{K_S}$ from a $(k,r,t)$-edge-$\mathcal{H}$-membership (NISV) $\mathcal{K}$. For each $H_j \in \mathcal{K}$, we add a set $S_j$ with $|E(H_j)|+1 \leq r+1$ elements as follows.  The element $V$ in $S_j$ corresponds to the set of vertices $V(H_j)$. Note that $V(H_j) \in \mathcal{V} \subseteq \mathcal{U}$. For each edge $e_i \in E(H_j)$ ($1 \leq i \leq |E(H_j)| \leq m$), we add an ordered pair $(e_i,l+1)$ to $S_j$ where $l$ corresponds to the number of $\mathcal{H}$-subgraphs in $\{H_1, \dots, H_{j-1}\} \subset \mathcal{K}$ that $e_i$ belongs to. Since each edge is a member of at most $t$ $\mathcal{H}$-subgraphs, $0 \leq l \leq t-1$ and $(e_i,l+1)$ always exists in $\mathcal{U}$. 
By our construction of $\mathcal{S}$,  $S_i \in \mathcal{K_S} \subseteq \mathcal{S}$. 
It remains to show that the sets in $\mathcal{K_S}$ are pairwise disjoint. Since $V(H_i) \neq V(H_j)$ for each pair $H_i,H_j \in \mathcal{K}$, none of the sets in $\mathcal{K_S}$ will share the same element $V$. On the other hand, no pair $S_i,S_j$ shares an ordered pair $(e,l)$ for some edge $e \in E(G)$ and some integer $1 \leq l \leq t$. If $S_i$ and $S_j$ share a pair $(e,l)$, this would imply that there are two $\mathcal{H}$-subgraphs $H_i$ and $H_j$ that share the same edge $e$.  Without lost of generality, assume that $H_i$ appears before  $H_j$ in $\mathcal{K}$. If $(e,l)$ is contained in both $S_i$ and $S_j$,   then $e$ is a member of $l-1$ $\mathcal{H}$-subgraphs from both $\{\dots, H_i, \dots \}$ and $\{\dots, H_i, \dots, H_j \}$. This is a  contradiction since both $H_i$ and $H_j$ contain $e$.

We construct a $(k,r,t)$-edge-$\mathcal{H}$-membership (NISV) $\mathcal{K}$ using a $(k,r+1,0)$-set packing $\mathcal{K_S}$. Each set $S_i \in \mathcal{K}_S$ has an element $V_{S_i} \in \mathcal{V}$ which corresponds to a set of vertices in $V(G)$ that forms a subgraph  in $G$ that is isomorphic to some $H\in \mathcal{H}$.  In addition, we take the edge $e_l$ that appears in each $(e_l,j) \in S_i$ (for $1 \leq l \leq m$ and $j \in [1...t]$) and we add it to a set $E_{S_i}$.
In this way, for each set $S_i \in \mathcal{K}_S$, we add to $\mathcal{K}$ an $\mathcal{H}$-subgraph $H_{S_i}$ with a set of vertices $V_{S_i}$ and a set of edges $E_{S_i}$, i.e., $H_{S_i}=(V_{S_i},E_{S_i})$.  Since the sets in $\mathcal{K}_S$ are pairwise disjoint, $V(H_{S_i}) \neq V(H_{S_j})$.  Now, we need to show that each edge of $E(G)$ is a member of at most $t$ $\mathcal{H}$-subgraphs in $\mathcal{K}$. Assume the contrary for the sake of contradiction. If two $\mathcal{H}$-subgraphs in $\mathcal{K}$ share an edge $e$, then there are two sets in $\mathcal{K}_{S}$ where each one has a pair $(e,i)$ and $(e,j)$, respectively. Since the sets in $\mathcal{K}_S$ are disjoint, $i \neq j$. If an edge $e \in E(G)$ is a member of more than $t$ $\mathcal{H}$-subgraphs in $\mathcal{K}$, then there are at least $t+1$ ordered pairs $(e,1),\dots,(e,t+1)$ contained in $t+1$ different sets in $\mathcal{K}_S$. However, by our construction of $\mathcal{U}$, there are at most $t$ ordered pairs that contain $e$. \qed 
\end{proof}

The size of our constructed instance is upper-bounded by $$|\mathcal{U}|=|E(G)| \times t + |\mathcal{V}|\leq
t n^2 + n^{r(\mathcal{H})} = O(n^{r(\mathcal{H})})\,.$$ Each set in $\mathcal{S}$ has size $r+1$ (where $r=m(\mathcal{H})$) and $|\mathcal{S}| \leq t^{r} |\mathcal{H}| n^{m(\mathcal{H})}$. 

We construct a reduced graph $G'$ from a reduced universe $\mathcal{U'} \subseteq \mathcal{U}$ as follows.

\begin{transf}\label{FromUniverseToGraph_3}
\textbf{Input:} $G$, $\mathcal{U'}$;
\textbf{Output:} $G'$

We take the end-points of each edge $e$ that appears in each $(e,i) \in\mathcal{U}'$ and the vertices in each $V \in \mathcal{V}'$ and consider the graph $G'$ that is induced by all these in $G$.
\end{transf}

\begin{lemma}
$G'$ has a $(k,r,t)$-$\mathcal{H}$-membership (NISV) if and only if $\mathcal{S'}$ has a $(k,r+1,0)$-set packing.
\end{lemma}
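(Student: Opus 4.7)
The plan is to establish both directions by chaining three facts: (a) the preceding transformation-correctness lemma (which says that $G$ has a $(k,r,t)$-edge-$\mathcal{H}$-membership (NISV) iff $\mathcal{S}$ has a $(k,r+1,0)$-set packing); (b) the correctness of the $(r+1)$-Set Packing kernelization algorithm that produces $(\mathcal{U}',\mathcal{S}')$ from $(\mathcal{U},\mathcal{S})$; and (c) two combinatorial observations about how Transformation~\ref{FromUniverseToGraph_3} constructs $G'$ from $\mathcal{U}'$.

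For the forward direction ($\Rightarrow$): I would observe that $G'$ is an induced subgraph of $G$, so any $(k,r,t)$-edge-$\mathcal{H}$-membership (NISV) of $G'$ is automatically a $(k,r,t)$-edge-$\mathcal{H}$-membership (NISV) of $G$. Invoking (a) gives a $(k,r+1,0)$-set packing of $\mathcal{S}$, and invoking the kernel correctness property (b) of the algorithm from~\cite{Faisal10} transfers this to a $(k,r+1,0)$-set packing of $\mathcal{S}'$.

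For the reverse direction ($\Leftarrow$): I would start with a $(k,r+1,0)$-set packing $\mathcal{K}_S \subseteq \mathcal{S}'$. Since $\mathcal{S}' \subseteq \mathcal{S}$, this is also a packing in $\mathcal{S}$, and applying the preceding lemma in the reverse direction produces a $(k,r,t)$-edge-$\mathcal{H}$-membership (NISV) $\mathcal{K}$ of $G$, with each $H_j\in\mathcal{K}$ built from the corresponding set $S_j\in\mathcal{K}_S$: the vertex set of $H_j$ is the element $V \in \mathcal{V}' \subseteq \mathcal{U}'$ appearing in $S_j$, and the edge set of $H_j$ consists of the edges $e$ such that some pair $(e,\ell)$ lies in $S_j\subseteq \mathcal{U}'$. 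The remaining task is to verify that $\mathcal{K}$ already lives inside $G'$. This is exactly what Transformation~\ref{FromUniverseToGraph_3} buys us: $V(G')$ contains every vertex appearing in any such $V\in\mathcal{V}'$ and both endpoints of every edge $e$ appearing in any pair $(e,i)\in\mathcal{U}'$; moreover, since $G'=G[V(G')]$ is induced, each such $e$ is in $E(G')$. Thus each $H_j$ is a subgraph of $G'$, and $\mathcal{K}$ is a $(k,r,t)$-edge-$\mathcal{H}$-membership (NISV) of $G'$.

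The main obstacle I anticipate is the reverse direction: one must check that the $\mathcal{H}$-subgraphs materialized from $\mathcal{K}_S$ really are contained in $G'$ and not merely in $G$. This reduces to a careful bookkeeping check that every element (both every ordered pair $(e,i)$ and every $V\in\mathcal{V}$) surviving in $\mathcal{U}'$ has its underlying edge or vertex set included when $G'$ is built via Transformation~\ref{FromUniverseToGraph_3}, together with the fact that $G'$ is taken to be the \emph{induced} subgraph on those vertices so the edges $e$ referenced in $\mathcal{U}'$ are automatically in $E(G')$. Once this is written out, both implications close and the lemma follows.
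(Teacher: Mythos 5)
Your proof is correct. The paper in fact states this lemma with no proof at all, treating it as an immediate consequence of the preceding transformation lemma, the correctness of the $(r+1)$-Set Packing kernelization of \cite{Faisal10}, and Transformation~\ref{FromUniverseToGraph_3}; your argument is precisely that intended chain, and the one point where genuine content is needed --- verifying in the reverse direction that the subgraphs $H_j=(V_{S_j},E_{S_j})$ reconstructed from $\mathcal{K}_S\subseteq\mathcal{S}'$ live in $G'$ rather than merely in $G$, because every $V\in\mathcal{V}'$ and every endpoint of an edge $e$ with $(e,i)\in\mathcal{U}'$ is placed into $V(G')$ and $G'$ is taken as the induced subgraph --- is exactly the point you identify and handle.
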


The kernelization algorithm for the $\mathcal{H}$-Packing with $t$-Edge Membership problem (NISV) corresponds to Algorithm \ref{GenericKernelization} with input: $G$, $\mathcal{H}$, $k$, ``$(r+1)$-Set Packing'', Transformation \ref{AltransfForEdgeMembership}, kernelization algorithm \cite{Faisal10}, and Transformation \ref{FromUniverseToGraph_3}. The size of $\mathcal{U'}$ is at most $2r!((r+1)k-1)^{r}$ \cite{Faisal10}; hence $|V(G')|=O((r+1)^{r} k^{r})$. Therefore, we can state:








\begin{theorem}
The $\mathcal{H}$-Packing with $t$-Edge Membership problem (NISV) has a kernel with $O((r+1)^{r} k^{r})$ vertices, where $r=m(\mathcal{H})$.
\end{theorem}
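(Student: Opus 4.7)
The plan is to invoke Algorithm \ref{GenericKernelization} with the routine assignment specified in the paragraph immediately before the theorem statement, and then read off the vertex bound from the promised size of the $(r+1)$-Set Packing kernel. In more detail: given an instance $(G,\mathcal{H},k)$ of $\mathcal{H}$-Packing with $t$-Edge Membership (NISV), first apply Transformation~\ref{AltransfForEdgeMembership} to produce an $(r+1)$-Set Packing instance $(\mathcal{U},\mathcal{S},k)$. Next, run the kernelization algorithm of Faisal et al.~\cite{Faisal10} on this instance to obtain a reduced universe $\mathcal{U}'\subseteq\mathcal{U}$ with $|\mathcal{U}'|\leq 2r!((r+1)k-1)^{r}$, together with a reduced collection $\mathcal{S}'\subseteq\mathcal{S}$ whose members are contained in $\mathcal{U}'$. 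Finally, apply Transformation~\ref{FromUniverseToGraph_3} to $(\mathcal{U}',G)$ to return the graph $G'$.

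Correctness of this chain is already in place. The equivalence ``$G$ has a $(k,r,t)$-edge-$\mathcal{H}$-membership (NISV) iff $\mathcal{S}$ has a $(k,r+1,0)$-set packing'' is the lemma following Transformation~\ref{AltransfForEdgeMembership}; the equivalence ``$\mathcal{S}$ has a $(k,r+1,0)$-set packing iff $\mathcal{S}'$ does'' is the standard correctness of the Faisal et al.\ kernel; and ``$G'$ has a $(k,r,t)$-edge-$\mathcal{H}$-membership (NISV) iff $\mathcal{S}'$ has a $(k,r+1,0)$-set packing'' is the lemma following Transformation~\ref{FromUniverseToGraph_3}. Composing these three biconditionals shows that $(G,\mathcal{H},k)$ and $(G',\mathcal{H},k)$ are equivalent instances, as required for a problem kernel.

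All that remains is to bound $|V(G')|$. The elements of $\mathcal{U}'$ are of two kinds: ordered pairs $(e,i)$ with $e\in E(G)$ and $1\leq i\leq t$, and vertex sets $V\in\mathcal{V}'$ with $|V|\leq r$. Transformation~\ref{FromUniverseToGraph_3} contributes, for each ordered pair, the two endpoints of $e$, and for each $V\in\mathcal{V}'$, at most $r$ vertices. Hence
\[
|V(G')| \;\leq\; (r+2)\,|\mathcal{U}'| \;\leq\; (r+2)\cdot 2r!\bigl((r+1)k-1\bigr)^{r} \;=\; O\!\bigl((r+1)^{r}k^{r}\bigr),
\]
using that $r=m(\mathcal{H})$ is a constant. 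This gives the claimed kernel size.

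There is no real obstacle here beyond the bookkeeping: the construction, both equivalences, and the external kernel bound have all been established in the preceding material, so the work is essentially a composition argument plus a one-line counting of vertices contributed per surviving universe element. The only point to take mild care with is that Transformation~\ref{FromUniverseToGraph_3} takes endpoints (not ordered pairs) and vertices of $V\in\mathcal{V}'$, so the factor in front of $|\mathcal{U}'|$ is $r+2$ rather than the na\"ive $r$, which is absorbed by the constant since $r$ is fixed.
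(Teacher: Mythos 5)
Your proposal is correct and follows essentially the same route as the paper: compose Transformation~\ref{AltransfForEdgeMembership}, the $(r+1)$-Set Packing kernelization of \cite{Faisal10}, and Transformation~\ref{FromUniverseToGraph_3} via the three stated biconditionals, then read the vertex bound off $|\mathcal{U}'|\leq 2r!((r+1)k-1)^{r}$. The only nitpick is that a set $V\in\mathcal{V}'$ may have up to $r(\mathcal{H})\leq 2m(\mathcal{H})=2r$ vertices rather than $r$ (since here $r=m(\mathcal{H})$), but this only changes the constant factor and does not affect the $O((r+1)^{r}k^{r})$ bound.
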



\section{Packing Problems with Bounded Overlap}\label{pairwiseSection}


In this section, we provide kernel results for our $t$-Overlap problems. First, we develop a kernelization algorithm for the problem of packing sets with pairwise overlap (called $r$-Set Packing with $t$-Overlap). Then, we transform each graph with $t$-Overlap problem to an instance of the $r$-Set Packing with $t$-Overlap problem. These transformations allow us to obtain kernel results for all our graph with $t$-Overlap problems, as we can always recover an instance of the graph-theoretic question from the reduced $r$-Set Packing with $t$-Overlap instance. 

Once more, $r \geq 1$ and  $t\geq 0$ are fixed constants in our problems.

\subsection{Packing Sets with $t$-Overlap}\label{kernelalgorithm}

We begin by developing a kernelization algorithm for the $r$-Set Packing with $t$-Overlap problem. 


    




We assume that each set in $\mathcal{S}$ has size at least $t+1$. Otherwise, we can add the sets with size at most $t$ straight to a $(k,r,t)$-set packing and decrease the parameter $k$ by the number of those sets. We start with a simple reduction rule. 

\begin{redrule}\label{cleanupSetPacking}
Remove any element of $\mathcal{U}$ that is not contained in at least one set of $\mathcal{S}$.
\end{redrule}

Notice that every pair of sets in $\mathcal{S}$ overlap in at most $r-1$ elements (even if $S' \subset S$ for some $S',S \in \mathcal{S}$); otherwise there would be two identical sets. Thus for $t=r-1$, if $|\mathcal{S}|\geq k$ then $\mathcal{S}$ is a $(k,r,t)$-set packing; in the other case, $\mathcal{S}$ does not have a solution. Henceforth, we assume that $t \leq r-2$. Algorithm \ref{iterative} reduces an instance of the Set Packing with $t$-Overlap problem to a kernel in two steps. First (Lines 2-9), the goal is to identify sets of $\mathcal{S}$ that \emph{may not be needed} (\emph{extra}) to form a $(k,r,t)$-set packing.  In Line 16, \emph{unnecessary} elements are removed  from $\mathcal{U}$
by triggering a reduction.

We next explain the idea in Lines 2-9. First, we compute a maximal $(r,r-2)$-set packing $\mathcal{R}$ of $\mathcal{S}$, i.e., 
a maximal collection of sets from $\mathcal{S}$ such that every pair of sets in $\mathcal{R}$ overlaps in at most $r-2$ elements. We say that a set $S_e \in \mathcal{S}$ is \emph{extra} if there is a $(k,r,t)$-set packing that does not include $S_e$. Algorithm \ref{extrasubgraphsred} (Line 6) identifies \emph{extra} sets in $\mathcal{R} \subseteq \mathcal{S}$. We could just use $\mathcal{R}=\mathcal{S}$ as input of Algorithm \ref{extrasubgraphsred}, (i.e., a maximal-$(r,r-1)$-set packing instead). However, a smaller set $\mathcal{R}$ will lead to a smaller problem kernel. 

Lines 3-9 in Algorithm \ref{iterative} basically keep reducing the set $\mathcal{R}$ using Algorithm \ref{extrasubgraphsred} while there are no more sets from $\mathcal{S} \backslash (\mathcal{R} \cup \mathcal{E}$) to add to $\mathcal{R}$. We need to run  Algorithm~\ref{extrasubgraphsred} possibly more than once in order to preserve the maximality of $\mathcal{R}$. That is, after Line 4, each set in $\mathcal{S} \backslash (\mathcal{R} \cup \mathcal{E}$)  overlaps in $r-1$ elements with at least one set in $\mathcal{R}$. After Line 7, $\mathcal{R}$ is reduced by $\mathcal{E'}$, and $\mathcal{E}$ is updated with $\mathcal{E'}$. Therefore, all sets in $\mathcal{S} \backslash (\mathcal{R} \cup \mathcal{E})$ that were overlapping in $r-1$ elements only with sets in $\mathcal{E'}$ can now be added to $\mathcal{R}$. 

\begin{example}\label{instanceExample}
Consider the following instance of the $r$-Set Packing with $t$-Overlap problem, where $r=4$, $t=2$, and $k=2$. \\

$\mathcal{U}=\{a,b,c,d,e,f,g,h,i,j,k,l,m,n,o,p,q,r,s,t,u,v,w,x,y\}$ and

$\mathcal{S}=\{\{a,b,c,e\},\{b,c,d,e\},\{e,f,g,i\},\{a,e,f,i\},\{e,g,i,h\},\{i,j,n,m\},$

$\{i,j,m,k\},\{i,j,m,l\},\{o,q,n,p\},\{q,p,r,s\},\{q,p,t,u\},\{q,p,u,v\},\{q,p,v,w\},$

$\{q,p,x,y\}\}$. \\

Figure \ref{hypergraph} shows the hypergraph constructed from $\mathcal{U}$ and $\mathcal{S}$. 

A maximal $(r,r-2)$-set packing is $\mathcal{R} =\{ \{b,c,d,e\},\{e,f,g,i\},\{i,j,n,m\},$
$\{o,n,p,q\},\{q,p,r,s\},\{q,p,t,u\},\\ \{q,p,v,w\},\{q,p,x,y\} \}$. This maximal set is highlighted with bold lines in Figure \ref{hypergraph}. 
\end{example}

\begin{figure}[htb]     
     \centerline{{\includegraphics[scale=0.60]{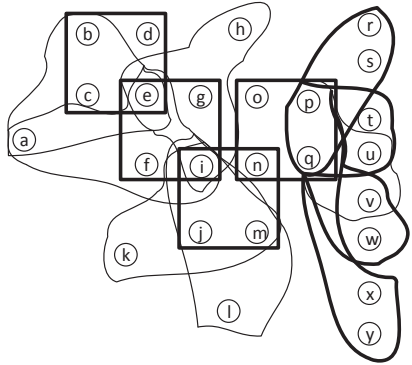}}}
     \caption{The hypergraph constructed with $\mathcal{S}$ and $\mathcal{U}$ from Example \ref{instanceExample}} \label{hypergraph}
\end{figure}


\begin{algorithm} 
  \caption{Kernelization Algorithm - Set Packing with $t$-Overlap} 
	\textbf{Input:} An instance $\mathcal{U},\mathcal{S}$ \\
	\textbf{Output:} A reduced instance $\mathcal{U'},\mathcal{S'}$
	\begin{algorithmic}[1]
	  
		\STATE{Apply Reduction Rule \ref{cleanupSetPacking}}
		
		\STATE{$\mathcal{R} = \emptyset$, $\mathcal{E} = \emptyset$}
		
		\REPEAT
			
		\STATE{Greedily add sets from $\mathcal{S} \backslash (\mathcal{R} \cup \mathcal{E})$ to $\mathcal{R}$ such that every pair of sets in $\mathcal{R}$ overlaps in at most $r-2$ elements (i.e., a maximal $(r,r-2)$-set packing).}
		
		\IF{at least one set was added to $\mathcal{R}$}
		
		    \STATE{$\mathcal{E'}$ = Algorithm \ref{extrasubgraphsred}$(\mathcal{R})$}
		
		    \STATE{$\mathcal{R} = \mathcal{R} \backslash \mathcal{E'}$, $\mathcal{E} = \mathcal{E} \cup \mathcal{E'}$}
				
		\ENDIF
		
		\UNTIL{no more sets have been added to $\mathcal{R}$}
		
		\STATE{Reduce $\mathcal{S} = \mathcal{S} \backslash \mathcal{E}$ and re-apply Reduction Rule \ref{cleanupSetPacking}}
		
		\STATE{Compute a maximal $(r,t)$-set packing $\mathcal{M}$ of $\mathcal{R}$} 
		
		 \IF{$|\mathcal{M}| \geq k$}
		
		    \STATE{Let $\mathcal{S'}$ be any subset of $\mathcal{M}$ of size $k$}
				
		    \STATE{Return $\mathcal{U}$ and $\mathcal{S'}$}
		
		 \ENDIF 
						
		\STATE{Let $\mathcal{U'}$ and $\mathcal{S'}$ be the reduced universe and collection of sets, respectively, after applying Reduction Rule \ref{matching}.}
		
		\STATE{Return $\mathcal{U'}$ and $\mathcal{S'}$}
		
		\end{algorithmic} \label{iterative}
\end{algorithm} 

The basic idea of Algorithm \ref{extrasubgraphsred} is that if there is more than a specific number of members of $\mathcal{S}$ that pairwise overlap in exactly the same subset of elements $P \subseteq \mathcal{U}$, then some of those members are extra. This approach has been applied to obtain kernels for vertex-disjoint packing problems \cite{Fellows04b,Moser09}. To our knowledge, this is first generalization for packing problems with arbitrary overlap.

For each set $S$ in $\mathcal{R}$, Algorithm \ref{extrasubgraphsred} takes every subset of elements $P \subsetneq S$ from sizes from $t_{Ini}$ to $t+1$ in decreasing order (Lines 3-5). The variable $t_{Ini}$ is defined as the maximum overlap value between any pair of sets in $\mathcal{R}$ (Line 1). Thus, given the construction of $\mathcal{R}$ in Algorithm \ref{iterative}, $t_{Ini}=r-2$. Note that $|S|>i$ to run Lines 4-13; otherwise $S$ will be considered at a later iteration of the for-loop of Line 3. For every $P$, we count the number of sets in $\mathcal{R}$ that contain $P$ (collected in  $\mathcal{P}$, Line 6). If that number is greater than a specific threshold (Line 7) then we can remove the ones above the threshold (Lines 8-11). The size of $P$ is determined by the variable $i$ in Line 3, and the threshold $f(i)$ is defined as $f(i) = (r-t)(k-1) f(i+1) + 1$ where $f(t_{Ini}+1)$ is initialized to one (Line 2).  The function $f(i)$ basically bounds the number of sets in $\mathcal{R}$ that contains a subset $P$ with $i$ elements. Observe that $f(i) = (r-t)(k-1) f(i+1) + 1$ $= \sum_{j=0}^{t_{Ini}-i+1} [(r-t)(k-1)]^j$. Algorithm \ref{extrasubgraphsred} returns $\mathcal{E}$ which is the set of extra sets in $\mathcal{R}$ (Line 16).

\begin{algorithm} 
  \caption{Extra Sets Reduction} 
	\textbf{Input:} A set $\mathcal{R} \subseteq \mathcal{S}$ of sets \\
	\textbf{Output:} A set $\mathcal{E} \subseteq \mathcal{R}$  of sets
	\begin{algorithmic}[1] 
		
		\STATE{$t_{Ini}= \max{\{|S_i\cap S_j| : S_i,S_j \in \mathcal{R}, i\neq j\} }$}
		
		\STATE{$f(t_{Ini}+1) = 1$, $\mathcal{E}=\emptyset$} 
				
		\FOR{$i=t_{Ini}$ downto $t+1$}     
			
			\FOR{each $S \in \mathcal{R}$ such that $|S| > i$}   
			
					\FOR{each $P \subsetneq S$ where $|P|=i$}   
					
						\STATE{$\mathcal{P} =\{S' \in \mathcal{R} : S' \supsetneq P\}$} 
						
						 \STATE{$f(i) = (r-t)(k-1) f(i+1) + 1$}
						
						 \IF{$|\mathcal{P}|$ $> f(i)$}
						
						     \STATE{Choose any $\mathcal{P'} \subset \mathcal{P}$ of size $f(i)$}
								
								 \STATE{Set $\mathcal{E} \leftarrow \mathcal{E} \cup (\mathcal{P}\backslash \mathcal{P'})$ (extra sets)}
						     
								 \STATE{$\mathcal{R} \leftarrow \mathcal{R} \backslash (\mathcal{P}\backslash \mathcal{P'})$}
								
						\ENDIF
				 \ENDFOR			
		   \ENDFOR
		\ENDFOR
	\STATE{Return $\mathcal{E}$}

	 \end{algorithmic} \label{extrasubgraphsred}
\end{algorithm} 

Next we show that Algorithm \ref{extrasubgraphsred} correctly reduces a given maximal set $\mathcal{R}$.



\begin{lemma}\label{solutionsets}
$\mathcal{R}$ has a $(k,r,t)$-set packing if and only if $\mathcal{R} \backslash \mathcal{E}$ has a $(k,r,t)$-set packing.
\end{lemma}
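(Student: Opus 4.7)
The \emph{if} direction is immediate since $\mathcal{R}\setminus\mathcal{E}\subseteq\mathcal{R}$. For the \emph{only if} direction, my plan is to prove the stronger invariant that every atomic removal performed by Algorithm~\ref{extrasubgraphsred} in Lines~10--11 preserves the existence of a $(k,r,t)$-set packing in the current working collection $\mathcal{R}$. Composing this invariant across the whole execution of Algorithm~\ref{extrasubgraphsred} will then yield the lemma.

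Fix one such removal: iteration $i$ is processing a subset $P$ of size $i$, $\mathcal{P}=\{S'\in\mathcal{R}:S'\supsetneq P\}$ currently has $|\mathcal{P}|>f(i)$, a subset $\mathcal{P}'\subseteq\mathcal{P}$ of size $f(i)$ is retained, and $\mathcal{P}\setminus\mathcal{P}'$ is added to $\mathcal{E}$. Let $\mathcal{K}\subseteq\mathcal{R}$ be a $(k,r,t)$-set packing; I will show that $\mathcal{K}$ can be modified to live inside $\mathcal{R}\setminus(\mathcal{P}\setminus\mathcal{P}')$. A first key observation is that any two sets of $\mathcal{P}$ share at least $|P|=i\geq t+1$ elements and so conflict, whence $|\mathcal{K}\cap\mathcal{P}|\leq 1$. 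Hence either $\mathcal{K}\cap(\mathcal{P}\setminus\mathcal{P}')=\emptyset$ and there is nothing to do, or there is a unique $S_e\in\mathcal{K}\cap(\mathcal{P}\setminus\mathcal{P}')$, and the goal is to produce some $S'\in\mathcal{P}'$ such that $(\mathcal{K}\setminus\{S_e\})\cup\{S'\}$ is still a $(k,r,t)$-set packing, i.e., $|S'\cap S|\leq t$ for every $S\in\mathcal{K}\setminus\{S_e\}$.

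The existence of $S'$ will follow from a counting argument. Call $S'\in\mathcal{P}$ \emph{bad} for $S\in\mathcal{K}\setminus\{S_e\}$ if $|S'\cap S|\geq t+1$. Set $p=|P\cap S|$; since $P\subseteq S_e$ and $\mathcal{K}$ is a valid packing, $p\leq|S_e\cap S|\leq t$, so every bad $S'$ must satisfy $|(S'\setminus P)\cap S|\geq t+1-p\geq 1$. Now double-count pairs $(S',x)$ with $S'\in\mathcal{P}$ bad and $x\in(S'\setminus P)\cap S$: each bad $S'$ contributes at least $t+1-p$ pairs, while each $x\in S\setminus P$ contributes at most $f(i+1)$, by the inductive invariant already established after iteration $i+1$, namely that every $(i+1)$-subset has at most $f(i+1)$ supersets in the current $\mathcal{R}$. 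Since $|S\setminus P|\leq r-p$, this yields $\#\text{bad}\leq (r-p)f(i+1)/(t+1-p)$. Because $t\leq r-2$, the ratio $(r-p)/(t+1-p)$ is strictly increasing in $p$ on $\{0,\dots,t\}$ and is therefore maximized at $p=t$, giving $\#\text{bad}\leq(r-t)f(i+1)$ per $S$. Summing over the at most $k-1$ other sets of $\mathcal{K}$, the total count of bad $S'$ in $\mathcal{P}$ is at most $(r-t)(k-1)f(i+1)=f(i)-1$. Since $|\mathcal{P}'|=f(i)$, at least one $S'\in\mathcal{P}'$ is not bad, and it serves as the replacement.

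The main obstacle will be verifying the monotonicity inequality $(r-p)/(t+1-p)\leq r-t$ on $p\in\{0,\dots,t\}$, which relies on the earlier assumption $t\leq r-2$; it follows from a short derivative computation showing that the ratio is increasing in $p$ precisely when $r>t+1$. A secondary subtlety is verifying that the "at most $f(i+1)$ supersets" invariant is genuinely in force for \emph{every} $(i+1)$-subset at the moment iteration $i$ is processing $P$; this holds because iteration $i+1$ of the outer loop explicitly enforced the bound for every such subset, the outer loop processes $i$ in decreasing order, and any subsequent removals inside iteration $i+1$ can only strengthen the bound. Once these two pieces are in place, chaining the preserved invariant across every execution of Line~11 inside Algorithm~\ref{extrasubgraphsred} yields the claimed equivalence between $(k,r,t)$-set packings in $\mathcal{R}$ and in $\mathcal{R}\setminus\mathcal{E}$.
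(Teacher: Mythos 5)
Your proposal is correct and reaches the same conclusion as the paper, but the heart of the argument --- bounding the number of sets of $\mathcal{P}'$ that conflict with some member of $\mathcal{K}\setminus\{S_e\}$ --- is done by a genuinely different counting technique. The paper fixes $S^*\in\mathcal{K}\setminus\{S_e\}$ with $l=|P\cap S^*|$ and counts, for each $(t+1-l)$-element subset $T$ of $S^*\setminus P$, the sets of $\mathcal{R}$ containing $P\cup T$ (at most $f(i+(t+1-l))$ of them), which forces it to invoke the invariant at several levels $i+(t+1-l)$, to bound the number of such $T$ by $\binom{r-l}{t+1-l}$, and to treat the base case $i=t_{Ini}$ separately via the fact that sets in $\mathcal{P}'$ pairwise intersect exactly in $P$ (so that disjoint witnesses $T$ suffice there). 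You instead double-count pairs $(S',x)$ with $x$ a single element of $(S'\setminus P)\cap S$, which uses only the level-$(i+1)$ invariant, yields the sharper per-set bound $\frac{r-p}{t+1-p}f(i+1)$, reduces the maximization over $p$ to the clean monotonicity of $(r-p)/(t+1-p)$ under $t\leq r-2$, and handles the base case uniformly. Two small points to tighten: (i) for the first iteration $i=t_{Ini}$ there is no ``iteration $i+1$'' to have enforced your invariant, so you must derive $f(t_{Ini}+1)=1$ directly from Line~1 of Algorithm~\ref{extrasubgraphsred} (two distinct proper supersets of a $(t_{Ini}+1)$-subset would overlap in more than $t_{Ini}$ elements); (ii) the enforced invariant bounds \emph{proper} supersets of an $(i+1)$-subset, so a set equal to $P\cup\{x\}$ itself could add one to each count --- a slack the paper's proof shares and which is absorbed by a constant adjustment of $f$, but worth a remark.
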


\begin{proof}
Since $\mathcal{R}\backslash\mathcal{E} \subseteq \mathcal{R}$, we know that if $\mathcal{R}\backslash\mathcal{E}$ has a $(k,r,t)$-set packing then $\mathcal{R}$ has one. Let $\mathcal{K}$ be a $(k,r,t)$-set packing of $\mathcal{R}$. If $\mathcal{K}$ is also a $(k,r,t)$-set packing of $\mathcal{R}\backslash\mathcal{E}$ then the lemma follows. Otherwise, there is at least one set $S_e \in \mathcal{K}$ that is extra, ($S_e \in \mathcal{E}$ by Line 10); that is, $\mathcal{K} \cap \mathcal{E}\neq\emptyset$. We claim that we can replace each set in $\mathcal{K} \cap \mathcal{E}$ (an extra set) by a set in $\mathcal{R}\backslash\mathcal{E}$. 

For any set $S_e$ removed from $\mathcal{R}$ (i.e., $S_3 \in \mathcal{E}$), there is at least one subset of elements $P \subset S_e$ such that $\mathcal{P} > f(i)$ (Line 8). Since the variable $i$ determines the size of $P$ and $f(i)$ is an upper bound for the number of sets containing $P$, we will prove the lemma by induction on $i$.

Let $i=t_{Ini}$ (first iteration of loop of Line 3) and let $P$ be a subset with $t_{Ini}$ elements such that $\mathcal{P} > f(i)$, where $\mathcal{P}$ is the set of all sets in $\mathcal{R}$ that contain $P$ (Line 6).

Given that $|P|=t_{Ini} \geq t+1$ (Line 3), at most one set in $\mathcal{P}$ can be in a $(k,r,t)$-set packing $\mathcal{K}$. Otherwise, there would be a pair of sets in $\mathcal{K}$ overlapping in more than $t$ elements. Assume that set is $S_e$ and is in $\mathcal{E}$ (Lines 9-10).

We claim that we can replace $S_e$ in $\mathcal{K} \cap \mathcal{E}$ by a set in $\mathcal{P'}$ (the members of $\mathcal{S}$ that contain $P$ that were kept in $\mathcal{R}$, Line 11).

\begin{claim}
The $k-1$ sets in $\mathcal{K} \backslash S_e$ conflict with at most $(r-t)(k-1)$ sets in $\mathcal{P'}$.
\end{claim}

\begin{claimproof}
A set $S^* \in \mathcal{K} \backslash S_e$ conflicts with a set $S' \in \mathcal{P'}$ if they overlap in more than $t$ elements, i.e., $|S^* \cap S'| \geq t+1$.  

Given that $P \subset S_e$ and $|S_e \cap S^*| \leq t$, $|P \cap S^*| = l \leq t$. Therefore, at least $t+1 - l$ elements from $S^*$ should be in $S' \backslash P$ to have a conflict between $S^*$ and $S'$. That is, $| (S'\backslash P) \cap S^*| \geq t+1 -l$. Let us denote as $u_1 \dots u_{t+1 - l}$ the elements shared between $S'\backslash P$ and $S^*$. 

Every set in $\mathcal{P'}$ that contain the subset of elements $P \cup u_1 \dots u_{t+1 - l}$ will conflict with $S^*$. The sets in $\mathcal{P'}$ pairwise overlap exactly in the set $P$. Otherwise, there would be a pair of sets in $\mathcal{P} \subset \mathcal{R}$ overlapping in more than $t_{Ini}$ elements ($|P|=t_{Ini}$), a contradiction by Line 1.  This implies that there is at most one set in $\mathcal{P'}$ that contain the subset of elements $P \cup u_1 \dots u_{t+1 - l}$. This proves our initialization of $f(t_{Ini})=1$.

It remains to count the number of disjoint subsets of $S^*$ with $t+1-l$ elements (That is, all possible disjoint $u_1 \dots u_{t+1 - l}$). Notice that only the disjoint sets are relevant for counting as we showed that the sets in $\mathcal{P'}$ intersect in exactly $P$.

From the $|S^*|-l \leq r-l$ elements ($l$ are already in $P$) we can have at most $\frac{r-l}{t+1-l}$ disjoint sets of $t+1-l$ elements. Hence,  there are at most $\frac{r-l}{t+1-l}$ sets in $\mathcal{P'}$ that could conflict with $S^*$. Repeating this argument, the $k-1$ sets in $\mathcal{K} \backslash S_e$ conflict with at most $\frac{r-l}{t+1-l} (k-1)$ sets in $\mathcal{P'}$.  This expression is maximum and equal to $(r-t)(k-1)$ when $l=t$. \hfil $\Diamond$
\end{claimproof}

Given that $|\mathcal{P'}|=(r-t)(k-1) f(t_{Ini}+1) +1$ (Line 7) (where $f(t_{Ini}+1)=1$, Line 2), there is at least one set in $\mathcal{P'}$ that will not conflict with any set in $\mathcal{K} \backslash S_e$ and it can replace $S_e$.

Now, we will develop the same argument for $i<t_{Ini}$. 

Let $i < t_{Ini}$ and $P$ be a subset of elements of size $i$ such that $\mathcal{P} > f(i)$. Since the size of $|P| \geq t+1$ and all sets in $\mathcal{P}$ share at least $P$ (Line 6), at most one set of $\mathcal{P}$ is in a $(k,r,t)$-set packing. Assume that set is $S_e$ and is extra, i.e.,  $S_e \in \mathcal{P} \backslash \mathcal{P'}$.

\begin{claim}
The $k-1$ sets in $\mathcal{K}\backslash S_e$ conflict with at most $(r-t)(k-1)f(i+1)$ sets in $\mathcal{P'}$.
\end{claim}

\begin{claimproof}
A set $S^* \in \mathcal{K} \backslash S_e$ conflicts with a set $S' \in \mathcal{P'}$ if $|S^* \cap S'| \geq t+1$. Given that $P \subset S_e$ and $|S_e \cap S^*| \leq t$, $|P \cap S^*| = l \leq t$. Therefore, at least $t+1 - l$ elements from $S^*$ should be in $S' \backslash P$ to have a conflict between $S^*$ and $S'$. That is, $|(S'\backslash P) \cap S^*| \geq t+1 -l$. Let us denote as $u_1 \dots u_{t+1 - l}$ the set of elements shared between $S'\backslash P$ and $S^*$. 

Notice that all sets in $\mathcal{P'}$ that contain the subset of elements $P \cup u_1 \dots u_{t+1 - l}$ will conflict with $S^*$. There are $f(i + (t+1-l))$ sets in $\mathcal{R}$ that contain a subset of elements of size $i + (t+1-l)$. Therefore, there are at most $f(i + (t+1-l))$ sets in $\mathcal{P'}$ that contain the subset $P \cup u_1 \dots u_{t+1 - l}$ and will conflict with $S^*$.

It remains to compute the number of subsets with $t+1-l$ elements from $S^*$. There are at most $\binom{r-l}{(t+1)-l}$ ways of selecting $t+1-l$ elements from the $|S^*-l|\leq r-l$ elements of $S^*$. Repeating this argument, we find that the $k-1$ sets in $\mathcal{K}\backslash S_e$ can conflict with at most 

\begin{equation}\label{boundgeneral}
\binom{r-l}{t+1-l}(k-1)f(i+(t+1-l)) 
\end{equation}

\noindent 
sets of $\mathcal{P'}$.

The function $f(i+j)$ is equivalent to $\sum_{j=0}^{t_{Ini}-(i+j)+1} [(r-t)(k-1)]^j$ which is upper-bounded by $2[(r-t)(k-1)]^{t_{Ini}-i-j+1}$. This implies that $f(i) > f(i+1) > f(i+2) > \dots f(i+j)$. Therefore Expression \ref{boundgeneral} is maximum when $l=t$ and upper-bounded by $(r-t)(k-1) f(i+1)$. \hfill $\Diamond$
\end{claimproof}

Since there are $(r-t)(k-1)f(i+1)+1$ sets in $\mathcal{P'}$, there is at least one set in $\mathcal{P'}$ that would not conflict with any set in $\mathcal{K}\backslash S_e$ and it can replace $S_e$. \qed

\end{proof}

\begin{example}
Consider iteration $i=r-2$, $S=\{o,n,p,q\}$, and $P=\{p,q\}$ in Algorithm \ref{extrasubgraphsred} for the instance of Example \ref{instanceExample}. In this case,  $$\mathcal{P}=\{\{o,n,p,q\},\{q,p,r,s\},\{q,p,t,u\}, \{q,p,v,w\},\{q,p,x,y\}\}\,.$$ Since $k=2$, for $i=r-2$, $f(r-2)=3$. Thus, one set of $\mathcal{P}$ will be removed in Line 9.  Without lost of generality, assume this set is $\{p,q,x,y\}$. For this example, that is the only set that will be removed from $\mathcal{R}$.
\end{example}

We prove next that the set $\mathcal{E}$ found in Lines 3-9 of Algorithm \ref{iterative} is extra.

\begin{lemma}\label{iterativeLemma}
After running Lines 1-9 of Algorithm \ref{iterative}, $\mathcal{S}$ has a $(k,r,t)$-set packing if and only if $\mathcal{S} \backslash \mathcal{E}$ has a  $(k,r,t)$-set packing.
\end{lemma}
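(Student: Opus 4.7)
The proof splits into the two implications. The forward direction, $\mathcal{S}\setminus\mathcal{E}$ has a $(k,r,t)$-set packing $\Rightarrow$ $\mathcal{S}$ has one, is immediate because $\mathcal{S}\setminus\mathcal{E}\subseteq\mathcal{S}$. So the work lies in the converse, and my plan is to proceed by induction on the number of iterations $j=0,1,\dots,J$ of the repeat-loop in Lines 3--9 of Algorithm~\ref{iterative}. Let $\mathcal{E}_j$ and $\mathcal{R}_j$ denote the state of $\mathcal{E}$ and $\mathcal{R}$ at the end of iteration $j$, and let $\mathcal{E}'_j$ be the set returned by Algorithm~\ref{extrasubgraphsred} during iteration $j$, so that $\mathcal{E}_j=\mathcal{E}_{j-1}\cup\mathcal{E}'_j$ and $\mathcal{R}_j=(\mathcal{R}_{j-1}\cup\Delta_j)\setminus\mathcal{E}'_j$ for the sets $\Delta_j$ added in Line 4. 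The inductive claim is: if $\mathcal{S}$ has a $(k,r,t)$-set packing, then so does $\mathcal{S}\setminus\mathcal{E}_j$. The base case $j=0$ is trivial since $\mathcal{E}_0=\emptyset$.

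For the inductive step, assume $\mathcal{K}_{j-1}\subseteq\mathcal{S}\setminus\mathcal{E}_{j-1}$ is a $(k,r,t)$-set packing. The crucial observation is that the replacement argument inside the proof of Lemma~\ref{solutionsets} never actually uses $\mathcal{K}\subseteq\mathcal{R}$; it only needs that $\mathcal{K}$ is a $(k,r,t)$-set packing of sets of size at most $r$. The counting of potential conflicts between the $k-1$ other sets of the packing and the surviving family $\mathcal{P}'\subseteq\mathcal{R}_j\setminus\mathcal{E}'_j$ is purely a property of the packing conditions and the threshold $f(i)=(r-t)(k-1)f(i+1)+1$. Hence, for any $S_e\in\mathcal{K}_{j-1}\cap\mathcal{E}'_j$, the corresponding $\mathcal{P}'$ (recorded when $S_e$ was added to $\mathcal{E}'_j$) must contain some set $S'$ that conflicts with no member of $\mathcal{K}_{j-1}\setminus\{S_e\}$.

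I would then perform the replacement one $S_e$ at a time: swap $S_e$ for such an $S'\in\mathcal{R}_j\setminus\mathcal{E}'_j\subseteq\mathcal{S}\setminus\mathcal{E}_j$ and call the new packing $\mathcal{K}'$. Since $S'\notin\mathcal{E}'_j$, and since the previously swapped replacements also lie outside $\mathcal{E}'_j$, the number of elements of $\mathcal{K}'\cap\mathcal{E}'_j$ strictly decreases with each swap; moreover, each replacement stays in $\mathcal{S}\setminus\mathcal{E}_{j-1}$ because $\mathcal{R}_j\subseteq\mathcal{S}\setminus\mathcal{E}_{j-1}$. After at most $k$ swaps we obtain a $(k,r,t)$-set packing $\mathcal{K}_j\subseteq\mathcal{S}\setminus\mathcal{E}_j$, completing the step. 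Taking $j=J$ yields the lemma.

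The main subtle point I expect to hammer on is why the replacement never creates a new conflict with an element of $\mathcal{E}'_j$ that we have not yet handled: because we only exchange $S_e$ for a set that is in $\mathcal{R}_j\setminus\mathcal{E}'_j$, and the sets in $\mathcal{E}'_j$ that remain in $\mathcal{K}'$ are exactly those inherited from the previous $\mathcal{K}_{j-1}$, so the invariant ``$\mathcal{K}'$ is a $(k,r,t)$-set packing contained in $\mathcal{S}\setminus\mathcal{E}_{j-1}$'' is preserved and the counting bound still applies at the next swap. The other minor point to verify is that $S'\neq S''$ for any $S''\in\mathcal{K}_{j-1}\setminus\{S_e\}$, which follows because $|S'\cap S'|=|S'|\geq t+1$ would otherwise violate non-conflict, so the swap really enlarges the domain and does not collapse $\mathcal{K}'$.
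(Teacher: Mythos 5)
Your proposal is correct in outline and reaches the same conclusion by essentially the same mechanism as the paper, but it is noticeably more careful about a point the paper glosses over. The paper's own proof is a two-line contradiction: if some $S\in\mathcal{K}\cap\mathcal{E}$ exists, it was removed from $\mathcal{R}$ in some iteration, and Lemma~\ref{solutionsets} is then invoked. As literally stated, however, Lemma~\ref{solutionsets} only relates packings \emph{of $\mathcal{R}$} to packings of $\mathcal{R}\setminus\mathcal{E}'$, whereas a packing of $\mathcal{S}$ need not be contained in $\mathcal{R}$ at any single moment; the paper leaves this mismatch unaddressed. You identify exactly this and repair it: your key observation --- that the replacement/counting argument inside the proof of Lemma~\ref{solutionsets} uses only that $\mathcal{K}$ is a $(k,r,t)$-set packing of sets of size at most $r$, while the structural hypotheses concern $\mathcal{P}'\subseteq\mathcal{R}$ rather than $\mathcal{K}$ --- is the right one, and your induction over the repeat-loop iterations, together with the check that a non-conflicting replacement is automatically distinct from the remaining packing members, organizes the argument soundly.

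One loose end, which you inherit from the paper rather than introduce: you assert that the guaranteed replacement $S'$ lies in $\mathcal{R}_j\setminus\mathcal{E}'_j$, but the counting only places $S'$ in the snapshot $\mathcal{P}'$ taken at the moment $S_e$ was discarded, and members of that $\mathcal{P}'$ can themselves be discarded later in the same call of Algorithm~\ref{extrasubgraphsred} (for a different $P$ or a smaller $i$). The standard repair is to run your induction at the granularity of individual executions of Lines 9--11 of Algorithm~\ref{extrasubgraphsred}, in chronological order: each replacement is then drawn from the collection as it currently stands, and if a replacement is itself discarded later it is re-replaced at that later step; the process terminates since there are finitely many removal steps. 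With that refinement your argument is complete.
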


\begin{proof}
Since $(\mathcal{S} \backslash \mathcal{E}) \subseteq \mathcal{S}$, if $\mathcal{S} \backslash \mathcal{E}$ has a $(k,r,t)$-set packing then $\mathcal{S}$ has one. Now, assume by contradiction that $\mathcal{S}$ has a $(k,r,t)$-set packing $\mathcal{K}$ but $\mathcal{S} \backslash \mathcal{E}$ does not have one. This implies that there is an $S \in \mathcal{K}$ that is in $\mathcal{E}$.
If that set $S$ is in $\mathcal{E}$ then in some iteration of Algorithm \ref{iterative} (Lines 3-9) that $S$ was in $\mathcal{R}$ in Line 5 and after that $S$ was not in $\mathcal{R}$ but in $\mathcal{E'}$. However, by Lemma \ref{solutionsets} we know that $\mathcal{R}$ has a $(k,r,t)$-set packing if and only if $\mathcal{R} \backslash \mathcal{E'}$ has one.  \qed
\end{proof}

By Lemma \ref{iterativeLemma}, we now reduce $\mathcal{S}$ by removing $\mathcal{E}$ in Line 10 of Algorithm \ref{iterative}. As a consequence, we re-apply Reduction Rule \ref{cleanupSetPacking}. Then, we compute a maximal $(r,t)$-set packing in $\mathcal{R}$. As we will see in proof of Lemma \ref{sizeHgR}, this maximal solution will help us to determine an upper-bound for the number of sets in $\mathcal{R}$. From now on, we assume that the maximal solution was not a $(k,r,t)$-set packing and Algorithm \ref{iterative} continues executing.

An element $u \in \mathcal{U}$ is \emph{extra}, if there is a $(k,r,t)$-set packing of $\mathcal{S}$ where $u \notin S$ for each set $S \in \mathcal{K}$. We will identify extra elements in $\mathcal{U} \backslash val(\mathcal{R})$, denoted as $O$ henceforth. Before giving our reduction rule, we give a characterization of $O$.

\begin{lemma}\label{contained}
Let $O=\mathcal{U} \backslash val(\mathcal{R})$. 
(i) Each element in $O$ is contained in at least one set $S$.
(ii) Only sets in $\mathcal{S} \backslash \mathcal{R}$ contains elements from $O$.
(iii) No pair of different elements in $O$ is contained in the same set $S$ for any $S \in \mathcal{S} \backslash \mathcal{R}$. 
(iv) Each 
 $S \in \mathcal{S} \backslash \mathcal{R}$ contains one element in $O$ and $r-1$ elements of some set in~$\mathcal{R}$.
\end{lemma}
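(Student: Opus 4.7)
The plan is to derive all four parts from a single structural fact about $\mathcal{R}$: after the loop in Lines~3--9 of Algorithm~\ref{iterative} terminates and Line~10 removes $\mathcal{E}$ from $\mathcal{S}$, the collection $\mathcal{R}$ is a maximal $(r,r-2)$-set packing inside the reduced collection $\mathcal{S}$. Indeed, the repeat-loop only exits when no set in $\mathcal{S}\setminus(\mathcal{R}\cup\mathcal{E})$ can be greedily added in Line~4 under the $(r,r-2)$-constraint, so every surviving $S\in\mathcal{S}\setminus\mathcal{R}$ must satisfy $|S\cap S'|\geq r-1$ for some $S'\in\mathcal{R}$.

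Parts (i) and (ii) then follow immediately from the definitions. For (i), Line~10 re-applies Reduction Rule~\ref{cleanupSetPacking}, which deletes every element of $\mathcal{U}$ not contained in any set of $\mathcal{S}$; hence each surviving element, in particular each $u\in O\subseteq\mathcal{U}$, belongs to at least one set of $\mathcal{S}$. Part (ii) is immediate from the definition $O=\mathcal{U}\setminus val(\mathcal{R})$: no set of $\mathcal{R}$ can contain an element of $O$, so any set that does must lie in $\mathcal{S}\setminus\mathcal{R}$.

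For (iii) I argue by contradiction. Suppose two distinct elements $u_1,u_2\in O$ both lie in some $S\in\mathcal{S}\setminus\mathcal{R}$. The maximality observation provides some $S'\in\mathcal{R}$ with $|S\cap S'|\geq r-1$, but $u_1,u_2\notin val(\mathcal{R})$ forces $u_1,u_2\notin S'$, so $|S\cap S'|\leq|S|-2\leq r-2$, contradicting $|S\cap S'|\geq r-1$. For (iv), the same maximality supplies $S'\in\mathcal{R}$ with $|S\cap S'|\geq r-1$, so $S$ contains at least $r-1$ elements of a single set in $\mathcal{R}$; combined with (iii) and $|S|\leq r$, the remaining element of $S$ (if any) is the unique element of $O$ lying in $S$.

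The main obstacle I foresee lies in finishing (iv) by showing that this remaining element actually exists and lies in $O$ rather than in $val(\mathcal{R})$. The clean case is $|S|=r$ and $|S\cap S'|=r-1$, where the single element of $S\setminus S'$ is the candidate; to conclude it belongs to $O$, one rules out membership in any other $S''\in\mathcal{R}$ by appealing to distinctness of the sets in $\mathcal{S}$ together with the $(r,r-2)$-packing property of $\mathcal{R}$. In the degenerate cases ($|S|<r$, or $|S\cap S'|=r$ forcing $S\subseteq S'$), the statement of (iv) should be read in the slightly weaker form ``$S$ contains at most one element in $O$ and at least $r-1$ elements of a single set in $\mathcal{R}$'', which is what the subsequent counting arguments actually require.
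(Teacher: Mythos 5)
Your proof is correct and follows essentially the same route as the paper's: part (i) from re-applying Reduction Rule~\ref{cleanupSetPacking}, part (ii) from the definition of $O$, and parts (iii) and (iv) from the maximality of $\mathcal{R}$ as an $(r,r-2)$-set packing, observing that elements of $O$ cannot contribute to the overlap $|S\cap S'|$ with any $S'\in\mathcal{R}$. Your caveat about part (iv) is well taken, but it is not a defect of your argument: the paper's own proof likewise only establishes that $S$ shares at least $r-1$ elements with some set of $\mathcal{R}$, and the weaker reading you propose is all that the later containment $val(\mathcal{S}')\subseteq O\cup val(\mathcal{R})$ actually requires.
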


\begin{proof}
After re-applying Reduction Rule \ref{cleanupSetPacking}, each element in $\mathcal{U}$ is in at least one set $S$ in the reduced collection $\mathcal{S}$. Since $O \subseteq \mathcal{U}$, (i) follows.

To prove claims (ii) and (iii), assume for the sake of contradiction that there is a pair of elements $u_i,u_j \in O$, $i\neq j$, such that $u_i,u_j \in S$. The set $S$ is not in $\mathcal{R}$; otherwise, $u_i,u_j$ would not be in $O$. In addition, if $S$ is disjoint from $\mathcal{R}$ then $S$ could be added to $\mathcal{R}$, contradicting its maximality. Therefore, $S \in \mathcal{S}\backslash \mathcal{R}$ and $S$ overlaps with at least one set $S'$ in $\mathcal{R}$. Since by our assumption, $u_i,u_j$ are both in $S$, $|S \cap S'| \leq r-2$. However, once again $S$ can be added to $\mathcal{R}$ contradicting its maximality. This proves claims (ii) and (iii).

To prove claim (iv), assume that there is a set $S \in \mathcal{S} \backslash \mathcal{R}$ such that $u_i \in S$ and $S$ overlaps in at most $r-2$ elements with each set in $\mathcal{R}$. This once more contradicts the fact that $\mathcal{R}$ is a maximal $(r,r-2)$-set packing. \qed
\end{proof}

We will reduce $O$ by applying similar ideas as in \cite{Moser09}. To this end, we first construct an auxiliary bipartite graph $B=(V_O,V_{\mathcal{S} \backslash \mathcal{R}},E)$ as follows. There is a vertex $v_{o}$ in $V_O$ for each element $o \in O$. For each set $S$ in $\mathcal{S} \backslash \mathcal{R}$ and each subset $P \subsetneq S$ where $|P|=r-1$, if there is at least one element $o \in O$ such that $\{o\} \cup P\in \mathcal{S} \backslash \mathcal{R}$, add a vertex $v_p$ to $V_{\mathcal{S} \backslash \mathcal{R}}$. We say that $v_o$ and $v_p$  \emph{correspond} to $o$ and $P$, respectively. We add an edge $(v_o,v_p)$ to $E$ for each pair $v_o,v_p$ if $\{o\} \cup P\in \mathcal{S}\backslash \mathcal{R}$. Then, we apply the following reduction rule.

\begin{redrule}\label{matching}
Compute a maximum matching $M$ in $B$. Let $V'_O$ be the set of unmatched vertices of $V_O$ and $O'$ be the elements of $O \subset \mathcal{U}$ corresponding to those vertices. Likewise, let $\mathcal{S}(O')$ be the sets in $\mathcal{S} \backslash \mathcal{R}$ that contain elements of $O'$. Reduce to $\mathcal{U'}=\mathcal{U} \backslash O'$ and $\mathcal{S'} = \mathcal{S} \backslash \mathcal{S}(O')$.
\end{redrule}

\begin{example}\label{bipartiteExample}
For the instance of Example \ref{instanceExample}, $O =\{a,h,k,l\}$, and $$\mathcal{S}\backslash \mathcal{R}=\{\{b,c,e,a\}, \{e,i,f,a\},\{e,g,i,h\},\{i,j,m,k\},\{i,j,m,l\}\}\,.$$ According to our construction, there will be one vertex in $V_{\mathcal{S}\backslash\mathcal{R}}$ for the subsets $\{b,c,e\}$, $\{e,i,f\}$, $\{e,g,i\}$ and $\{i,j,m\}$. Figure \ref{bipartiteGraph} depicts the bipartite graph constructed as described above. The left column represents the vertices in $V_O$ while the right column represents the vertices in $V_{\mathcal{S}\backslash \mathcal{R}}$. A maximum matching in this bipartite graph is highlighted with bold lines. Observe that the element $k$ (or alternatively $l$) will be removed from $\mathcal{U}$ by Reduction Rule \ref{matching}.
\end{example}

\begin{figure}[htb]     
     \centerline{{\includegraphics[scale=0.8]{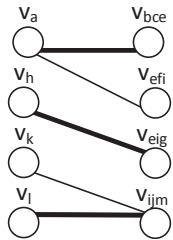}}}
     \caption{The bipartite constructed with $O$ and $\mathcal{S}\backslash \mathcal{R}$ from Example \ref{bipartiteExample}} \label{bipartiteGraph}
\end{figure}


We show next that Reduction Rule \ref{matching} correctly reduces $\mathcal{U}$ and $\mathcal{S}$. 


\begin{lemma}\label{unmatched}
$\mathcal{S}$ has a $(k,r,t)$-set packing if and only if $\mathcal{S'}$ has a $(k,r,t)$-set packing.
\end{lemma}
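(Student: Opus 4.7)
The easy direction is immediate: $\mathcal{S}'\subseteq\mathcal{S}$, so any $(k,r,t)$-set packing of $\mathcal{S}'$ is one of $\mathcal{S}$. For the converse, suppose $\mathcal{K}$ is a $(k,r,t)$-set packing of $\mathcal{S}$; my aim is to build a $(k,r,t)$-set packing $\mathcal{K}'\subseteq\mathcal{S}'$ of the same size by rewriting the ``bad'' sets $\mathcal{K}_{\mathrm{bad}}:=\mathcal{K}\cap\mathcal{S}(O')$. By Lemma~\ref{contained}(iv), every $S\in\mathcal{K}_{\mathrm{bad}}$ has the form $S=\{o\}\cup P$ with $o\in O'$ and $v_P\in V_{\mathcal{S}\setminus\mathcal{R}}$; moreover $P\subseteq\mathrm{val}(\mathcal{R})$ while every element of $O$ lies outside $\mathrm{val}(\mathcal{R})$.

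The key matching observation is that since $v_o$ is unmatched and $M$ is a maximum matching, the edge $(v_o,v_P)$ cannot be augmenting, so $v_P$ must be matched in $M$ to some vertex $v_{m(P)}$ with $m(P)\in O\setminus O'$. Hence $S^{*}:=\{m(P)\}\cup P$ lies in $\mathcal{S}\setminus\mathcal{R}$ and, since $m(P)\notin O'$, in $\mathcal{S}'$. I would then define
\[
\mathcal{K}' := (\mathcal{K}\setminus\mathcal{K}_{\mathrm{bad}})\cup\bigl\{S^{*}:S\in\mathcal{K}_{\mathrm{bad}}\bigr\}.
\]
Size preservation is easy: injectivity of $M$ keeps the new sets pairwise distinct, and $S^{*}\notin\mathcal{K}$ since otherwise $S$ and $S^{*}$ would both belong to $\mathcal{K}$ while sharing the $(r-1)$-element set $P$, contradicting $r-1>t$ (recall $t\leq r-2$).

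The real work is checking pairwise overlap. For any $S''\in\mathcal{K}\setminus\{S\}$: if $S''\in\mathcal{R}$, then $S''\subseteq\mathrm{val}(\mathcal{R})$ contains neither $o$ nor $m(P)$, so $|S^{*}\cap S''|=|P\cap S''|=|S\cap S''|\leq t$. If $S''=\{o''\}\cup P''\in\mathcal{S}\setminus\mathcal{R}$, the same reasoning gives $|S^{*}\cap S''|-|S\cap S''|=[m(P)=o'']-[o=o'']$, so the constraint can degrade only when $m(P)=o''$ and $o\ne o''$. This is the main obstacle, because in that case the overlap may jump from $|P\cap P''|\leq t$ to $|P\cap P''|+1$ and violate the bound.

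I would resolve this by performing the rewrites along $M$-alternating paths in $B$ rooted at $V_{O'}$ rather than one at a time. In the problematic configuration, $v_{o''}=v_{m(P)}$ is matched to $v_P$ but is also adjacent to $v_{P''}$, and $v_{P''}$ is in turn matched (by maximality of $M$) to some $v_{o'''}$ with $o'''\in O\setminus O'$; this yields an alternative replacement set $\{o'''\}\cup P''\in\mathcal{S}'$. Since $M$ is a maximum matching, every alternating path leaving $V_{O'}$ terminates at a matched $V_O$-vertex, so chaining these local swaps terminates. Processing all bad sets along such alternating paths simultaneously (or, equivalently, swapping $M$ with the symmetric difference of the chosen alternating paths before doing the substitution) produces a consistent assignment in which every new set intersects every other set of $\mathcal{K}'$ in at most $t$ elements, completing the proof.
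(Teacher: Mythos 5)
Your proposal is correct and follows essentially the same route as the paper's proof: both use the maximality of $M$ (via the impossibility of augmenting paths) to show that each $v_P$ arising from a bad set is matched to some $o\in O\setminus O'$, and both resolve the cascading conflicts that occur when the replacement element already lies in another packing set by continuing the rewrite along $M$-alternating paths. Your write-up is somewhat more explicit about termination and about why the rewritten sets remain distinct, but the underlying argument is the same.
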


\begin{proof}
Since $\mathcal{S'} \subseteq \mathcal{S}$, if $\mathcal{S'}$ has a $(k,r,t)$-set packing, then $\mathcal{S}$ has one, as well. Suppose that $\mathcal{S}$ has a $(k,r,t)$-set packing $\mathcal{K}$. If $\mathcal{K}$ is not a solution in $\mathcal{S'}$ then at least one set in $\mathcal{K}$ contains an \emph{unmatched element} in $O'$.  

We next show that we can transform $\mathcal{K}$ into a $(k,r,t)$-set packing of $\mathcal{S'}$. Let $\mathcal{K'}=\{S_{o'_1}, \dots, S_{o'_l} \}$ ($l \leq k$) be the sets of $\mathcal{K}$ that contain unmatched elements. By Lemma \ref{contained}, each $S_{o'_i} \in \mathcal{K'}$ contains only one element of $O$. Let us denote as $o'_i$ the element in $O'$ contained in the set $S_{o'_i}$.

We next show that for each $o'_i$, there exists an element in $O \backslash O'$ that can replace $o'_i$ to form a new set. By our construction of $B$, there is a vertex $v_{p_{i}} \in V_{\mathcal{S}\backslash{R}}$ representing the subset $P_i = S_{o'_{i}} \backslash \{o'_i\}$, and this vertex $v_{p_i}$ is adjacent to $v_{o'_i} \in V_O$ (the representative vertex of $o'_i$). The vertex $v_{p_i}$ is matched. That is, there exists a matched edge $(v_{o_i},v_{p_i})$ where $v_{o_i} \in V'_{O}$. Otherwise, we could add the edge $(v_{o'_i},v_{p_i})$ to $M$ contradicting the assumption that $M$ is maximum. By our construction, $S_{o_i} = \{o_i\} \cup P_i$ is a set in $\mathcal{S} \backslash \mathcal{R}$. Thus, for each $o'_i$ there exists an element $o_i$ in $O \backslash O'$ that can replace $o'_i$ to form the new set $S_{o_i} = (S_{o'_i} \backslash \{o'_i\}) \cup \{o_i\}$.

Let us denote as $\mathcal{K^*}$ by these new sets  $\{S_{o_1}, \dots, S_{o_l}\}$. We now claim that these sets pairwise overlap in at most $t$ elements. For the sake of contradiction, suppose that there is a pair $S_{o_i},S_{o_j}$ that conflicts
, i.e., $|S_{o_i} \cap S_{o_j}| \geq t+1$. Since $S_{o_i} = P_i \cup \{o_i\}$, $S_{o_j} = P_i \cup \{o_j\}$ and $o_i \neq o_j$, the conflict is only possible if $|P_i \cap P_j| \geq t+1$. Each set $S_{o_i}$ is replacing a set $S_{o'_i}$ in $\mathcal{K}$ (a $(k,r,t)$-set packing). Since $S_{o'_i} \cap S_{o_i} = P_i$ and  $S_{o'_j} \cap S_{o_j} = P_j$, $|P_i \cap P_j| \leq t$, as otherwise $\mathcal{K}$ would not be a $(k,t)$ set packing.


We next show that we can form a solution of $\mathcal{S'}$ by changing $\mathcal{K}$. First, we replace $\mathcal{K'}$ by $\mathcal{K^*}$ in $\mathcal{K}$. If each set in $\mathcal{K^*}$ overlaps in at most $t$ elements with each set in $\mathcal{K} \backslash \mathcal{K'}$, then $\mathcal{K}$ is a $(k,r,t)$-set packing and we are done. Assume that there is a set $S_{o_i} \in \mathcal{K^*}$ that overlaps in more than $t$ elements with some set in $\mathcal{K} \backslash \mathcal{K^*}$ (we already showed that the sets in $\mathcal{K^*}$ pairwise overlap in at most $t$ elements). Since $S_{o_i} = P_i \cup \{o_i\}$, $P_i$ shares at most $t$ elements with any set in $\mathcal{K} \backslash S_{o_i}$. Therefore, if $S_{o_i} = P_i \cup \{o_i\}$ conflicts with some set $S^* \in \mathcal{K} \backslash \mathcal{K^*}$ this is only possible if $o_i \in S^*$. Once again by our construction of $B$, $P^* = S^* \backslash \{o_i\}$ is represented by a vertex $v_{p^*} \in V_{\mathcal{S} \backslash \mathcal{R}}$, and $v_{p^*}$ is adjacent to $v_{o_i}$. We claim that $v_{p^*}$ is a matched vertex, that is, there exists an edge $(v_{o^*},v_{p^*}) \in M$.  Assume by contradiction that $v_{p^*}$ is unmatched. This would imply that we can increase the size of the matching by adding the edges $(v_{o'_i},v_{p_i})$ and $(v_{o_i},v_{p^*})$ and remove the matched edge $(v_{o_i},v_{p_i})$. However, this contradict the assumption that $M$ is maximum. In this way, we can change $S^*$ by removing $o_i$ and adding $o^*$, i.e., $S^*$ becomes $(S^* \backslash \{o_i\}) \cup \{o^*\}$. We can repeat this argument if the new $S^*$ conflicts with another set in $\mathcal{K} \backslash \mathcal{K^*}$. Applying the same argument we can iterate replacing sets in $\mathcal{K}$ until we have a  $(k,r,t)$-set packing in~$\mathcal{S'}$. \qed

\end{proof}

The correctness of our kernelization algorithm (Algorithm \ref{iterative}) is given by Lemmas \ref{solutionsets}-\ref{unmatched}. We next show that this algorithm runs in polynomial time.

\begin{lemma}\label{iterativetime}
Algorithm \ref{iterative} runs in polynomial time.
\end{lemma}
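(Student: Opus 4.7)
The plan is to inspect each step of Algorithm~\ref{iterative} separately and show that each runs in polynomial time (in $|\mathcal{U}|$ and $|\mathcal{S}|$, with $r$ and $t$ treated as constants), and then bound the number of iterations of the outer \textbf{repeat}\textendash\textbf{until} loop.

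First I would dispose of the easy steps. Reduction Rule~\ref{cleanupSetPacking} is a single scan, running in $O(|\mathcal{U}|\cdot|\mathcal{S}|)$. The greedy computation of a maximal $(r,r-2)$-set packing in Line~4 can be done by scanning the candidate sets and, for each, checking overlap against every set currently in $\mathcal{R}$, giving $O(|\mathcal{S}|^2 \cdot r)$. The maximal $(r,t)$-set packing of $\mathcal{R}$ in Line~11 is analogous. Reduction Rule~\ref{matching} requires constructing an auxiliary bipartite graph of size polynomial in $|\mathcal{U}|+|\mathcal{S}|$ (the number of relevant subsets $P$ of size $r-1$ from sets in $\mathcal{S}\setminus\mathcal{R}$ is at most $r\cdot|\mathcal{S}|$) and then running a maximum matching algorithm, which is polynomial.

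Next I would analyze Algorithm~\ref{extrasubgraphsred}, invoked in Line~6. Its outer index $i$ ranges over the constant interval $[t+1, t_{Ini}]$; for each $i$, we iterate over all $S\in\mathcal{R}$ (at most $|\mathcal{S}|$) and over all subsets $P\subsetneq S$ of size $i$, of which there are at most $\binom{r}{i}=O(1)$. Computing $\mathcal{P}$ (Line~6 of Algorithm~\ref{extrasubgraphsred}) and performing the update take $O(|\mathcal{R}|\cdot r)$ time. Altogether Algorithm~\ref{extrasubgraphsred} runs in time polynomial in $|\mathcal{S}|$ (with constants depending on $r$ and $t$).

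The main thing to verify is the number of iterations of the outer \textbf{repeat} loop (Lines~3\textendash 9). The key observation is the invariant that once a set enters $\mathcal{E}$ it stays there, and Line~4 draws candidates only from $\mathcal{S}\setminus(\mathcal{R}\cup\mathcal{E})$; hence no set is ever added to $\mathcal{R}$ more than once across the whole execution. Each iteration that does not terminate adds at least one new set to $\mathcal{R}$ (by the exit condition in Line~9), so the total number of iterations is at most $|\mathcal{S}|+1$. Combining this with the per-iteration polynomial bounds established above yields that Algorithm~\ref{iterative} runs in time polynomial in $|\mathcal{U}|+|\mathcal{S}|$, as claimed. The only step that requires care is the iteration bound, since $\mathcal{R}$ itself can shrink between iterations; the monotone-growth invariant on $\mathcal{R}\cup\mathcal{E}$ is what makes the argument go through. \qed
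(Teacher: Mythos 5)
Your proof is correct and follows essentially the same route as the paper's, which simply lists polynomial bounds for each component (Reduction Rule~\ref{cleanupSetPacking}, the greedy computation of $\mathcal{R}$, Algorithm~\ref{extrasubgraphsred}, and the maximum matching). Your additional argument bounding the number of \textbf{repeat}--\textbf{until} iterations by $|\mathcal{S}|+1$ --- via the observation that a set, once moved into $\mathcal{E}$, never re-enters the candidate pool $\mathcal{S}\setminus(\mathcal{R}\cup\mathcal{E})$, so each set is added to $\mathcal{R}$ at most once over the whole execution --- is a detail the paper's proof leaves implicit, and it is exactly the right way to justify that the loop terminates after polynomially many rounds.
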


\begin{proof}
Reduction Rule \ref{cleanupSetPacking} can be computed in $O(n^r)$ time.  An upper bound to compute the set $\mathcal{R}$ is $O(n^r)$. Algorithm \ref{extrasubgraphsred} runs in time $O(n^{r^2})$. In addition, a maximum matching can be computed in polynomial time. \qed
\end{proof}


It remains to determine an upper bound for the number of elements in the reduced universe $\mathcal{U'}$. Given that $\mathcal{U'} = val(\mathcal{S'})$ (Reduction Rule \ref{matching}), we will use $\mathcal{S'}$ to upper-bound $\mathcal{U'}$.

By Line 4 of Algorithm \ref{iterative}, $\mathcal{R} \subseteq \mathcal{S}$. Thus, $\mathcal{S} = (\mathcal{S} \backslash \mathcal{R}) \cup \mathcal{R}$. After Reduction Rule \ref{matching}, some members in $\mathcal{S} \backslash \mathcal{R}$ were removed, and we obtained the reduced collection $\mathcal{S'} = ((\mathcal{S} \backslash \mathcal{R}) \backslash \mathcal{S}(O')) \cup \mathcal{R}$. This is equivalent to $\mathcal{S'} = \mathcal{S} \backslash \mathcal{S}(O')$. Thus, $\mathcal{S'} = (\mathcal{S'} \backslash \mathcal{R}) \cup \mathcal{R}$.

Since $\mathcal{S'} \backslash \mathcal{R} \subseteq \mathcal{S} \backslash \mathcal{R}$, by Lemma \ref{contained} we know that each set in $\mathcal{S'} \backslash \mathcal{R}$ contains one element in $O$ and $r-1$ elements in a set of $\mathcal{R}$. Thus, $val(\mathcal{S'})= O \cup val(\mathcal{R})$.

The elements in $O'$ were removed from $O$ (Reduction Rule \ref{matching}), therefore, $val(\mathcal{S'}) = (O \backslash O') \cup val(\mathcal{R}) = \mathcal{U'}$.

We next provide an upper bound for the size of $\mathcal{R}$. 

\begin{lemma}\label{sizeHgR}
The size of $\mathcal{R}$ is at most $2 r^{r-1}k^{r-t-1}$, i.e., 
$|val(\mathcal{R})| \leq  2r^{r} k^{r-t-1}$. 
\end{lemma}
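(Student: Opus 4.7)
The plan is to use the maximal $(r,t)$-set packing $\mathcal{M}$ of $\mathcal{R}$ computed in Line~11 of Algorithm~\ref{iterative} as an anchor. Since Algorithm~\ref{iterative} is still running past Line~14, we have $|\mathcal{M}| \leq k-1$. By maximality of $\mathcal{M}$, every $S\in\mathcal{R}$ either belongs to $\mathcal{M}$ itself or \emph{conflicts} with some $S^*\in\mathcal{M}$, meaning $|S\cap S^*|\geq t+1$. So it suffices to bound, for each $S^*\in\mathcal{M}$, the number of sets in $\mathcal{R}$ that share at least $t+1$ elements with~$S^*$.

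Fix $S^*\in\mathcal{M}$. Any set $S\in\mathcal{R}$ that conflicts with $S^*$ contains some $(t+1)$-subset $P\subseteq S^*$, so by the union bound the number of such conflicting sets is at most
$$ \binom{|S^*|}{t+1}\cdot\max_{P\subseteq S^*,\,|P|=t+1}\bigl|\{S\in\mathcal{R}:S\supseteq P\}\bigr|. $$
Here is where I invoke Algorithm~\ref{extrasubgraphsred}: after it has run, no $(t+1)$-subset $P$ can be contained in more than $f(t+1)$ sets of $\mathcal{R}$, since otherwise Lines 8--11 would have deleted the excess. Recall that $f(t_{Ini}+1)=f(r-1)=1$ and $f(i)=(r-t)(k-1)f(i+1)+1$, which is a geometric sum
$$ f(t+1)=\sum_{j=0}^{r-t-2}[(r-t)(k-1)]^{j}\leq 2\,[(r-t)(k-1)]^{r-t-2}, $$
where the factor of $2$ absorbs the geometric-series tail (and the case $(r-t)(k-1)\leq 1$ is easily handled separately).

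Combining these bounds, and using $\binom{r}{t+1}\leq r^{t+1}$ and $r-t\leq r$,
$$ |\mathcal{R}|\leq |\mathcal{M}|\bigl(1+\tbinom{r}{t+1}f(t+1)\bigr)\leq (k-1)\cdot r^{t+1}\cdot 2\,r^{r-t-2}(k-1)^{r-t-2}\leq 2r^{r-1}k^{r-t-1}. $$
(The additive $+|\mathcal{M}|\leq k-1$ term for sets already in $\mathcal{M}$ is absorbed, as $r\geq t+2$ forces $r-t-1\geq 1$.) Finally, since each set of $\mathcal{R}$ has at most $r$ elements, $|val(\mathcal{R})|\leq r\cdot|\mathcal{R}|\leq 2r^{r}k^{r-t-1}$, which is the claimed bound.

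The main obstacle is the bookkeeping on $f(t+1)$: one must verify that Algorithm~\ref{extrasubgraphsred}'s iteration from $i=t_{Ini}$ down to $i=t+1$ really enforces the bound $f(t+1)$ on every $(t+1)$-subset (not just those processed at the innermost iteration), and that the geometric-series closed form truly dominates the recursion $f(i)=(r-t)(k-1)f(i+1)+1$ with the stated constant $2$. Everything else is straightforward counting.
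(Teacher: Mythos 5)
Your proposal is correct and follows essentially the same route as the paper's proof: anchor on the maximal $(r,t)$-set packing $\mathcal{M}$ with $|\mathcal{M}|\leq k-1$, charge every set of $\mathcal{R}$ to a $(t+1)$-subset of some member of $\mathcal{M}$, bound the multiplicity of each such subset by $f(t+1)\leq 2[(r-t)(k-1)]^{r-t-2}$ via the geometric sum, and multiply. The only (harmless) difference is that you explicitly account for the sets already in $\mathcal{M}$ with an additive term, a detail the paper elides.
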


\begin{proof}
Let $\mathcal{M}$ be a maximal $(r,t)$-set packing of $\mathcal{R}$. Each set $S \in \mathcal{R}$ is either in $\mathcal{M}$ or it overlaps with at least one set $S' \in \mathcal{M}$ in at least $t+1$ elements (i.e., $|S \cap S'| \geq t+1$). Otherwise, $\mathcal{M}$ would not be a maximal $(r,t)$-set packing. 

For each set $S \in \mathcal{R}$, each subset of $S$ with $t+1$ elements is contained in at most $f(t+1)$ sets in $\mathcal{R}$ (a direct consequence of Lemma \ref{solutionsets}). Therefore, we can use the subsets of size $t+1$ contained in the members of $\mathcal{M}$ to upper-bound the number of sets in $\mathcal{R}$.

For each set in $\mathcal{M}$, there are $\binom{r}{t+1}$ subsets of size $t+1$. Thus, $|\mathcal{R}|$ $\leq$ $|\mathcal{M}|$ $\binom{r}{t+1}$ $f(t+1)$. 

If $|\mathcal{M}| \geq k$ then $\mathcal{M}$ is a $(k,r,t)$-set packing of $\mathcal{S}$ and the algorithm outputs a subset of $\mathcal{M}$ and stops in Line 14. Thus, $|\mathcal{M}| \leq k-1$. On the other hand,  $f(t+1)= \sum_{j=0}^{t_{Ini}-(t+1)+1} [(r-t)(k-1)]^j$ $<$ $2((r-t)(k-1))^{t_{Ini}-(t+1)+1}$. 

In this way, $|\mathcal{R}|$ $\leq$ $r^{t+1}$ $k-1$ $2((r-t)(k-1))^{t_{Ini}-(t+1)+1}$. Given that $t_{Ini}=r-2$, $|\mathcal{R}| \leq 2r^{r-1} k^{r-t-1}$ and $val(\mathcal{R}) \leq 2r^r k^{r-t-1}$. \qed
\end{proof}


Now, we  bound the size of $O \backslash O'$. 

\begin{lemma}
After applying Reduction Rule \ref{matching}, there are at most $2 r^r k^{r-t-1}$ elements in $O \backslash O'$.
\end{lemma}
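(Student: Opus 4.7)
The plan is to bound $|O\setminus O'|$ by counting the vertices on the right-hand side of the auxiliary bipartite graph $B$, since the matched elements of $V_O$ inject into $V_{\mathcal{S}\setminus\mathcal{R}}$.

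First I would observe that an element $o \in O \setminus O'$ corresponds, by definition of $O'$, to a matched vertex $v_o \in V_O$ of the maximum matching $M$. Since $M$ is a matching, the map sending each such $v_o$ to its partner $v_p \in V_{\mathcal{S}\setminus\mathcal{R}}$ is injective, giving the crude but sufficient bound
\[
|O\setminus O'| \;\leq\; |V_{\mathcal{S}\setminus\mathcal{R}}|.
\]

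Next I would bound $|V_{\mathcal{S}\setminus\mathcal{R}}|$ using the structural information from Lemma~\ref{contained}(iv): every set $S \in \mathcal{S}\setminus\mathcal{R}$ decomposes as $\{o\} \cup P$ where $o \in O$ and $P$ is an $(r-1)$-subset of some set in $\mathcal{R}$. Consequently, every vertex of $V_{\mathcal{S}\setminus\mathcal{R}}$ corresponds to an $(r-1)$-subset $P$ that is contained in at least one member of $\mathcal{R}$. Since each set in $\mathcal{R}$ has size at most $r$, it contains at most $\binom{r}{r-1} = r$ such subsets $P$. Therefore
\[
|V_{\mathcal{S}\setminus\mathcal{R}}| \;\leq\; r \cdot |\mathcal{R}|.
\]

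Finally I would invoke Lemma~\ref{sizeHgR}, which gives $|\mathcal{R}| \leq 2 r^{r-1} k^{r-t-1}$. Combining the two inequalities yields
\[
|O\setminus O'| \;\leq\; r \cdot 2r^{r-1} k^{r-t-1} \;=\; 2 r^r k^{r-t-1},
\]
as required. I do not anticipate a substantive obstacle here: the entire argument is a two-step counting reduction via the matching and the previously proved size bound on $\mathcal{R}$; the only subtlety is to confirm via Lemma~\ref{contained}(iv) that every $P$-vertex is indeed drawn from the $(r-1)$-subsets of sets in $\mathcal{R}$, which is exactly what that lemma supplies.
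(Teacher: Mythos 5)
Your proposal is correct and follows essentially the same route as the paper: bound $|O\setminus O'|$ by the matched vertices of $V_O$, which inject into $V_{\mathcal{S}\setminus\mathcal{R}}$, bound $|V_{\mathcal{S}\setminus\mathcal{R}}|$ by $\binom{r}{r-1}|\mathcal{R}| = r|\mathcal{R}|$, and conclude via Lemma~\ref{sizeHgR}. Your explicit appeal to Lemma~\ref{contained}(iv) to justify that every $P$-vertex is an $(r-1)$-subset of a member of $\mathcal{R}$ merely spells out what the paper leaves implicit in the phrase ``by our construction of $V_{\mathcal{S}\backslash\mathcal{R}}$''.
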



\begin{proof}
After applying Reduction Rule \ref{matching}, $|V'_O| \leq |V_{\mathcal{S}\backslash \mathcal{R}}|$. By our construction of the set  $V_{\mathcal{S}\backslash \mathcal{R}}$, there are at most $\binom{r}{r-1}|
\mathcal{R}|=$$r|\mathcal{R}|$ vertices in $V_{\mathcal{S}\backslash \mathcal{R}}$. By Lemma \ref{sizeHgR}, $|\mathcal{R}| \leq 2 r^{r-1} k^{r-t-1}$ and the lemma follows. \qed
\end{proof}

Lemma \ref{iterativetime}, the $2 r^r k^{r-t-1}$ elements in $O \backslash O'$ together with the $2r^{r} k^{r-t-1}$ elements in $\mathcal{R}$ give the following result.

\begin{theorem}
The $r$-Set Packing with $t$-Overlap possesses a problem kernel with  $O(r^r k^{r-t-1})$ elements from the given universe.
\end{theorem}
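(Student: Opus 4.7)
The statement is the final wrap-up of the entire kernelization for $r$-Set Packing with $t$-Overlap, so the plan is essentially to assemble the pieces already proved. First I would verify correctness of Algorithm \ref{iterative} by chaining Lemmas \ref{solutionsets}, \ref{iterativeLemma}, and \ref{unmatched}: Lemma \ref{iterativeLemma} shows that removing the extra sets $\mathcal{E}$ found in Lines 2--9 preserves the answer, and Lemma \ref{unmatched} shows that the final matching-based Reduction Rule \ref{matching} also preserves the answer. Together with Lemma \ref{iterativetime}, this certifies we have a genuine polynomial-time kernelization.

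Next I would bound the size of the reduced universe $\mathcal{U}'$. The key identity is that, after all reductions, $\mathcal{U}' = \mathrm{val}(\mathcal{S}')$ (Reduction Rule \ref{cleanupSetPacking} is re-applied), and $\mathcal{S}' = (\mathcal{S}' \setminus \mathcal{R}) \cup \mathcal{R}$. By Lemma \ref{contained}, every set in $\mathcal{S}' \setminus \mathcal{R}$ consists of one element of $O$ together with $r-1$ elements drawn from some set in $\mathcal{R}$; thus
\[
\mathcal{U}' \;=\; \mathrm{val}(\mathcal{R}) \;\cup\; (O \setminus O').
\]

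Finally, I would plug in the two size bounds already established. Lemma \ref{sizeHgR} gives $|\mathrm{val}(\mathcal{R})| \le 2 r^{r} k^{r-t-1}$, and the subsequent lemma gives $|O \setminus O'| \le 2 r^{r} k^{r-t-1}$. Adding these yields
\[
|\mathcal{U}'| \;\le\; 4\, r^{r} k^{r-t-1} \;=\; O(r^{r} k^{r-t-1}),
\]
and since each reduced set has at most $r$ elements from $\mathcal{U}'$, the bound also controls $|\mathcal{S}'|$ up to a polynomial factor in the kernel size. There is no genuine obstacle at this final stage: all the technical work (the counting argument bounding $\mathcal{P}$ via $f(i) = (r-t)(k-1) f(i+1) + 1$, the bipartite matching replacement argument, and the use of a maximal $(r,t)$-set packing $\mathcal{M}$ with $|\mathcal{M}| \le k-1$ to bound $|\mathcal{R}|$) has already been carried out; the theorem follows by direct substitution.
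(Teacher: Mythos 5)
Your proposal is correct and follows essentially the same route as the paper: the paper's own justification is precisely the decomposition $\mathcal{U}'=\mathrm{val}(\mathcal{R})\cup(O\setminus O')$ derived from Lemma~\ref{contained} and Reduction Rule~\ref{matching}, combined with the two $2r^rk^{r-t-1}$ bounds and the polynomial running time from Lemma~\ref{iterativetime}. Nothing is missing.
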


\subsection{Packing Graphs with $t$-Overlap}\label{pairwisepacking}

In this subsection, we provide kernel results for graph packing problems that allow pairwise overlap. To this end, we first transform 
the $\mathcal{H}$-Packing with $t$-Overlap problem (and its variants) to an instance of the $r$-Set Packing with $t$-Overlap problem. Then, we run kernelization algorithm \ref{iterative} (Section \ref{kernelalgorithm}). 
Finally, we re-interpret that kernel as a kernel for the original (graph) problem. 



%
%
%

We assume that each $H \in \mathcal{H}$ is an arbitrary graph with at least $t+1$ vertices. Otherwise, we just add the set of $\mathcal{H}$-subgraphs of $G$ each with at most $t$ vertices straight to a $(k,r,t)$-$\mathcal{H}$-packing and decrease the parameter $k$ by the size of that set. 


The kernelization algorithm for the $\mathcal{H}$-Packing with $t$-Overlap problem corresponds to Algorithm \ref{GenericKernelization} with input: $G$, $\mathcal{H}$, $k$, ``$r$-Set Packing with $t$-Overlap'', Transformation \ref{VertexPackingToOSP}, Algorithm \ref{iterative}, and Transformation \ref{FromUniverseToGraph}. Since Algorithm \ref{iterative} returns a reduced universe with $O(r^r k^{r-t-1})$ elements, the order of $G'$ is $O(r^r k^{r-t-1})$. This allow us to conclude.


\begin{theorem}\label{pairwisevertexkernel}
The $\mathcal{H}$-Packing with $t$-Overlap possesses a problem kernel with $O(r^r k^{r-t-1})$ vertices, where $r=r(\mathcal{H})$.
\end{theorem}

The induced version seeks for at least $k$ induced $\mathcal{H}$-subgraphs in $G$ that pairwise overlap in at most $t$ vertices. To achieve a problem kernel, we redefine $\mathcal{S}$ of Transformation \ref{VertexPackingToOSP} by adding a set $S=V(H)$ per each each \textbf{induced} $\mathcal{H}$-subgraph $H$ of $G$.

\begin{theorem}
The Induced-$\mathcal{H}$-Packing with $t$-Overlap has a problem kernel with $O(r^r k^{r-t-1})$ vertices, where $r=r(\mathcal{H})$.
\end{theorem}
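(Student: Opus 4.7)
The plan is to parallel the proof of Theorem~\ref{pairwisevertexkernel} via Algorithm~\ref{GenericKernelization}, modifying only the transformation step so that sets correspond to induced $\mathcal{H}$-subgraphs. Specifically, I would redefine the collection in Transformation~\ref{VertexPackingToOSP} by letting $\mathcal{S}$ contain one set $S = V(H)$ for each \emph{induced} $\mathcal{H}$-subgraph $H$ of $G$; the universe is still $\mathcal{U} = V(G)$, and $r = r(\mathcal{H})$. Note that for the induced case, if $V(H_1) = V(H_2)$ then necessarily $H_1 = G[V(H_1)] = G[V(H_2)] = H_2$, so no analogue of Reduction Rule~\ref{cleanupRepeated} is needed: distinct induced subgraphs have distinct vertex sets, making the map from induced $\mathcal{H}$-subgraphs to sets a bijection.

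Next I would verify the equivalence: $G$ has a $(k,r,t)$-induced-$\mathcal{H}$-packing if and only if $\mathcal{S}$ has a $(k,r,t)$-set packing. For the forward direction, given an induced packing $\{H_1,\dots,H_k\}$, the sets $V(H_i)$ are all in $\mathcal{S}$ by construction, and $|V(H_i) \cap V(H_j)| \le t$ translates directly to a set-packing overlap condition. For the reverse direction, given a set packing $\{S_1,\dots,S_k\}$, each $S_i \in \mathcal{S}$ corresponds to an induced $\mathcal{H}$-subgraph $H_i = G[S_i]$ of $G$, and the overlap condition transfers back.

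Then I would apply Algorithm~\ref{iterative} to obtain a reduced universe $\mathcal{U}'$ of size $O(r^r k^{r-t-1})$ and a reduced collection $\mathcal{S}'$. Finally, Transformation~\ref{FromUniverseToGraph} yields the reduced graph $G' = G[\mathcal{U}']$. The final equivalence, $G'$ has a $(k,r,t)$-induced-$\mathcal{H}$-packing iff $\mathcal{S}'$ has a $(k,r,t)$-set packing, follows by the same bijective correspondence, using the fact that induced subgraphs of $G'$ are exactly induced subgraphs of $G$ restricted to vertex subsets of $\mathcal{U}'$ (since $G' = G[\mathcal{U}']$ and induction composes: $G'[W] = G[W]$ whenever $W \subseteq \mathcal{U}'$).

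The only minor subtlety, and likely the only point requiring care, is ensuring that after kernelization each $S_i \in \mathcal{S}'$ still gives rise to an induced $\mathcal{H}$-subgraph \emph{in the reduced graph} $G'$. This holds because Algorithm~\ref{iterative} only removes sets and elements, and the surviving sets still record induced $\mathcal{H}$-subgraphs whose vertices all lie in $\mathcal{U}' = V(G')$; thus $G'[S_i] = G[S_i]$ is isomorphic to some $H \in \mathcal{H}$. Since $|\mathcal{U}'| = O(r^r k^{r-t-1})$ by the kernel bound for $r$-Set Packing with $t$-Overlap, the theorem follows.
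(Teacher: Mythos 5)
Your proposal is correct and matches the paper's approach: the paper likewise obtains this kernel by redefining $\mathcal{S}$ in Transformation~\ref{VertexPackingToOSP} to contain one set $V(H)$ per \emph{induced} $\mathcal{H}$-subgraph and then reusing the same pipeline (Algorithm~\ref{iterative} followed by Transformation~\ref{FromUniverseToGraph}). Your added observation that Reduction Rule~\ref{cleanupRepeated} becomes vacuous for induced subgraphs (since the vertex set determines the induced subgraph) is a correct minor simplification, and the remaining details you spell out are exactly what the paper leaves implicit.
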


\subsubsection{Edge-Overlap}\label{edgeoverlap}
We introduce the $\mathcal{H}$-Packing with $t$-Edge Overlap Problem to regulate the overlap between the edges of the $\mathcal{H}$-subgraphs instead of the vertices. Recall that there is an upper bound $m(\mathcal{H})$ on the number of edges of each graph occurring in $\mathcal{H}$. Also as in Subsection \ref{EdgeMembershipSection},
$\mathcal{H}$ hosts only graphs without isolated vertices.


%
%
%

We assume that each $H \in \mathcal{H}$ has at least $t+1$ edges. Otherwise, we just add the set of $\mathcal{H}$-subgraphs of $G$ each with at most $t$ edges straight to a $(k,r,t)$-edge-$\mathcal{H}$-packing and decrease the parameter $k$ by the size of that set.

By following a similar scheme as in the vertex-version, the kernelization algorithm for the $\mathcal{H}$-Packing with $t$-Edge Overlap Problem consists of Algorithm \ref{GenericKernelization} with input: $G$, $\mathcal{H}$, $k$, ``$r$-Set packing with $t$-Overlap'', Transformation \ref{edgePackingToOSP}, Algorithm \ref{iterative}, and Transformation \ref{FromUniverseToGraph_v3}. Since Algorithm \ref{iterative} returns a universe with $O(r^r k^{r-t-1})$ elements and $G'$ does not contain isolated vertices (Lemma \ref{GNoIsolated}), $|V(G')| \leq 2|\mathcal{U'}| = O(r^r k^{r-t-1})$. 


 
\begin{theorem}\label{edgepackingkernel} 
The $\mathcal{H}$-Packing with $t$-Edge-Overlap has a problem kernel with $O(r^{r} k^{r-t-1})$ vertices, where $r=m(\mathcal{H})$.
\end{theorem}

\subsection{The Family of Cliques}
We next obtain an $O(r^rk^{r-t-1})$ kernel, where $r=r(\mathcal{H})$, for the $\mathcal{H}$-Packing with $t$-Edge-Overlap problem when $\mathcal{H}$ is a family of cliques. Each clique in $\mathcal{H}$ has at least $t+1$ vertices and at most $r(\mathcal{H})$ vertices, i.e., a $\mathcal{H}=\{K_{t+1}, \dots , K_{r(\mathcal{H})} \}$. 

We next present the formal definition of the problem. To avoid ambiguity, we will precede the word overlap either by ``edge'' or ``vertex'' to differentiate both versions of the problem. Let $t\geq0$ in the following definition.

\vspace{0.25cm}
\begin{center}
\fbox{
\parbox{13.5cm}{
\textbf{The $K_r$-Packing with $t$-Edge-Overlap problem}
    
    \noindent \emph{Input}: A graph $G$, and a non-negative integer $k$.

    \noindent \emph{Parameter}: $k$

    \noindent \emph{Question}: Does $G$ contain a \emph{$(k,r,t)$-edge-$K_r$-packing}, i.e., a set $\mathcal{K}=\{H_1, \dots ,H_k\}$ where each $H_i$ is a clique with at least $t+1$ and at most $r$ vertices and $|E(H_i) \cap E(H_j)| \leq t$ for any pair $H_i,H_j$ with $i\neq j$?
}}
\end{center}
\medskip
		
Note that when $t=0$, the cliques in $\mathcal{K}$ are pairwise edge-disjoint. This implies that they overlap in at most $1$ vertex. In the same way if $t=1$, any pair of cliques in $\mathcal{K}$ overlaps in at most 2 vertices. 

We next show that also in general, the $K_r$-Packing with $t$-Edge-Overlap Problem relates to the vertex-overlap version of the problem. More precisely, we will show that any $(k,r,t)$-edge-$K_r$-packing in $G$ is a $(k,r,t')$-$K_r$-packing in $G$  where $t'$ is the largest integer such that

\begin{equation}\label{valueoft}
    \frac{t'(t'-1)}{2} \leq t
\end{equation}

\noindent
for $t \geq 2$. As mentioned above, if $t=0$ then $t'=1$ and if $t=1$ then $t'=2$.

\begin{lemma}
Let $\mathcal{K}$ be any $(k,r,t)$-edge $K_r$-packing in $G$. Any pair of $K_r$'s in $\mathcal{K}$ overlaps in at most $t'$ vertices.
\end{lemma}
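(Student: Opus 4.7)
The plan is to argue by contradiction, exploiting the defining feature of cliques: every pair of vertices in a clique is joined by an edge. So if two cliques in the packing share many vertices, then they are automatically forced to share many edges, and this in turn bounds the possible vertex overlap via the edge-overlap constraint.

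More concretely, I would fix an arbitrary pair $H_i, H_j \in \mathcal{K}$ and let $s = |V(H_i) \cap V(H_j)|$. Since both $H_i$ and $H_j$ are cliques, the induced subgraph on the $s$ shared vertices is a $K_s$ in each of them; in particular every one of the $\binom{s}{2} = s(s-1)/2$ edges on these vertices belongs to both $E(H_i)$ and $E(H_j)$. Consequently,
\[
|E(H_i) \cap E(H_j)| \;\geq\; \binom{s}{2} \;=\; \frac{s(s-1)}{2}.
\]
Because $\mathcal{K}$ is a $(k,r,t)$-edge-$K_r$-packing, the left-hand side is at most $t$, so $s(s-1)/2 \leq t$. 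By the definition of $t'$ as the largest integer satisfying inequality (\ref{valueoft}), and since $s(s-1)/2$ is strictly increasing in $s$ for $s \geq 1$, this forces $s \leq t'$.

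The proof is essentially a single counting observation, so there is no real obstacle: one just has to verify the monotonicity of $s \mapsto s(s-1)/2$ to conclude that $s(s-1)/2 \leq t$ implies $s \leq t'$. Two boundary remarks are worth recording in the write-up: (i) if $s \leq 1$ the inequality $\binom{s}{2} = 0 \leq t$ is automatic and matches the stated base cases $t=0 \Rightarrow t'=1$ and $t=1 \Rightarrow t'=2$; (ii) no assumption on whether $H_i$ and $H_j$ have the same size is needed, since the shared vertex set induces a complete subgraph in each regardless of the individual clique orders. This completes the argument.
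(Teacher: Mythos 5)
Your proof is correct and follows essentially the same route as the paper: both observe that the shared vertex set induces a complete subgraph, so the two cliques share $\binom{s}{2}$ edges, and the edge-overlap bound $\binom{s}{2}\leq t$ together with the maximality in the definition of $t'$ forces $s\leq t'$. Your write-up is in fact slightly more careful than the paper's, since you explicitly note the monotonicity of $s\mapsto s(s-1)/2$ needed for the final implication.
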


\begin{proof}
Any pair of cliques in $\mathcal{K}$ shares a $K_s$ for some $0 \leq s \leq r-1$. In other words, the set of vertices $S$ shared between any pair of cliques in $\mathcal{K}$ induces a $K_s$ in $G$. We next prove that $s \leq t'$ where $t'$ is defined as in Equation \ref{valueoft}.

Since any pair of cliques in $\mathcal{K}$ overlap in at most $t$ edges (otherwise, $\mathcal{K}$ would not be a $(k,t)$-edge-$K_r$-packing), then $|E(G[S])| \leq t$. The number of edges in $G[S]$ is at most $\frac{|V(S)|(|V(S)-1|)}{2} \leq t$. Therefore, $|V(S)|$ is at most the largest integer $t'$ satisfying Equation \ref{valueoft}.
\qed
\end{proof}

\begin{coro}\label{equivalent}
Any $(k,r,t)$-edge-$K_r$-packing of $G$ is a $(k,r,t')$-$K_r$-packing of $G$ and vice versa where $t'$ is defined as Equation \ref{valueoft}.
\end{coro}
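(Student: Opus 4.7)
The plan is to derive the corollary directly from the preceding lemma together with a symmetric observation about how shared vertices between cliques determine shared edges. The forward direction is immediate: the preceding lemma already states that any $(k,r,t)$-edge-$K_r$-packing $\mathcal{K}$ in $G$ has the property that every pair of cliques in $\mathcal{K}$ overlaps in at most $t'$ vertices, which is exactly the definition of a $(k,r,t')$-$K_r$-packing.

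For the converse direction, suppose $\mathcal{K}=\{H_1,\dots,H_k\}$ is a $(k,r,t')$-$K_r$-packing, i.e., $|V(H_i)\cap V(H_j)|\leq t'$ for each $i\neq j$. The key observation is that since both $H_i$ and $H_j$ are cliques, the induced subgraph on $V(H_i)\cap V(H_j)$ is a complete graph in both $H_i$ and $H_j$. Consequently, every edge of $G$ with both endpoints in $V(H_i)\cap V(H_j)$ belongs to both $E(H_i)$ and $E(H_j)$, and conversely every edge shared between $H_i$ and $H_j$ must have both endpoints in the shared vertex set. Hence
\begin{equation*}
|E(H_i)\cap E(H_j)| \;=\; \binom{|V(H_i)\cap V(H_j)|}{2} \;\leq\; \binom{t'}{2} \;=\; \frac{t'(t'-1)}{2} \;\leq\; t,
\end{equation*}
where the last inequality is the defining property of $t'$ in Equation~\ref{valueoft}. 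Therefore $\mathcal{K}$ is a $(k,r,t)$-edge-$K_r$-packing of $G$.

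Combining both directions yields the equivalence claimed in the corollary. There is essentially no obstacle here: the argument rests only on the clique property (so intersections of vertex sets induce cliques in both ends) and on the choice of $t'$ as the largest integer satisfying $\binom{t'}{2}\leq t$. The only mild subtlety worth emphasizing in the write-up is why the shared edges are exactly determined by the shared vertices, which would fail for non-clique $\mathcal{H}$-subgraphs; this is precisely the reason the improved kernel bound specializes to the family of cliques.
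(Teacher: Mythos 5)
Your proof is correct and follows essentially the same route as the paper: the forward direction is exactly the preceding lemma, and your converse uses the same observation (that for cliques the shared edge count equals $\binom{|V(H_i)\cap V(H_j)|}{2}$) that underlies the lemma's proof, merely read in the other direction. The paper states the ``vice versa'' without spelling it out, so your explicit converse argument is a faithful completion of what the authors left implicit.
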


\noindent
By Corollary \ref{equivalent}, we can have the next result. 

\begin{theorem}
The $K_r$-Packing with $t$-Edge-Overlap problem has a kernel with $O(r^r k^{r-t'-1})$ vertices, where $t'$ is defined as Equation \ref{valueoft} and $r=r(\mathcal{H})$.
\end{theorem}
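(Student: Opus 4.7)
The plan is to invoke Corollary~\ref{equivalent} as a parameter-preserving reduction and then apply the previously obtained vertex-overlap kernel (Theorem~\ref{pairwisevertexkernel}) with the overlap parameter $t'$ in place of $t$. Concretely, given an instance $(G,k)$ of the $K_r$-Packing with $t$-Edge-Overlap problem, I would treat it directly as an instance of the $\mathcal{H}$-Packing with $t'$-Overlap problem for $\mathcal{H} = \{K_{t+1},\dots,K_{r}\}$, where $t'$ is the largest integer satisfying $t'(t'-1)/2 \leq t$. By Corollary~\ref{equivalent}, $(G,k)$ has a $(k,r,t)$-edge-$K_r$-packing if and only if it has a $(k,r,t')$-$K_r$-packing, so the two instances have identical yes/no answers and the parameter $k$ is preserved.

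Next I would run the kernelization algorithm guaranteed by Theorem~\ref{pairwisevertexkernel} on the vertex-overlap instance. This produces, in polynomial time, a reduced graph $G'$ with $O(r^r k^{r-t'-1})$ vertices that has a $(k,r,t')$-$\mathcal{H}$-packing if and only if $G$ does. Reinterpreting $G'$ back as an instance of the edge-overlap problem and applying Corollary~\ref{equivalent} in the reverse direction, we conclude that $(G',k)$ has a $(k,r,t)$-edge-$K_r$-packing if and only if $(G,k)$ does. Hence $(G',k)$ is the desired kernel, of order $O(r^r k^{r-t'-1})$.

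The two items that need a small amount of care, but which are essentially routine, are: (i) verifying that Theorem~\ref{pairwisevertexkernel} applies to the family $\mathcal{H}=\{K_{t+1},\dots,K_r\}$ (it does, since the hypotheses only require a finite family of graphs, each of size at least $t'+1$, and every clique $K_s \in \mathcal{H}$ has $s \geq t+1 \geq t'+1$ for the values of $t$ and $t'$ from Equation~\ref{valueoft}); and (ii) ensuring that reinterpreting the vertex-overlap kernel as an edge-overlap instance does not change the underlying graph, which is automatic since Transformation~\ref{FromUniverseToGraph} returns an induced subgraph of $G$, and the edge set is fully determined by the vertex set together with $G$. There is no true obstacle here, since all the heavy lifting was done in establishing Theorem~\ref{pairwisevertexkernel} and Corollary~\ref{equivalent}; the present theorem is essentially a composition of a polynomial parametric reduction (Corollary~\ref{equivalent}) with an existing kernelization (Theorem~\ref{pairwisevertexkernel}), and the size bound $O(r^r k^{r-t'-1})$ follows by direct substitution.
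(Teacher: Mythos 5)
Your proposal is correct and matches the paper's argument exactly: the paper likewise derives this theorem by combining Corollary~\ref{equivalent} (the equivalence between $(k,r,t)$-edge-$K_r$-packings and $(k,r,t')$-$K_r$-packings) with the vertex-overlap kernel of Theorem~\ref{pairwisevertexkernel} applied with overlap parameter $t'$. One small caveat in your item (i): the inequality $t+1\geq t'+1$ fails for $t\in\{0,1\}$ (there $t'=t+1$), but this is harmless since the vertex-overlap kernelization already preprocesses away graphs with at most $t'$ vertices.
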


\section{Conclusions}
We commenced the study on the parameterized algorithmics of packing problems allowing overlap, focusing on kernelization issues. This leads to whole new families of problems, whose problem names contain the constant $t$ and the finite set of objects (mostly graphs) $\mathcal{H}$ from which the constant $r(\mathcal{H})$ can be derived. In addition, we have a natural parameter $k$.

We list here some of the open problems in this area. Our kernel bounds are not known to be tight. In particular, we have shown no lower bound results. 
We refer (in particular) to \cite{DelMar2012,HerWu2012} for such results for disjoint packings in graphs and hypergraphs.
Recently, Giannopoulou et al. introduced the notion of \emph{uniform kernelization}~\cite{GiaJLS2014} for problem families similar to what we considered. This basically raises the question if polynomial kernel sizes could be proven such that the exponent of the kernel bound does not depend (in our case) on $r$ or on $t$. Notice that for $t$-Membership, we somehow came half the way, as we have shown kernel sizes that are uniform with respect to $t$. In particular, $\{K_3\}$-Packing with $t$-Membership has a uniform kernelization. This could be interesting on its own, as only few examples of uniform kernelizations are known. Very recent work by Marx and his colleagues has shown renewed interest in dichotomy results like the one that we presented in this paper. Apart from the natural questions mentioned above, the case of admitting an infinite number of graphs in the family $\mathcal{H}$ is another natural path for future research.
 In both respects, Jansen and Marx~\cite{JanMar2014} have obtained very interesting results on classical graph packing problems.
 Finally, it might be interesting to derive better parameterized algorithms and kernels for concrete packing problems allowing overlaps.

\bibliographystyle{splncs03}
\bibliography{biblio}

\end{document}